\title{Estimators for Multivariate Information Measures in General Probability Spaces}
\author[1]{Arman Rahimzamani\thanks{armanrz@uw.edu}}
\author[1]{Himanshu Asnani\thanks{asnani@uw.edu}}
\author[2]{Pramod Viswanath\thanks{pramodv@illinois.edu}}
\author[1]{Sreeram Kannan\thanks{ksreeram@uw.edu}}
\affil[1]{Department of Electrical and Computer Engineering, University of Washington, Seattle, WA}
\affil[2]{Department of Electrical and Computer Engineering, University of Illinois at Urbana-Champaign, IL}
\newtheorem{theorem}{Theorem}
\newtheorem{assumption}{Assumption}
\newtheorem{lemma}[theorem]{Lemma}
\newtheorem{proposition}{Proposition}[section]
\newtheorem*{remark}{Remark}
\newglossaryentry{pa_xl_var}{
  name={\ensuremath{\text{pa}(X_l)}},
  description={The set of random variables representing the parents of $X_l$},
  sort={L}}
\newglossaryentry{pa_n_xl_var}{
  name={\ensuremath{\text{pa+}(X_l)} },
  description={The set of random variables representing the parents of $X_l$ and itself},
  sort={L}}
\newglossaryentry{pa_xl_val}{
  name={ \ensuremath{x_{\text{pa}(l)}} },
  description={The given value for the set of random variables representing the parents of $X_l$},
  sort={L}}
\newglossaryentry{pa_n_xl_val}{
  name={ \ensuremath{x_{\text{pa+}(l)}} },
  description={The given value for the set of random variables representing the parents of $X_l$ and itself},
  sort={L}}
\newglossaryentry{pa_xl_space}{
  name={ \ensuremath{\mathcal{X}_{\text{pa}(l)}} },
  description={The sample space for a set of random variables representing the parents of $X_l$ and itself},
  sort={L}}
\newglossaryentry{pa_n_xl_space}{
  name={ \ensuremath{\mathcal{X}_{\text{pa+}(l)}} },
  description={The sample space for a set of random variables representing the parents of $X_l$ and itself},
  sort={L}}
\newglossaryentry{graph_dist}{
  name={ \ensuremath{\overline{\mathbb{P}}} },
  description={The distribution over $X$ imposed by the graphical model $G$},
  sort={L}}
\newglossaryentry{graph_distance_measure}{
  name={ \ensuremath{\mathbb{GDM}} },
  description={The distribution over $X$ imposed by the graphical model $G$},
  sort={L}}
\newcommand{\indep}{\rotatebox[origin=c]{90}{$\models$}}
\begin{document}
\maketitle

\begin{abstract}
Information theoretic quantities play an important role in various settings in machine learning, including causality testing, structure inference in graphical models, time-series problems, feature selection as well as in providing privacy guarantees. A key quantity of interest is the mutual information and generalizations thereof, including conditional mutual information, multivariate mutual information, total correlation and directed information. While the aforementioned information quantities are well defined in arbitrary probability spaces, existing estimators add or subtract entropies (we term them $\Sigma H$ methods). These methods work only in purely discrete space or purely continuous case since entropy (or differential entropy) is well defined only in that regime. 

In this paper, we define a general graph divergence measure ($\mathbb{GDM}$),as  a measure of incompatibility between the observed distribution and a given graphical model structure. This generalizes the aforementioned information measures and we construct a novel estimator via a coupling trick that directly estimates these multivariate information measures using the Radon-Nikodym derivative. These estimators are proven to be consistent in a general setting which includes several cases where the existing estimators fail, thus providing the only known estimators for the following settings: (1) the data has some discrete and some continuous valued components (2) some (or all) of the components themselves are discrete-continuous \textit{mixtures} (3) the data is real-valued but does not have a joint density on the entire space, rather is supported on a low-dimensional manifold. We show that our proposed estimators significantly outperform known estimators on synthetic and real datasets. 
\end{abstract}

\section{Introduction}


Information theoretic quantities, such as mutual information and its generalizations, play an important role in various settings in machine learning and statistical estimation and inference. Here we summarize briefly the role of some generalizations of mutual information in learning (cf. Sec.~\ref{sec:math-inf-quantities} for a mathematical definition of these quantities).

\begin{enumerate}[leftmargin=*]
\item {\bf Conditional mutual information} measures the amount of information between two variables $X$ and $Y$ given a third variable $Z$ and is zero iff $X$ is independent of $Y$ given $Z$. CMI finds a wide range of applications in learning including causality testing \cite{dawid1979conditional, zhang2012kernel},   structure inference in graphical models \cite{whittaker2009graphical}, feature selection \cite{fleuret2004fast} as well as in providing privacy guarantees \cite{cuff2016differential}.
\item {\bf Total correlation} measures the degree to which a set of $N$ variables are independent of each other, and appears as a natural metric of interest in several machine learning problems, for example, in independent component analysis, the objective is to maximize the independence of the variables quantified through total correlation \cite{hyvarinen2000independent}. In feature selection, ensuring the independence of selected features is one goal, pursued using total correlation in \cite{vergara2014review,meyer2008information}. 
\item {\bf Multivariate mutual information} measures the amount of information shared between multiple variables \cite{mcgill1954multivariate,chan2015multivariate} and is useful in feature selection \cite{lee2013feature,brown2009new} and clustering \cite{chan2016info}. 
\vspace{-0.05in}\item {\bf Directed information} measures the amount of information between two random processes \cite{watanabe1960information,permuter2011interpretations} and is shown as the correct metric in identifying time-series graphical models \cite{quinn2015directed,sun2015causal,hlavavckova2007causality, amblard2011directed, rahimzamani2016network, rahimzamani2017potential}.
\end{enumerate}

Estimation of these information-theoretic quantities from observed samples is a non-trivial problem that needs to be solved in order to utilize these quantities in the aforementioned applications. While there has been long history in estimation of entropy \cite{kozachenko1987sample, beirlant1997nonparametric, wieczorkowski1999entropy, miller2003new}, and renewed recent interest \cite{lee2010sample,sricharan2013ensemble,han2015adaptive}, much less effort has been spent on the multivariate versions. A standard approach to estimating general information theoretic quantities is to write them out as a sum or difference of entropy (denoted $H$ usually) terms which are then directly estimated; we term such a paradigm as $\Sigma H$ paradigm. However, the $\Sigma H$ paradigm is applicable only when the variables involved are all discrete or there is a joint density on the space of all variables (in which case, differential entropy $h$ can be utilized). The underlying information measures themselves are well defined in very general probability spaces, for example, involving mixtures of discrete and continuous variables; however, no known estimators exist. 

We motivate the requirement of estimators in general probability spaces by some examples in contemporary machine learning and statistical inference. 
\begin{enumerate}[leftmargin=*]
\item It is common place in machine learning to have data-sets where {\bf some variables are discrete, and some are continuous}. For example, in recent work on utilizing information bottleneck to understand deep learning \cite{tishby2015deep}, an important step is to quantify the mutual information between the training samples (which are discrete) and the layer output (which is continuous). The employed methodology was to quantize the continuous variables; this is common practice, even though highly sub-optimal. 
\item Some variables involved in the calculation {\bf may be mixtures of discrete and continuous variables}. For example, the output of ReLU neuron will not have a density even when the input data has a density. Instead, the neuron will have a discrete mass at $0$ (or wherever the ReLU breakpoint is) but will have a continuous distribution on the positive values. This is also the case in gene expression data, where a gene may have a discrete mass at expression $0$ due to an effect called drop-out \cite{liu2016single} but have a continuous distribution elsewhere. 
\item The variables involved may have a joint density {\bf only on a low dimensional manifold}. For example, when calculating mutual information between input and output of a neural network, some of the neurons are deterministic functions of the input variables and hence they will have a joint density supported on a low-dimensional manifold rather than the entire space. 
\end{enumerate}

In the aforementioned cases,  no existing estimators are known to work. It is not merely a matter of having provable guarantees either. When we plug in estimators that assume a joint density into data that does not, the estimated information measure can be strongly negative.

We summarize our main contributions below: 
\begin{enumerate}[leftmargin=*]
\item {\bf General paradigm} (Section \ref{sec:gdm}): We define a general paradigm of graph divergence measures which captures the aforementioned generalizations of mutual information as special cases. Given a directed acyclic graph (DAG) between $n$ variables, the graph divergence is defined as the Kullback-Leibler (KL) divergence between the true data distribution $\mathbb{P}_X$ and a restricted distribution $\gls{graph_dist}_X$ defined on the Bayesian network and can be thought of as a measure of incompatibility with the given graphical model structure. These graph divergence measures are defined using the Radon Nikodym derivatives which are well-defined for general probability spaces. 

\item {\bf Novel estimators} (Section \ref{sec:estimators}): We propose novel estimators for these graph divergence measures, which directly estimate the corresponding Radon-Nikodym derivatives. To the best of our knowledge, these are the first family of estimators that are well defined for general probability spaces (breaking the $\Sigma H$ paradigm). 

\item {\bf Consistency proofs} (Section \ref{sec:consistency}): We prove that the proposed estimators converge to the true value of the corresponding graph divergence measures as the number of observed samples increases in a general setting which includes several cases: (1) the data has some discrete and some continuous valued components (2) some (or all) of the components themselves are discrete-continuous mixtures (3) the data is real-valued but does not have a joint density on the entire space but is supported on a low-dimensional manifold. 

\item {\bf Numerical results} (Section \ref{sec:experiments}): Extensive numerical results demonstrate that (1) existing algorithms have severe failure modes in general probability spaces (strongly negative values, for example), and (2) our proposed estimator achieves consistency as well as significantly better finite-sample performance.
\end{enumerate}

\begin{figure}
  \begin{subfigure}[t]{0.3\textwidth}
  \centering
    \includegraphics[width=0.5\textwidth]{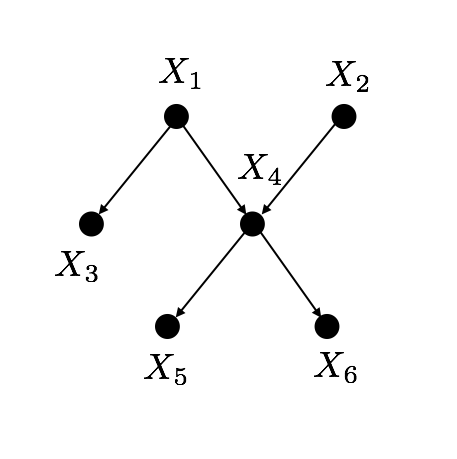}
    \caption{}
    \label{fig:bayesian-graph}
  \end{subfigure}
  \begin{subfigure}[t]{0.3\textwidth}
  \centering
    \includegraphics[width=0.5\textwidth]{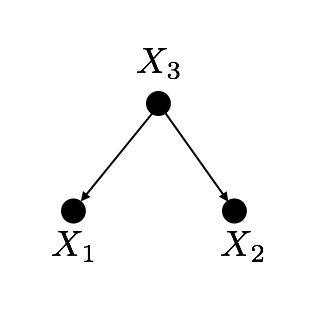}
      \caption{}
    \label{fig:markov}
  \end{subfigure}
  \begin{subfigure}[t]{0.3\textwidth}
  \centering
    \includegraphics[width=0.5\textwidth]{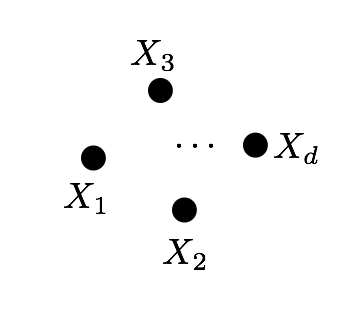}
      \caption{}
    \label{fig:independent}
  \end{subfigure}
  \caption{(a) An example of Bayesian Network $\mathcal{G}$ with $\gls{graph_dist}_{X}$ as the induced distribution $\mathbb{P}_{X_1} \mathbb{P}_{X_2}\mathbb{P}_{X_3|X_1}$ $\mathbb{P}_{X_4|{X_1, X_2}}\mathbb{P}_{X_5|X_4}\mathbb{P}_{X_6|X_4}$. (b) A Bayesian Network $\mathcal{G}$ inducing a Markov chain $\mathbb{P}_{X_3} \mathbb{P}_{X_1|X_3}\mathbb{P}_{X_2|X_3}$. (c) A Bayesian Network $\mathcal{G}$ with $\gls{graph_dist}_{X}$ as the induced distribution $\mathbb{P}_{X_1} \mathbb{P}_{X_2}\cdots \mathbb{P}_{X_d}$. }
\end{figure}

\section{Graph Divergence Measure}
\label{sec:gdm}

In this section, we define the family of graph divergence measures. To begin with, we first define some notational preliminaries. We denote any random variable by an uppercase letter such as $X$. The sample space of the variable $X$ is denoted by $\mathcal{X}$ and any value in $\mathcal{X}$ is denoted by the lowercase letter $x$. For any subset $A \subseteq \mathcal{X}$, the probability of $A$ for a given distribution function $\mathbb{P}_X(.)$ over $X$ is denoted by $\mathbb{P}_X(A)$. We note that the random variable $X$ can be a $d$-dimensional \textit{vector} of random variables, i.e. $X \equiv (X_1, \ldots, X_d)$. The $N$ observed samples drawn from the distribution $\mathbb{P}_X$ are denoted by $x^{(1)}, x^{(2)}, \ldots, x^{(N)}$, i.e. $x^{(i)}$ is the $i^{th}$ observed sample.

Sometimes we might be interested in a subset of components of a random variable, $S \subseteq \{ X_1, \ldots, X_d \}$ instead of the entire vector $X$. Accordingly, the sample space of the variable $S$ is denoted by $\mathcal{S}$. For instance, $X=(X_1, X_2, X_3, X_4)$ and $S=(X_1,X_2)$. Throughout the entire paper, unless otherwise stated, there is a one-to-one correspondence between the notations of $X$ and any $S$. For example for any value $x \in \mathcal{X}$, the  corresponding value in $\mathcal{S}$ is simply denoted by $s$. Further,  $s^{(i)} \in \mathcal{S}$ represents the lower-dimensional sample corresponding to the $i$th observed  sample $x^{(i)} \in \mathcal{X}$. Furthermore, any marginal distribution defined over $S$ with respect to $\mathbb{P}_X$ is denoted by $\mathbb{P}_S$.

Consider a directed acyclic graph (DAG) $\mathcal{G}$ defined over $d$ nodes (corresponding to the $d$ components of the random variable $X$). A probability measure $Q$ over $X$ is said to be compatible with the graph $\mathcal{G}$ if it is a Bayesian network on $\mathcal{G}$. Given a graph $\mathcal{G}$ and a distribution $\mathbb{P}_X$, there is a natural measure $\gls{graph_dist}_X(.)$ which is compatible with the graph and is defined as follows: 
\begin{equation} \gls{graph_dist}_X = \prod_{l=1}^d \mathbb{P}_{ X_l | \gls{pa_xl_var} } \end{equation}
 where $\gls{pa_xl_var} \subset X$ is the set of the \textit{parent} nodes of the random variable $X_l$, with the sample space denoted by $\gls{pa_xl_space}$, and the sample values $\gls{pa_xl_val}$ corresponding to $x$.  The distribution $\mathbb{P}_{ X_l | \gls{pa_xl_var}} $ is the \textit{conditional} distribution of $X_l$ given $\gls{pa_xl_var}$. Throughout the paper, whenever mentioning the variable $X_l$ with its own parents $\gls{pa_xl_var}$ we indicate it by $\gls{pa_n_xl_var}$, i.e. $\gls{pa_n_xl_var} \equiv \Big( X_l,\gls{pa_xl_var} \Big)$. An example is shown in Fig. \ref{fig:bayesian-graph}.

We note the fact that $ \mathbb{P}_{S | X \setminus S}$ is well defined for any subset of variables $S \subset X$. Therefore if we let $S = X \setminus \gls{pa_xl_var}$, then $\mathbb{P}_{X \setminus \gls{pa_xl_var} | \gls{pa_xl_var}}$ is well defined for any $l\in \{ 1, \ldots, d\}$. By marginalizing over $X \setminus \gls{pa_n_xl_var}$ we see that $\mathbb{P}_{ X_l | \gls{pa_xl_var} }$ and thus the distribution $\gls{graph_dist}_X$ is well defined. 

The graph divergence measure is now defined as a function of the probability measure $\mathbb{P}_X$ and the graph $\mathcal{G}$.  In this work we will focus only on the KL Divergence as being the distance metric, hence unless otherwise stated $D(\cdot\parallel\cdot)=D_{KL}(\cdot\parallel\cdot)$. Let's first consider the case where $\mathbb{P}_X$ is absolutely continuous with respect to $\gls{graph_dist}_X$ and hence the \textit{Radon-Nikodym} derivative  $d\mathbb{P}_X/d\gls{graph_dist}_X$ exists. Therefore for a given set of random variables $X$ and a Bayesian Network $\mathcal{G}$, we define \textbf{Graph Divergence Measure ($\gls{graph_distance_measure}$)} as : 
\begin{equation} 
\gls{graph_distance_measure}(X,\mathcal{G}) = D(\mathbb{P}_X \| \gls{graph_dist}_X) = \int_{\mathcal{X}} \log \frac{d\mathbb{P}_X}{d\gls{graph_dist}_X} d\mathbb{P}_X 
\label{eq:measure}\end{equation}
Here we implicitly assume that $\log \big( d\mathbb{P}_X / d\gls{graph_dist}_X \big)$ is measurable and integrable with respect to the measure $\mathbb{P}_X$. The $\gls{graph_distance_measure}$ is set to infinity wherever Radon-Nikodym derivative does not exist. It is clear that $\gls{graph_distance_measure}(X,\mathcal{G})=0$ if and only if the data distribution is compatible with the graphical model, thus the  $\gls{graph_distance_measure}$ can be thought of as a {\em measure of incompatibility with the given graphical model structure}. 


We now have relevant variational characterization as below on our graph divergence measure, which can be harnessed to compute upper and lower bounds (More details in \textbf{supplementary material}):
\begin{proposition}
For a random variable $X$, a DAG $\mathcal{G}$, let $\Pi(\mathcal{G})$ be the set of measures $\mathbb{Q}_X$ defined on the Bayesian Network $\mathcal{G}$, then $\gls{graph_distance_measure}(X,\mathcal{G})=\inf_{\mathbb{Q}_X\in\Pi(\mathcal{G})}D(\mathbb{P}_X \| \mathbb{Q}_X) $. 

Furthermore, let $\mathcal{C}$ denote the set of functions $h:\mathcal{X}\rightarrow \mathbb{R}$ such that $\mathbb{E}_{\mathbb{Q}_X}\left[\exp(h(X))\right]<\infty$. If $\mathbb{GDM}(X,\mathcal{G})<\infty$, then for every $h\in\mathcal{C}$, $\mathbb{E}_{\mathbb{P}_X}\left[h(X)\right]$ exists and:
\begin{align}
\mathbb{GDM}(X,\mathcal{G})=\sup_{h\in\mathcal{C}} \mathbb{E}_{\mathbb{P}_X}\left[h(X)\right] - \log \mathbb{E}_{\mathbb{Q}_X}\left[\exp(h(X))\right].
\end{align}
\end{proposition}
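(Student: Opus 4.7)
The two claims correspond to (i) a reverse information projection of $\mathbb{P}_X$ onto $\Pi(\mathcal{G})$, and (ii) the Donsker--Varadhan variational representation specialized to $D(\mathbb{P}_X\,\|\,\overline{\mathbb{P}}_X)$.

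For part (i), the plan is to fix a topological ordering $X_1,\ldots,X_d$ compatible with $\mathcal{G}$, so that $\text{pa}(X_l)\subseteq\{X_1,\ldots,X_{l-1}\}$ and, for every $\mathbb{Q}_X\in\Pi(\mathcal{G})$, $\mathbb{Q}_{X_l\mid X_1,\ldots,X_{l-1}}=\mathbb{Q}_{X_l\mid\text{pa}(X_l)}$. I would then apply the chain rule for KL divergence to $D(\mathbb{P}_X\,\|\,\mathbb{Q}_X)$ and split each resulting log-derivative as
\[
\log\frac{d\mathbb{P}_{X_l\mid X_1,\ldots,X_{l-1}}}{d\mathbb{Q}_{X_l\mid\text{pa}(X_l)}}=\log\frac{d\mathbb{P}_{X_l\mid X_1,\ldots,X_{l-1}}}{d\mathbb{P}_{X_l\mid\text{pa}(X_l)}}+\log\frac{d\mathbb{P}_{X_l\mid\text{pa}(X_l)}}{d\mathbb{Q}_{X_l\mid\text{pa}(X_l)}}.
\]
Taking $\mathbb{P}_X$-expectations, the first group of terms telescopes to $D(\mathbb{P}_X\,\|\,\overline{\mathbb{P}}_X)=\mathbb{GDM}(X,\mathcal{G})$, which is independent of $\mathbb{Q}$, while the second collects into $\sum_{l}\mathbb{E}_{\mathbb{P}_{\text{pa}(X_l)}}[D(\mathbb{P}_{X_l\mid\text{pa}(X_l)}\,\|\,\mathbb{Q}_{X_l\mid\text{pa}(X_l)})]\geq 0$, vanishing iff $\mathbb{Q}_X=\overline{\mathbb{P}}_X$. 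This Pythagorean identity immediately yields the infimum characterization with the projection realized by $\overline{\mathbb{P}}_X\in\Pi(\mathcal{G})$.

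For part (ii), the plan is a two-sided Donsker--Varadhan argument against $\overline{\mathbb{P}}_X$. For the $\leq$ direction, given any $h\in\mathcal{C}$, I would introduce the tilted probability $d\mathbb{Q}^h_X:=e^{h(X)}\,d\overline{\mathbb{P}}_X/Z_h$ with $Z_h:=\mathbb{E}_{\overline{\mathbb{P}}_X}[e^{h(X)}]<\infty$; the algebraic identity $D(\mathbb{P}_X\,\|\,\mathbb{Q}^h_X)=\mathbb{GDM}(X,\mathcal{G})-\mathbb{E}_{\mathbb{P}_X}[h]+\log Z_h$, combined with $D\geq 0$, then gives the bound. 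The existence of $\mathbb{E}_{\mathbb{P}_X}[h]$ for each $h\in\mathcal{C}$ follows from the Fenchel-type inequality $xh\leq x\log x-x+e^{h}$ applied with $x=d\mathbb{P}_X/d\overline{\mathbb{P}}_X$, using $\mathbb{GDM}(X,\mathcal{G})<\infty$ and $Z_h<\infty$. For the matching $\geq$ direction, the formal equality-achieving choice is $h^{\star}:=\log(d\mathbb{P}_X/d\overline{\mathbb{P}}_X)$, but $h^{\star}$ need not lie in $\mathcal{C}$. The main obstacle is thus controlling this approximation: my plan is to truncate $h^{\star}_n:=\max(\min(h^{\star},n),-n)\in\mathcal{C}$ (since $e^{h^{\star}_n}\leq e^n$ is $\overline{\mathbb{P}}_X$-integrable), and pass to the limit via dominated and monotone convergence, with the hypothesis $\mathbb{GDM}<\infty$ supplying the integrability of $h^{\star}$ under $\mathbb{P}_X$ that is needed to recover $\mathbb{GDM}(X,\mathcal{G})$ as the supremum.
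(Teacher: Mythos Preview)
Your proposal is correct and follows essentially the same route as the paper: for part (i) the paper states precisely the Pythagorean identity
\[
D(\mathbb{P}_X \,\|\, \mathbb{Q}_X) = D(\mathbb{P}_X \,\|\, \overline{\mathbb{P}}_X) + \sum_{l=1}^d D\!\left(\mathbb{P}_{X_l\mid \text{pa}(X_l)} \,\|\, \mathbb{Q}_{X_l\mid \text{pa}(X_l)} \,\middle|\, \mathbb{P}_{X_1^{l-1}}\right),
\]
which is exactly what your chain-rule-plus-split argument derives, and for part (ii) the paper simply invokes the standard Donsker--Varadhan characterization by reference, whereas you spell out the tilting/truncation details. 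One small wording note: the first group of terms does not ``telescope'' in the usual sense; rather, it is recognized directly as $D(\mathbb{P}_X\,\|\,\overline{\mathbb{P}}_X)$ by reapplying the same chain rule with $\mathbb{Q}_{X_l\mid\text{pa}(X_l)}=\mathbb{P}_{X_l\mid\text{pa}(X_l)}$.
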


\vspace{-.15in}
\subsection{Special cases}
\label{sec:math-inf-quantities}

For specific choices of $X$ and Bayesian Network, $\mathcal{G}$, the Equation \ref{eq:measure} is reduced to the well-known information measures. Some examples of these measures are as follows:\\
\textbf{Mutual Information (MI):} $X=(X_1, X_2)$ and $\mathcal{G}$ has no directed edge between $X_1$ and $X_2$. Thus $\gls{graph_dist}_X = \mathbb{P}_{X_1}.\mathbb{P}_{X_2}$, and we get, $\gls{graph_distance_measure}(X,\mathcal{G}) = I(X_1;X_2)  = D(\mathbb{P}_{X_1X_2} \| \mathbb{P}_{X_1}\mathbb{P}_{X_2})$. 

\textbf{Conditional Mutual Information (CMI):} We recover the conditional mutual information of $X_1$ and $X_2$ given $X_3$ by constraining $\mathcal{G}$ to be the one in Fig. \ref{fig:markov}, since $\gls{graph_dist}_X = \mathbb{P}_{X_3}.\mathbb{P}_{X_2|X_3}.\mathbb{P}_{X_1|X_3}$, i.e., $\gls{graph_distance_measure}(X,\mathcal{G}) = I(X_1;X_2|X_3)  = D(\mathbb{P}_{X_1X_2X_3} \| \mathbb{P}_{X_1|X_3}\mathbb{P}_{X_2|X_3}\mathbb{P}_{X_3})$.\\

\textbf{Total Correlation (TC):} When $X=(X_1, \cdots, X_d)$, and $\mathcal{G}$ is the graph with no edges (as in Fig. \ref{fig:independent}, we recover the total correlation of the random variables $X_1,\ldots,X_d$ since $\gls{graph_dist}_X = \mathbb{P}_{X_1}\ldots \mathbb{P}_{X_d}$, i.e., $\gls{graph_distance_measure}(X,\mathcal{G}_{dc}) =TC(X_1,\ldots,X_d)  = D(\mathbb{P}_{X_1 \ldots X_d} \| \mathbb{P}_{X_1}\ldots \mathbb{P}_{X_d})$
\\

\textbf{Multivariate Mutual Information (MMI) :} While the MMI defined by \cite{mcgill1954multivariate} is not positive in general,there is a different definition by \cite{chan2015multivariate} which is both non-negative and has an operational interpretation. Since MMI can be defined as the optimal total correlation after clustering, we can utilize our definition to define MMI (cf. \textbf{supplementary material}).



\textbf{Directed Information :} Suppose there are two stationary random processes $X$ and $Y$, the directed information rate from $X$ to $Y$ as first introduced by Massey \cite{massey1990directedinformation} is defined as:
$$ I( X \rightarrow Y ) = \frac{1}{T}\sum_{t=1}^T I \left( X^t ; Y_t \middle| Y^{t-1} \right)$$ \\
It can be seen that the directed information can be written as:
$$ I( X \rightarrow Y ) = \gls{graph_distance_measure}\Big( (X^T,Y^T), \mathcal{G}_I \Big) - \gls{graph_distance_measure}\Big( (X^T,Y^T), \mathcal{G}_C \Big) $$ \\
where the graphical model $\mathcal{G}_I$ correponds to the \textit{independent} distribution between $X^T$ and $Y^T$, and $\mathcal{G}_C$ corresponds to the \textit{causal} distribution from $X$ to $Y$ (more details provided in \textbf{supplementary material}).



\section{Estimators}
\label{sec:estimators}

\subsection{Prior Art}

Estimators for entropy date back to Shannon, who guesstimated the entropy rate of English \cite{shannon1951prediction}. Discrete entropy estimation is a well-studied topic and minimax rate of this problem is well-understood as a function of the alphabet size \cite{paninski2003estimation,jiao2015minimax,wu2016minimax}. The estimation of differential entropy is considerably  harder and also studied extensively in literature \cite{beirlant1997nonparametric, nemenman2002entropy, miller2003new, lee2010sample, lesniewicz2014expected, sricharan2012estimation, singh2014exponential} and can be broadly divided into two groups; based on either Kernel density estimates  \cite{kandasamy2015nonparametric,gao2016breaking} or based on k-nearest-neighbor estimation \cite{sricharan2013ensemble,jiao2017nearest, pal2010estimation}. In a remarkable work, Kozachenko and Leonenko suggested a nearest neighbor method for entropy estimation \cite{kozachenko1987sample} which was then generalized to a $k$th nearest neighbor approach \cite{singh2003nearest}. In this method, the distance to the $k$th nearest neighbor (KNN) is measured for each data-point, and based on this the probability density around each data point is estimated and substituted into the entropy expression. When $k$ is fixed, each density estimate is noisy and the estimator accrues a bias and a remarkable result is that the bias is distribution-independent and can be subtracted out \cite{singh2016finite}. 

While the entropy estimation problem is well-studied, mutual information and its generalizations are typically estimated using a sum of signed entropy ($H$) terms, which are estimated first; we term such estimators as $\Sigma H$ estimators. In the discrete alphabet case, this principle has been shown to be worst-case optimal \cite{han2015adaptive}. In the case of distributions with a joint density, an estimator that breaks the $\Sigma H$ principle is the KSG estimator \cite{kraskov2004estimating}, which builds on the KNN estimation paradigm but couples the estimates in order to reduce the bias. This estimator is widely used and has excellent practical performance. The original paper did not have any consistency guarantees and its convergence rates were recently established \cite{pramod_demystify_2018}.  There have been some extensions to the KSG estimator for other information measures such as conditional mutual information \cite{runge2017conditional, frenzel2007partial}, directed information \cite{vejmelka2008inferring} but none of them show theoretical guarantees on consistency of the estimators, furthermore they fail completely in mixture distributions.

When the data distribution is neither discrete nor admits a joint density, the $\Sigma H$ approach is no longer feasible as the individual entropy terms are not well defined. This is the regime of interest in our paper. Recently, Gao et al  \cite{gao2017mixture} proposed a mutual-information estimator based on KNN principle, which can handle such continuous-discrete mixture cases, and the consistency was demonstrated. However it is not clear how it generalizes to even Conditional Mutual Information (CMI) estimation, let alone other generalizations of mutual information. In this paper, we build on that estimator in order to design an estimator for general graph divergence measures and establish its consistency for generic probability spaces. 

\subsection{Proposed Estimator}

The proposed estimator is given in Algorithm \ref{algm:gdm} where $\psi(\cdot)$ is the digamma function and $\mathbf{1}_{\{\cdot\}}$ is the indicator function. The process is schematically shown in Fig. \ref{fig:estimator_schematic} (cf. \textbf{supplementary material}). We used the $\ell_\infty$-norm everywhere in our algorithm and proofs. 
\begin{algorithm}[htbp]
\caption{Estimating \textbf{Graph Divergence Measure} $\gls{graph_distance_measure}(X, \mathcal{G})$}
\label{algm:gdm}.

\begin{algorithmic}[1]
\Require{\textbf{Parameter:} $k\in\mathbf{Z}^+$, \textbf{Samples:} $x^{(1)}, x^{(2)}, \ldots, x^{(N)}$, \textbf{Bayesian Network:} $\mathcal{G}$ on \textbf{Variables:} ${\cal X}=(X_1,X_2,\cdots,X_d)$}
\Ensure{$\widehat{\gls{graph_distance_measure}}^{(N)}(X,\mathcal{G})$}

 \For{\texttt{$i=1$ to $N$}}
 \State \textbf{Query:} 
 \State $\rho_{k,i}$ = $\ell_\infty$-distance to the $k$th nearest neighbor of $x^{(i)}$ in the space $\mathcal{X}$
\State \textbf{Inquire:}
\State  $\tilde{k}_i$ = \# points within the $\rho_{k,i}$-neighborhood of $x^{(i)}$ in the space $\mathcal{X}$
\State $n_{\gls{pa_xl_var}}^{(i)}$ = \# points within the $\rho_{k,i}$-neighborhood of $x^{(i)}$ in the space $\gls{pa_xl_space}$
\State $n_{\gls{pa_n_xl_var}}^{(i)}$ = \# points within the $\rho_{k,i}$-neighborhood of $x^{(i)}$ in the space $\gls{pa_n_xl_space}$
\State \textbf{Compute:}
\State $\zeta_i = \psi(\tilde{k}_i) + \sum_{l=1}^d \left(  \mathbf{1}_{\{  \gls{pa_xl_var} \neq \emptyset \}} \log \left( n_{\gls{pa_xl_var}}^{(i)} +1 \right) - \log \left( n_{\gls{pa_n_xl_var}}^{(i)} +1 \right) \right)$
\EndFor
\State Final Estimator: 
\begin{equation}\label{eq:estimator}
\widehat{\gls{graph_distance_measure}}^{(N)}(X,\mathcal{G}) = \frac{1}{N} \sum_{i=1}^N \zeta_i + \left( \sum_{l=1}^d \mathbf{1}_{\{ \gls{pa_xl_var} =\emptyset \}}-1 \right) \log N
\end{equation}
\end{algorithmic}
\end{algorithm}

%

The estimator intuitively estimates the $\gls{graph_distance_measure}$ by the \textit{resubstitution estimate} $\frac{1}{N} \sum_{i=1}^N \log \hat{f}(x^{(i)})$ in which $\hat{f}(x^{(i)})$ is the estimation of Radon-Nikodym derivative at each sample $x^{(i)}$. If $x^{(i)}$ lies in a region where there is a density, the RN derivative is equal to $g_X(x^{(i)})/\bar{g}_X(x^{(i)})$ in which $g_X(.)$ and $\bar{g}_X(.)$ are density functions corresponding to $\mathbb{P}_X$ and $\gls{graph_dist}_X$ respectively. On the other hand, if  $x^{(i)}$ lies on a point where there is a discrete mass, the RN derivative will be equal to $h_X(x^{(i)})/\bar{h}_X(x^{(i)})$ in which $h_X(.)$ and $\bar{h}_X(.)$ are mass functions corresponding to $\mathbb{P}_X$ and $\gls{graph_dist}_X$ respectively.

The density function $\bar{g}_X(x^{(i)})$ can be written as $\prod_{l=1}^d \left( g_{\gls{pa_n_xl_var}}(\gls{pa_n_xl_val}^{(i)}) / g_{\gls{pa_xl_var}}(\gls{pa_xl_val}^{(i)}) \right)$ for continuous components. Equivalently, the mass function $\bar{h}_X(x^{(i)})$ can be written as $\prod_{l=1}^d \left( h_{\gls{pa_n_xl_var}}(\gls{pa_n_xl_val}^{(i)}) / h_{\gls{pa_xl_var}}(\gls{pa_xl_val}^{(i)}) \right)$. Thus we need to estimate the density functions $g(.)$ and the mass functions $h(.)$ according to the type of $x^{(i)}$. The existing $k$th nearest neighbor algorithms will suffer while estimating the mass functions $h(.)$, since $\rho_{n_S,i}$  (the distance to the $n_S$-th nearest neighbor in subspace $\mathcal{S}$) for such points will be equal to zero for large $N$. Our algorithm, however, is designed in a way that it's capable of approximating both $g(.)$ functions as $\approx \frac{n_S}{N} \frac{1}{(\rho_{n_S,i})^{d_S}}$ and $h(.)$ functions as $\approx \frac{n_S}{N}$ dynamically for any subset $S \subseteq X$. This is achieved by setting $\rho_{n_S,i}$ terms such that all of them cancel out, yielding the estimator as in Eq. (\ref{eq:estimator}).

\section{Proof of Consistency}\label{sec:consistency}

The proof of consistency for our estimator consists of two steps: First we prove that the expected value of the estimator in Eq. (\ref{eq:estimator}) converges to the true value as $N \rightarrow \infty$ , and second we prove that the variance of the estimator converges to zero as $N \rightarrow \infty$. 
Let's begin with the definition of $P_X(x,r)$:
\begin{equation} P_X(x,r) = \mathbb{P}_X \big\{ a \in \mathcal{X} : \| a- x \|_{\infty} \leq r \big\} = \mathbb{P}_X \Big\{ B_r(x) \Big\} \label{eq:P_xr}\end{equation}
Thus $P_X(x,r)$ is the probability of a hypercube with the edge length of $2r$ centered at the point $x$. We then state the following assmuptions:

\begin{assumption}\label{assumptions} 
We make the following assumptions to prove the consistency of our method:

\begin{enumerate}
\item $k$ is set such that $\lim_{N \rightarrow \infty} k = \infty$ and $\lim_{N \rightarrow \infty} \frac{k \log N}{N}=0$. \label{assumption_kn}
\item The set of discrete points $\{ x : P_X (x, 0) > 0 \}$ is finite. \label{assumption_finitediscrete}
\item $\int_{\mathcal{X}} \big| \log f(x) \big| d\mathbb{P}_X < +\infty$, where $f \equiv d\mathbb{P}_X/d\gls{graph_dist}_X$ is Radon-Nikodym derivative. \label{assumption_logintegrable}
\end{enumerate}

\end{assumption} 

The Assumption \ref{assumptions}.\ref{assumption_kn} with \ref{assumptions}.\ref{assumption_finitediscrete} controls the boundary effect between the continuous and the discrete regions; with this assumption we make sure that all the $k$ nearest neighbors of each point belong to the same region almost surely (i.e. all of them are either continuous or discrete).  
Assumption \ref{assumptions}.\ref{assumption_logintegrable} is the log-integrability of the Radon-Nikodym derivative. These assumptions are satisfied under mild technical conditions whenever the distribution $\mathbb{P}_X$ over the set $\mathcal{X}$ is (1) finitely discrete; (2) continuous; (3) finitely discrete over some dimensions and continuous over some others; (4) a mixture of the previous cases; (5) has a joint density supported over a lower dimensional manifold. These cases represent almost all the real world data.

As an example of a case not conforming to the aforementioned cases, we can consider singular distributions, among which the \textit{Cantor distribution} is a significant example whose cumulative distribution function is the Cantor function. This distribution has neither a probability density function nor a probability mass function, although its cumulative distribution function is a continuous function. It is thus neither a discrete nor an absolutely continuous probability distribution, nor is it a mixture of these.

The Theorem \ref{thm:mean_convergence} formally states the mean-convergence of the estimator while Theorem \ref{thm:variance_convergence} formally states that convergence of the variance to zero.
\begin{theorem}\label{thm:mean_convergence}
Under the Assumptions \ref{assumptions}, we have $\lim_{N \rightarrow \infty} \mathbb{E} \left[ \widehat{\gls{graph_distance_measure}}^{(N)}(X,\mathcal{G}) \right] = \gls{graph_distance_measure}(X,\mathcal{G})$.
\end{theorem}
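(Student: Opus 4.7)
The plan is to show that the algorithm's per-sample statistic $\zeta_i + \bigl(\sum_l \mathbf{1}_{\{\text{pa}(X_l)=\emptyset\}} - 1\bigr)\log N$ is, in expectation, a consistent pointwise estimate of $\log f(x^{(i)})$ where $f = d\mathbb{P}_X/d\overline{\mathbb{P}}_X$; the sample average is then a consistent estimator of $\mathbb{E}_{\mathbb{P}_X}[\log f(X)] = \mathbb{GDM}(X,\mathcal{G})$. First I would split each $\mathbb{E}[\zeta_i]$ according to whether $x^{(i)}$ lies at a discrete atom (a point with $P_X(x^{(i)},0) > 0$) or in a ``continuous'' region where $P_X(x^{(i)},0) = 0$. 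Assumption~\ref{assumptions}.\ref{assumption_finitediscrete} makes this split clean---there are only finitely many atoms, each with positive mass---and Assumption~\ref{assumptions}.\ref{assumption_kn} guarantees that, with probability tending to one, all $k$ nearest neighbors of $x^{(i)}$ lie in the same region as $x^{(i)}$, eliminating boundary pathologies.

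In the continuous case one has $\rho_{k,i}\to 0$, $\tilde{k}_i = k$ eventually, and the standard $k$-NN density-estimation heuristic $n_S^{(i)}/(N(2\rho_{k,i})^{d_S}) \to g_S(s^{(i)})$ holds for every marginal density $g_S$ encountered in the algorithm. Taking logarithms, each $\log(n_S^{(i)}+1)$ contributes $\log N + d_S\log(2\rho_{k,i}) + \log g_S(s^{(i)})$ plus vanishing error. The key algebraic identity is the graphical-model dimension relation $\sum_l d_{\text{pa+}(l)} - \sum_{l:\text{pa}(l)\neq\emptyset} d_{\text{pa}(l)} = d$, which follows because each variable appears once more in the ``pa+'' sum than in the ``pa'' sum. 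This makes the $d_S\log(2\rho_{k,i})$ contributions assemble into exactly $-d\log(2\rho_{k,i})$, which combines with $\psi(\tilde{k}_i)=\psi(k)$ to reproduce the Kozachenko--Leonenko estimate of $\log g_X(x^{(i)})$; the residual $\log N$ terms are neutralised by the explicit $(\sum_l \mathbf{1}_{\{\text{pa}(X_l)=\emptyset\}}-1)\log N$ correction, leaving $\log g_X(x^{(i)}) - \log \bar{g}_X(x^{(i)}) = \log f(x^{(i)})$. In the discrete case $\rho_{k,i}=0$ almost surely for large $N$, so $\tilde{k}_i$ and the $n_S^{(i)}$ reduce to counts of training samples that coincide with $x^{(i)}$ on the appropriate coordinates. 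The law of large numbers yields $\tilde{k}_i/N\to h_X(x^{(i)})$ and $n_S^{(i)}/N\to h_S(s^{(i)})$, and since $\psi(m)=\log m + O(1/m)$ the same $\log N$ bookkeeping collapses the estimator at $x^{(i)}$ to $\log h_X(x^{(i)}) - \log \bar{h}_X(x^{(i)}) = \log f(x^{(i)})$.

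To lift pointwise convergence to convergence of the expected sample average, I would invoke a dominated-convergence argument in which Assumption~\ref{assumptions}.\ref{assumption_logintegrable} provides the integrable envelope. The main obstacle will be establishing uniform integrability: one has to control tail contributions from samples whose $k$-NN radius is atypically small or atypically large in the continuous region, as well as from samples that fall near the boundary between discrete and continuous parts of $\mathcal{X}$. The boundary contribution can be bounded by isolating a shrinking ball around each atom and applying Assumption~\ref{assumptions}.\ref{assumption_kn} so that the probability that a neighbor ``straddles'' the boundary vanishes fast enough to kill its logarithmic contribution. For the density-region tails I would derive sub-exponential bounds on $|\log(2\rho_{k,i})|$ and on $\log(n_S^{(i)}+1)$ via standard binomial/Chernoff concentration, using $|\log f|$ as the dominating function. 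Combining these ingredients yields $\lim_{N\to\infty}\mathbb{E}\bigl[\widehat{\mathbb{GDM}}^{(N)}(X,\mathcal{G})\bigr] = \mathbb{GDM}(X,\mathcal{G})$.
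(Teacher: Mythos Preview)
Your high-level decomposition into atoms versus the continuous region matches the paper's split into $\Omega_2=\{f>0,\ P_X(x,0)>0\}$ and $\Omega_3=\{f>0,\ P_X(x,0)=0\}$, and your treatment of the discrete case is essentially what the paper does. However, your argument for the continuous region has a genuine gap.

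You write that in the continuous region ``the standard $k$-NN density-estimation heuristic $n_S^{(i)}/(N(2\rho_{k,i})^{d_S}) \to g_S(s^{(i)})$ holds for every marginal density $g_S$,'' and then use the dimension identity $\sum_l d_{\text{pa+}(l)} - \sum_{l:\text{pa}(l)\neq\emptyset} d_{\text{pa}(l)} = d$ to cancel the $\log(2\rho_{k,i})$ terms. But Assumptions~\ref{assumptions} do \emph{not} guarantee that any Lebesgue density $g_S$ exists: one of the cases the theorem is meant to cover is a distribution supported on a lower-dimensional manifold, where $P_S(s,r)$ does not scale like $r^{d_S}$ and $g_S$ is undefined. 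Your cancellation argument collapses there. The paper sidesteps this entirely by never invoking Lebesgue densities: its Lemma~\ref{lemma:RN_convergence} shows, via Lebesgue--Besicovitch differentiation with respect to $\overline{\mathbb{P}}_X$ (not Lebesgue measure), that
\[
f(x)=\lim_{r\to 0} P_X(x,r)\prod_{l=1}^d \frac{P_{\text{pa}(X_l)}(x_{\text{pa}(l)},r)}{P_{\text{pa+}(X_l)}(x_{\text{pa+}(l)},r)}
\]
for $\overline{\mathbb{P}}_X$-a.e.\ $x$. The estimator is then compared term-by-term to $\log P_X(x,r)$ and $\log P_S(s,r)$ directly; the volume factors that would be $r^{d_S}$ in your approach never appear separately, so no dimension bookkeeping is needed and the manifold case goes through.

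A second issue is your plan to pass from pointwise convergence to convergence of expectations via dominated convergence with envelope $|\log f|$. The per-sample statistic satisfies only $|\zeta_i|\le(2d+1)\log N$, which is not dominated by $|\log f(x^{(i)})|$ uniformly in $N$; there is no evident integrable majorant. The paper handles this with Egorov's theorem: on $\Omega_3$ one extracts a set $E_N$ with $\mathbb{P}_X(E_N)<\epsilon_N$ and $\int_{E_N}|\log f|\,d\mathbb{P}_X<\epsilon_N$ on whose complement the convergence of the ball-probability ratio to $f$ is uniform, and on $E_N$ itself one uses the crude bound $|\zeta_1|\le(2d+1)\log N$ together with $\epsilon_N\log N\to 0$. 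This is the mechanism that replaces your dominated-convergence step, and something of this flavour is needed.
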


\begin{theorem}\label{thm:variance_convergence}
In addition to the Assumptions \ref{assumptions}, assume that we have $( k_N \log N )^2 / N \rightarrow 0 $ as $N$ goes to infinity. Then we have $\lim_{N \rightarrow \infty} \text{Var} \left[ \widehat{\gls{graph_distance_measure}}^{(N)}(X,\mathcal{G}) \right] = 0$.
\end{theorem}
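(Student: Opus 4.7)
The plan is to control the variance of the random part of the estimator via a bounded-differences (Efron--Stein) argument. The additive correction $\left(\sum_{l=1}^d \mathbf{1}_{\{\text{pa}(X_l)=\emptyset\}}-1\right)\log N$ in Eq.~(\ref{eq:estimator}) is deterministic and contributes zero to the variance, so it suffices to bound the variance of $S_N := \tfrac{1}{N}\sum_{i=1}^N \zeta_i$. Treating the samples $x^{(1)},\dots,x^{(N)}$ as i.i.d.\ draws from $\mathbb{P}_X$ and letting $S_N^{(j)}$ be the statistic obtained by replacing $x^{(j)}$ with an independent copy, Efron--Stein gives $\mathrm{Var}(S_N)\le \tfrac{1}{2}\sum_{j=1}^N \mathbb{E}\bigl[(S_N-S_N^{(j)})^2\bigr]$, and the task reduces to showing each summand is $O\!\bigl(k_N^2(\log N)^2/N^2\bigr)$.

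Two ingredients drive the estimate. First, a uniform bound $|\zeta_i|\le C d \log N$: the quantities $\tilde{k}_i$ and each of $n^{(i)}_{\text{pa}(X_l)}$, $n^{(i)}_{\text{pa}+(X_l)}$ lie in $[0,N]$, while $\psi(m)$ and $\log(m+1)$ are $O(\log N)$ on that range. Second, a reverse-$k$-nearest-neighbor lemma in $\ell_\infty$: any fixed point of $\mathcal{X}$ can lie inside the $\rho_{k,i}$-ball of at most $c_d\, k$ distinct indices $i$, with $c_d$ depending only on the ambient dimension (the sharp constant in $\ell_\infty$ comes from the standard orthant-packing argument and is at most $3^d$). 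Consequently, swapping $x^{(j)}$ for an independent copy can alter $\zeta_j$ itself plus at most $2 c_d k$ other $\zeta_i$'s (those whose $\rho_{k,i}$-ball contains either the old or the new copy of $x^{(j)}$), and each affected term changes by at most $2Cd\log N$. Summing, $|S_N-S_N^{(j)}|\le c'_d \, k_N\log N/N$ almost surely.

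Plugging this into Efron--Stein yields
\begin{equation*}
\mathrm{Var}(S_N)\;\le\;\tfrac{1}{2}\cdot N\cdot \left(\frac{c'_d\, k_N\log N}{N}\right)^{\!2} \;=\; O\!\left(\frac{(k_N \log N)^2}{N}\right),
\end{equation*}
which tends to zero under the strengthened hypothesis $(k_N \log N)^2/N\to 0$.

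The main obstacle is that the reverse-$k$-NN geometric bound is cleanest on a single "regime" of $\mathbb{P}_X$ (either purely continuous, or a single atom) whereas our distribution may be a mixture and ties occur with positive probability on the discrete part. Assumption~\ref{assumptions}.\ref{assumption_finitediscrete} (finitely many atoms) together with Assumption~\ref{assumptions}.\ref{assumption_kn} ($k_N\to\infty$ with $k_N/N\to 0$) ensures that, with probability tending to one, the $k$-NN neighborhood of every query point falls entirely inside one regime, so the orthant-packing bound applies after a routine tie-breaking convention on points that coincide exactly. On the low-probability event where this fails, the worst-case bound $|\zeta_i|\le C d \log N$ still gives $|S_N-S_N^{(j)}|\le 2Cd\log N$, and a short computation shows this rare event's contribution is negligible. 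This bookkeeping around the mixed-regime boundary is the only non-mechanical step; the rest is a direct Efron--Stein calculation.
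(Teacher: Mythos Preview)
Your overall strategy --- Efron--Stein plus a reverse-$k$-NN counting lemma --- is exactly what the paper does, and your uniform bound $|\zeta_i|\le C d\log N$ is correct. But there is a genuine gap in the step where you assert that replacing $x^{(j)}$ alters at most $2c_d k$ of the other $\zeta_i$.

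Recall that $\zeta_i$ depends not only on $\tilde k_i$ (a count in the full space $\mathcal X$) but also on the subspace counts $n^{(i)}_{\text{pa}(X_l)}$ and $n^{(i)}_{\text{pa+}(X_l)}$, each of which is the number of samples whose \emph{projection} onto the corresponding lower-dimensional space lies within $\rho_{k,i}$ of the projection of $x^{(i)}$. The projection of $x^{(j)}$ onto such a subspace $\mathcal S$ can fall inside the projected ball around $s^{(i)}$ even when $\|x^{(i)}-x^{(j)}\|_\infty>\rho_{k,i}$. Hence $\zeta_i$ may change for many more indices $i$ than those whose full-space $\rho_{k,i}$-ball contains $x^{(j)}$; the reverse-$k$-NN orthant-packing bound simply does not control this set. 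This already fails in the purely continuous regime, so it is separate from the mixed-regime tie-breaking issue you flagged.

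The paper's fix is to split the indices $i\neq j$ further: when $\|x^{(i)}-x^{(j)}\|_\infty\le\rho_{k,i}$ the reverse-$k$-NN bound gives at most $\gamma_d k$ such indices and you use the crude $O(\log N)$ bound per term, as you do. When $\|x^{(i)}-x^{(j)}\|_\infty>\rho_{k,i}$, the radius $\rho_{k,i}$ is unchanged, each affected subspace count drops by at most one, and the change in $\zeta_i$ is therefore at most $\sum_S 1/n^{(i)}_S$. One then needs a \emph{second} geometric lemma (part~2 of Lemma~C.1 in \cite{gao2017mixture}) to the effect that $\sum_i \mathbf 1\{\|s^{(i)}-s^{(j)}\|\le\rho_{k,i}\}\,/\,n^{(i)}_S \le \gamma_{d_S}\log(N+1)$ in each subspace $\mathcal S$. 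Summing over the $2d$ subspaces gives a total contribution of order $d\,\gamma_{d}\log N$, which combines with the $O(k\log N)$ from the first case to yield the same final rate. Without this second lemma your bounded-differences bound does not close.
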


The Theorems \ref{thm:mean_convergence} and \ref{thm:variance_convergence} combined yield the consistency of the estimator \ref{eq:estimator}. 

The proof of the Theorem \ref{thm:mean_convergence} starts with writing the Radon-Nikodym derivative explicitly. Then we need to upper-bound the term 
$\big| \mathbb{E} \big[ \widehat{\gls{graph_distance_measure}}^{(N)}(X,\mathcal{G}) \big] - \gls{graph_distance_measure}(X,\mathcal{G}) \big|$. To achieve this goal, we segregate the domain of $\mathcal{X}$ into three parts as $\mathcal{X} = \Omega_1 \cup \Omega_2 \cup \Omega_3 $ where  $\Omega_1 = \{x: f(x)=0 \}$,  $\Omega_2 = \{x: f(x)>0, P_X(x,0)>0 \}$ and $\Omega_3 = \{x: f(x)>0, P_X(x,0)=0 \}$.
We will show that $\mathbb{P}_X(\Omega_1)=0$. The sets $\Omega_2$ and $\Omega_3$ correspond to the discrete and continuous regions respectively. Then for each of the two regions, we introduce an upperbound which goes to zero as $N$ grows boundlessly. Thus equivalently we show the mean of the estimate $\zeta_1$ is close to $\log f(x)$ for any $x$.

The proof of the Theorem \ref{thm:variance_convergence} is based on the Efron-Stein inequality, which upperbounds any estimator for any quantity from the observed samples $x^{(1)}, \ldots, x^{(N)}$. For any sample $x^{(i)}$, we then upperbound the term $\left| \zeta_i(X) - \zeta_i(X_{\setminus j}) \right|$ by segregating the samples into various cases, and examining each case separately. $\zeta_i(X)$ is the estimate using all the samples $x^{(1)}, \ldots, x^{(N)}$ and $\zeta_i(X_{\setminus j})$ is the estimate when the $j$th sample is removed. Summing up over all the $i$'s, we obtain an upper-bound which will converge to $0$ as $N$ goes to infinity.

\section{Empirical Results}\label{sec:experiments}

In this section, we evaluate the performance of our proposed estimator in comparison with other estimators via numerical experiments. The estimators evaluated here are our estimator referred to as \textit{GDM}, the plain KSG-based estimators for continuous distributions to which we refer by \textit{KSG}, the \textit{binning} estimators and the noise-induced $\Sigma H$ estimators. A more detailed discussion can be found in Section~\ref{sec:exp_details}.

\textbf{Experiment 1: Markov chain model with continuous-discrete mixture.}
For the first experiment, we simulated an $X$-$Z$-$Y$ Markov chain model in which the random variable $X$ is a uniform random variable $\mathcal{U}(0,1)$ clipped at a threshold $0 <\alpha_1 < 1$ from above. Then $Z = \min \left( X, \alpha_2 \right)$ and $Y = \min \left( Z, \alpha_3 \right)$ in which $ 0 < \alpha_3 <\alpha_2 <\alpha_1$.
We simulated this system for various numbers of samples, setting $\alpha_1=0.9$, $\alpha_2=0.8$ and $\alpha_3=0.7$. For each set of samples we estimated $I(X;Y|Z)$ via different methods. The theory value for $I(X;Y|Z)$ is $0$. The results are shown in Figure \ref{fig:exp1_cmi_vs_n}. We can see that in this regime, only the GDM estimator can correctly converge. The KSG estimator and the $\Sigma H$ estimator show high negative biases and the binning estimator shows a positive bias.

\textbf{Experiment 2: Mixture of AWGN and BSC channels with variable error probability.}
For the second scheme of our experiments, we considered an Additive White Gaussian Noise (AWGN) Channel in parallel with a Binary Symmetric Channel (BSC) where only one of the two can be activated at a time. The random variable $Z = \min ( \alpha, \tilde{Z} )$ where $\tilde{Z} \sim U(0,1)$ controls which channel is activated; i.e. if $Z$ is lower than the threshold $\beta$, activate the AWGN channel, otherwise initiate the BSC channel where $Z$ also determines the error probability at each time point. We set $\alpha=0.3$, $\beta=0.2$, BSC channel input as $X \sim \text{Bern}(0.5)$, and AWGN input and noise deviation as $\sigma_X=1$ and $\sigma_N=0.1$ respectively, and obtained the estimates of $I(X;Y|Z,Z^2,Z^3)$ for various estimators. While the theory value is equal to $I(X;Y|Z)=0.53241$, yet it's conditioned over a low-dimensional manifold in a high-dimensional space. The results are shown in Figure \ref{fig:exp2_cmi_ld_vs_n}. Similar to the previous experiment, the GDM estimator can correctly converge to the true value. The $\Sigma H$ and binning estimators show a negative bias, and the KSG estimator gets totally lost.

\textbf{Experiment 3: Total Correlation for independent mixtures.}
In this experiment, we estimate the total correlation of three independent variables $X$, $Y$ and $Z$. The samples for the variable $X$ are generated in the following fashion:
First toss a fair coin, if heads appears we fix $X$ at $\alpha_X$, otherwise we draw $X$ from a uniform distribution between $0$ and $1$. samples from $Y$ and $Z$ are also generated in the same way independently with parameters $\alpha_Y$ and $\alpha_Z$ respectively. For this setup, $TC(X,Y,Z)=0$. We set $\alpha_X=1$, $\alpha_Y=1/2$ and $\alpha_Z=1/4$, and generated various datasets with different lengths. Then estimated total correlation values are shown in the Figure \ref{fig:exp3_tc_vs_n}.  

\textbf{Experiment 4: Total Correlation for independent uniforms with correlated zero-inflation.}
Here we first consider four auxiliary uniform variables $\tilde{X}_1$, $\tilde{X}_2$, $\tilde{X}_3$ and $\tilde{X}_4$ which are taken from $\mathcal{U}(0.5,1.5)$. Then each sample is erased with a Bernoulli probability; i.e. $X_1=\alpha_1 \tilde{X}_1$, $X_2=\alpha_1 \tilde{X}_2$ and $X_3=\alpha_2 \tilde{X}_3$, $X_4=\alpha_2 \tilde{X}_4$   in which $\alpha_1 \sim \text{Bern}(p_1)$ and $\alpha_2 \sim \text{Bern}(p_2)$. As we see, after zero-inflation $X_1$ and $X_2$ become correlated, so do $X_3$ and $X_4$ while still $(X_1,X_2) \indep (X_3,X_4)$.
 In the experiment, we set $p_1=p_2=0.6$. The results of running different algorithms over the data can be seen in Figure \ref{fig:exp4_tc_vs_n}. For the total correlation experiments 3 and 4, similar to that of conditional mutual information in experiments 1 and 2, only the GDM estimator can best estimate the true value. The estimator $\Sigma H$ was removed from the figures due to its high inaccuracy.

\begin{figure}
\centering
	\begin{subfigure}[t]{.45\textwidth}
 	\centering
	\includegraphics[width=\textwidth,trim={.3cm 0 1cm 0},clip]{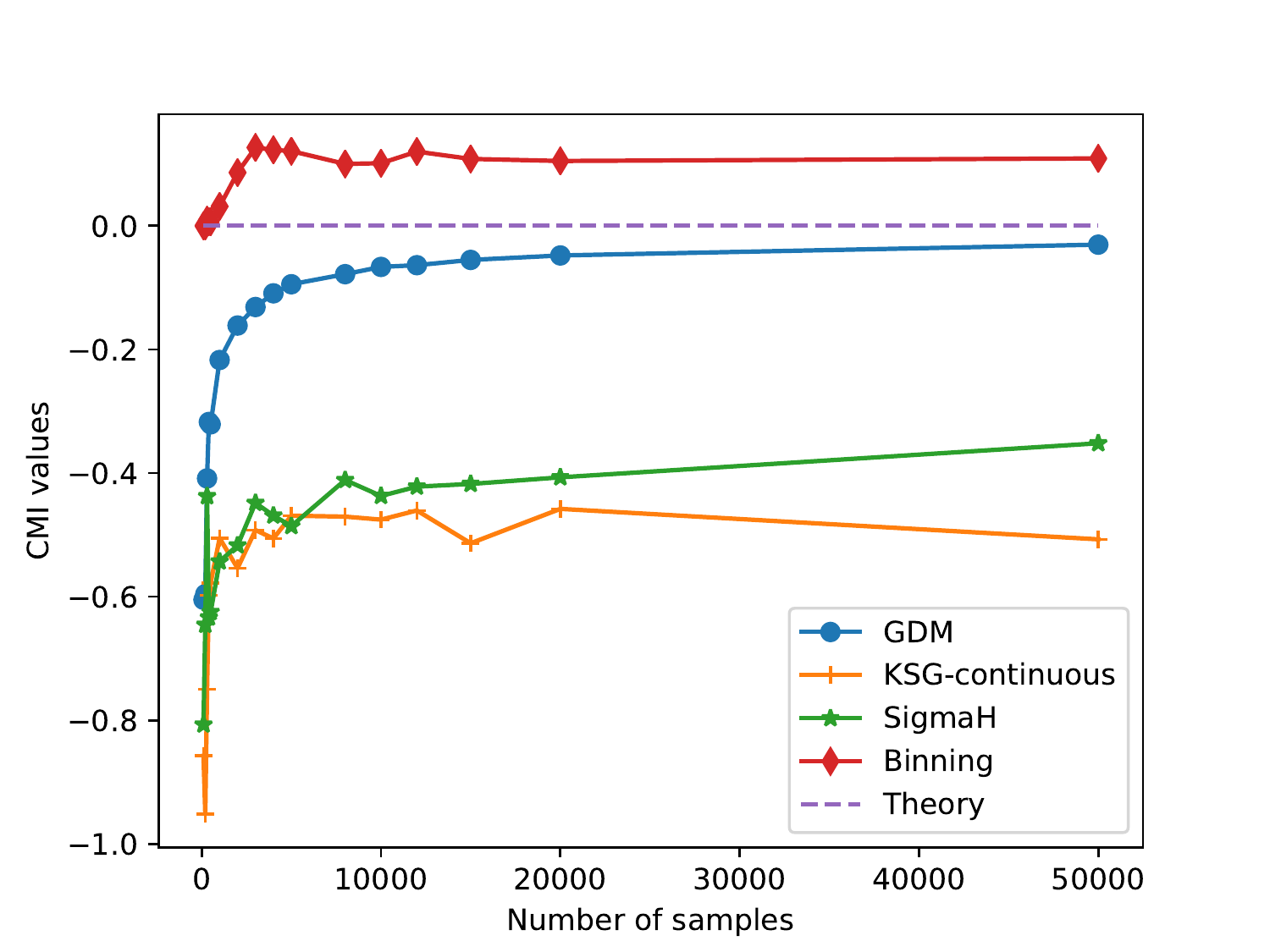}
	\caption{}
	\label{fig:exp1_cmi_vs_n}
	\end{subfigure}
	\begin{subfigure}[t]{.45\textwidth}
	\centering
	\includegraphics[width=\textwidth,trim={.3cm 0 1cm 0},clip]{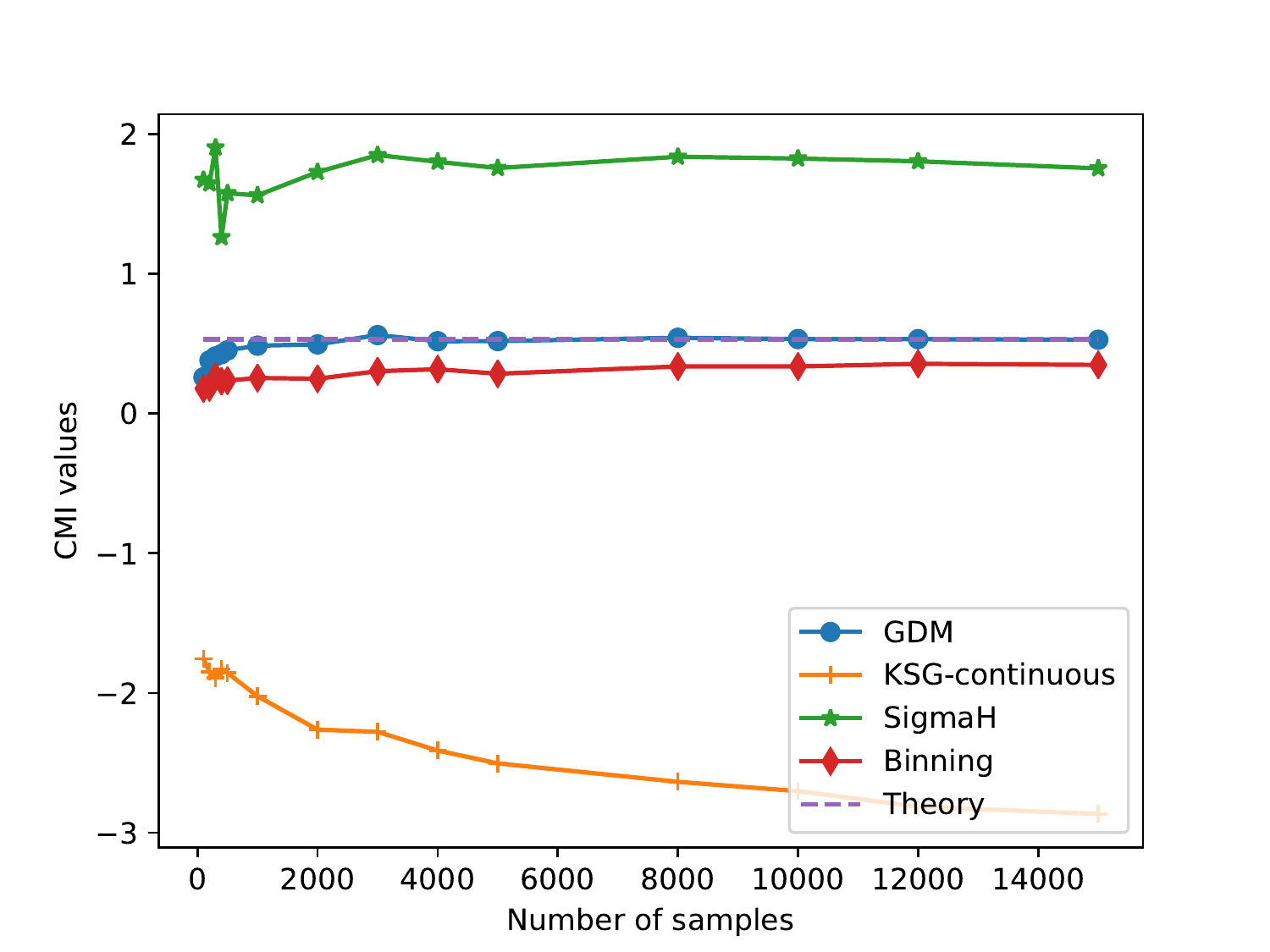}
	\caption{}
	\label{fig:exp2_cmi_ld_vs_n}
	\end{subfigure}

	\begin{subfigure}[t]{.45\textwidth}
	\centering
	\includegraphics[width=\textwidth,trim={.3cm 0 1cm 0},clip]{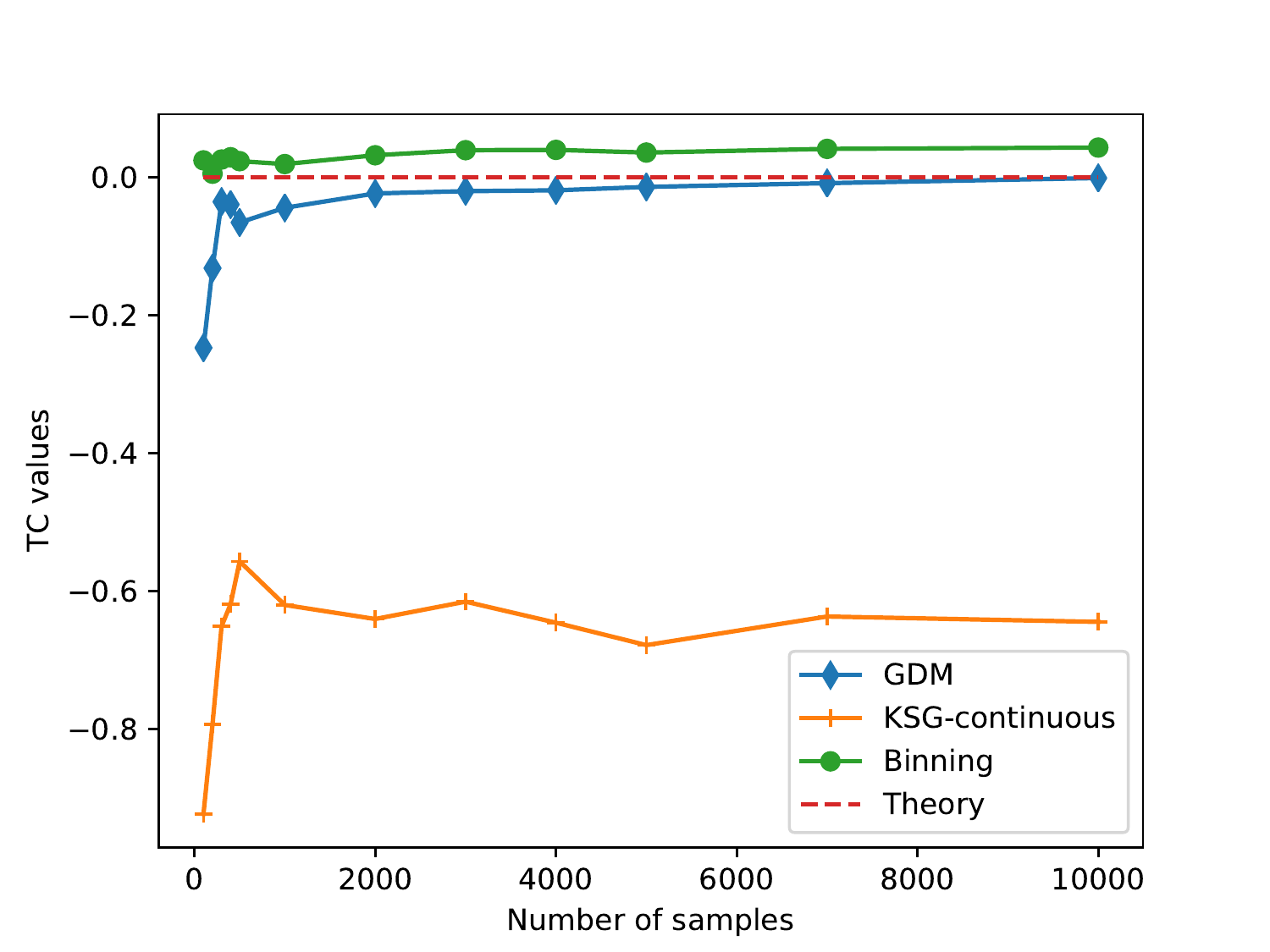}
	\caption{}
	\label{fig:exp3_tc_vs_n}
	\end{subfigure}
	\begin{subfigure}[t]{.45\textwidth}
	\centering
	\includegraphics[width=\textwidth,trim={.3cm 0 1cm 0},clip]{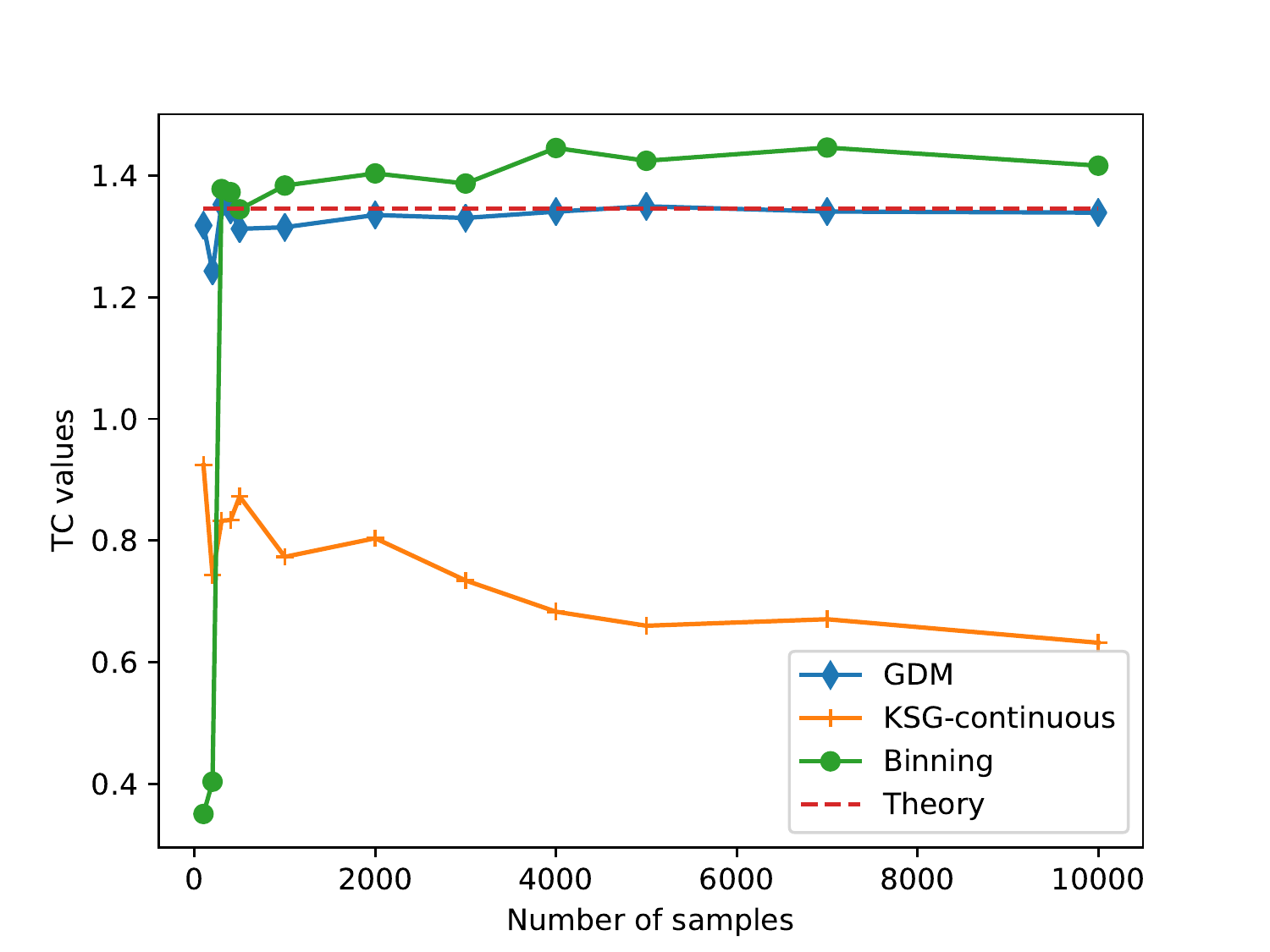}
	\caption{}
	\label{fig:exp4_tc_vs_n}
	\end{subfigure}

	\begin{subfigure}[t]{.45\textwidth}
	\centering
	\includegraphics[width=\textwidth,trim={.3cm 0 1cm 0},clip]{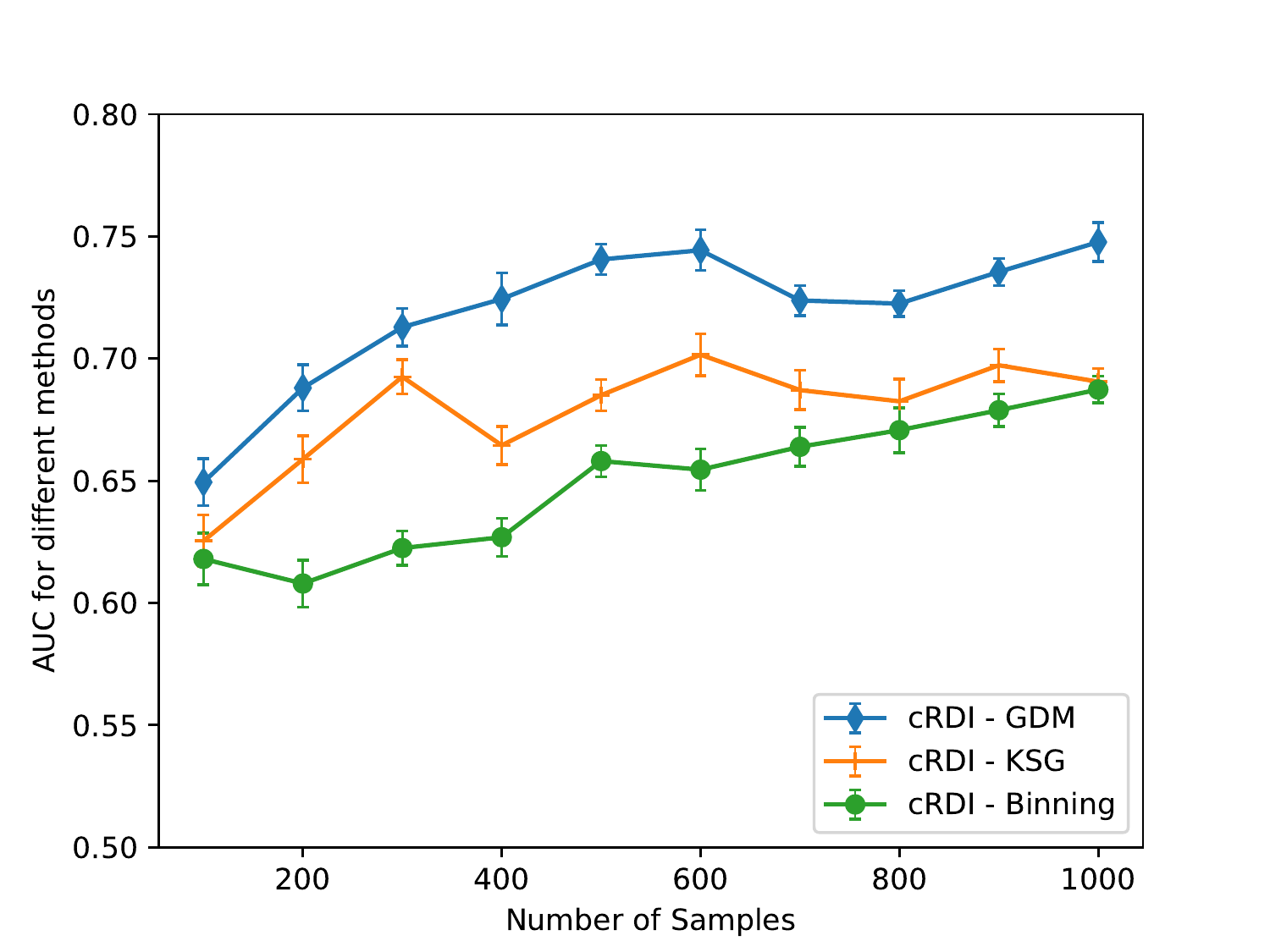}
	\caption{}
	\label{fig:grn_auc_vs_n}
	\end{subfigure}
	\begin{subfigure}[t]{.45\textwidth}
	\centering
	\includegraphics[width=\textwidth,trim={.3cm 0 1cm 0},clip]{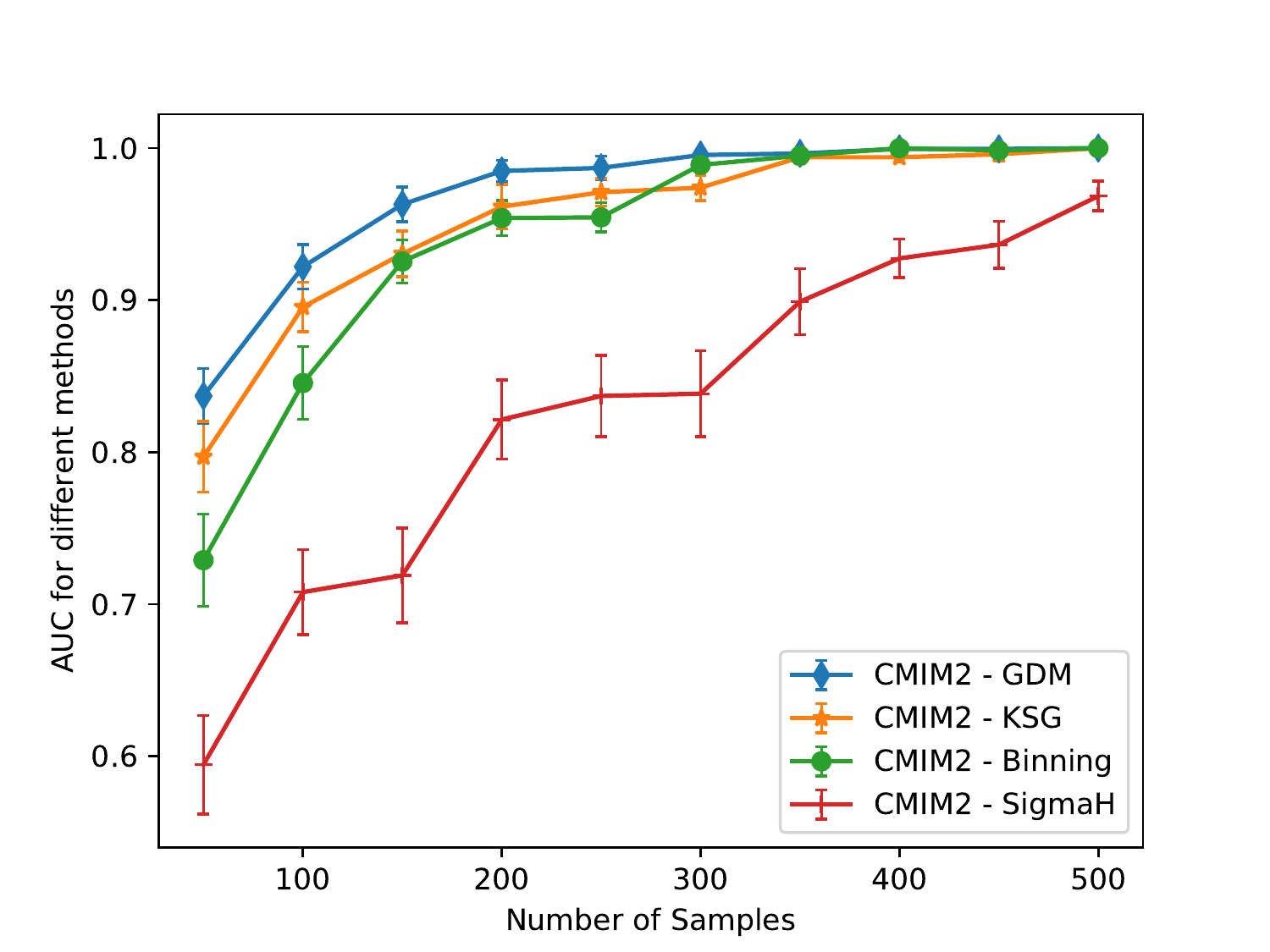}
	\caption{}
	\label{fig:featureselection_auc_vs_n}
	\end{subfigure}

\caption{The results for the experiments versus the number of samples: \ref{fig:exp1_cmi_vs_n}: The estimated CMI for the X-Z-Y Markov chain. \ref{fig:exp2_cmi_ld_vs_n}: CMI for the AWGN+BSC channels with low-dimensional $Z$ manifold. \ref{fig:exp3_tc_vs_n}: The estimated TC values for three independent variables. \ref{fig:exp4_tc_vs_n}: The estimated TC for zero-inflated variables. \ref{fig:grn_auc_vs_n}: The AUROC values for gene regulatory network inference. The error bars show the standard deviation scaled down by $0.2$. \ref{fig:featureselection_auc_vs_n}: The AUROC values for feature selection accuracy. The error bars show the standard deviations scaled down by $0.2$.}
\end{figure}

\textbf{Experiment 5: Gene Regulatory Networks.}
In this experiment we use different estimators to do Gene Regulatory Network inference based on the conditional Restricted Directed Information (cRDI) \cite{rahimzamani2016network}. We do our test on the simulated neuron cells' development process, based on a model from \cite{qiu2012understanding}. In this model, the time series vector $X$ consists of $13$ random variables each of which corresponding to a single gene in the development process. We simulated the development process for various lengths of time-series in which the noise $N \sim \mathcal{N}(0,.03)$ is added for all the genes, and every single sample is then subject to erasure (i.e. be replaced by 0s) with a probability of $0.5$. Then we applied the cRDI method utilizing various CMI estimators and then calculated the Area-Under-ROC curve (AUROC). The results are shown in Figure \ref{fig:grn_auc_vs_n}. It's seen that the cRDI method implemented with the GDM estimator outperform the other estimators by at least $\%10$ in terms of AUROC. In the tests, cRDI for each $(X_i,X_j)$ is conditioned over the node $k \neq i$ with the highest RDI value to $j$. We notice that the causal signals are highly destroyed due to the zero-inflation, so we won't expect high performance of the causal inference over the data. We did not include the $\Sigma H$ estimator results due to its very low performance.

\textbf{Experiment 6: Feature Selection by Conditional Mutual Information Maximization.}
Feature selection is an important pre-processing step in many learning tasks. The application of information theoretic measures in feature selection is well studied in the literature \cite{vergara2014review}. Among the well-known methods is the conditional mutual information maximization (CMIM) first introduced by Flueret \cite{fleuret2004fast}, a variation of which was later introduced called CMIM-2 \cite{vergara2010cmim2}. Both methods use conditional mutual information as their core measure to select the features. Hence the performance of the estimators can significantly influence the performance of the methods. In our experiment, we generated a vector $X=(X_1, \ldots, X_{15})$ of 15 random variables in which all the random variables are taken from $\mathcal{N}(0,1)$ and then each random variable $X_i$ is clipped from above at $\alpha_i$ which is initially taken randomly from $\mathcal{U}(0.25,0.3)$ and then kept constant during the sample generation. Then $Y$ is generated as $Y=\cos \big( \sum_{i=1}^5 X_i \big)$. Then we did the CMIM-2 algorithm with various CMI estimators to evaluate the performance of the estimators in extracting the relevant features $X_1, \ldots, X_5$. The AUROC values for each algorithm versus the number of samples generated are shown in Figure \ref{fig:featureselection_auc_vs_n}. The feature selection methods implemented with the GDM estimator outperform the other estimators.

\section{Discussion and Future Work}
A general paradigm of graph divergence measures and novel estimators thereof, for general probability spaces are proposed, which estimate several generalizations of mutual information. In future, we would like to derive further efficient estimators for high dimensional data. In the current work, estimators are shown to be consistent with infinite scaling of parameter $k$. In future, we would like to understand the finite $k$ performance of the estimators as well as guarantees on sample complexity and rates of convergence. Another potential direction to follow is to study the variational characterization of the graph divergence measure to design estimators. Improving the computational efficiency of the estimator is another direction of future work. Recent literature including \cite{noshad2018scalable} provide a new methodology to estimate mutual information in a computationally efficient manner and leveraging these ideas for the generalized measures and general proabability distributions can be a promising direction ahead.

\section{Acknowledgement}
This work was partially supported by NSF grants 1651236, 1703403 and NIH grant 5R01HG008164.

\newpage\bibliography{ref}
\bibliographystyle{ieeetr}


\newpage\appendix



\section{Pictorial Representation of Estimators}
The Figure \ref{fig:step_1} represents the step 1 (distance query or \textbf{Query} step in the algorithm) and the Figure \ref{fig:step_2} represents the steps 2 and 3 (numbers inquiry, or the \textbf{Inquire} step in the algorithm). Note that in the graphics we used $\ell_2$-norm to give better intuition on the process, while in our proofs and simulations, we use $\ell_{\infty}$-norm.

\begin{figure}
\begin{subfigure}[t]{.5\textwidth}
\centering
	\includegraphics[scale=.5]{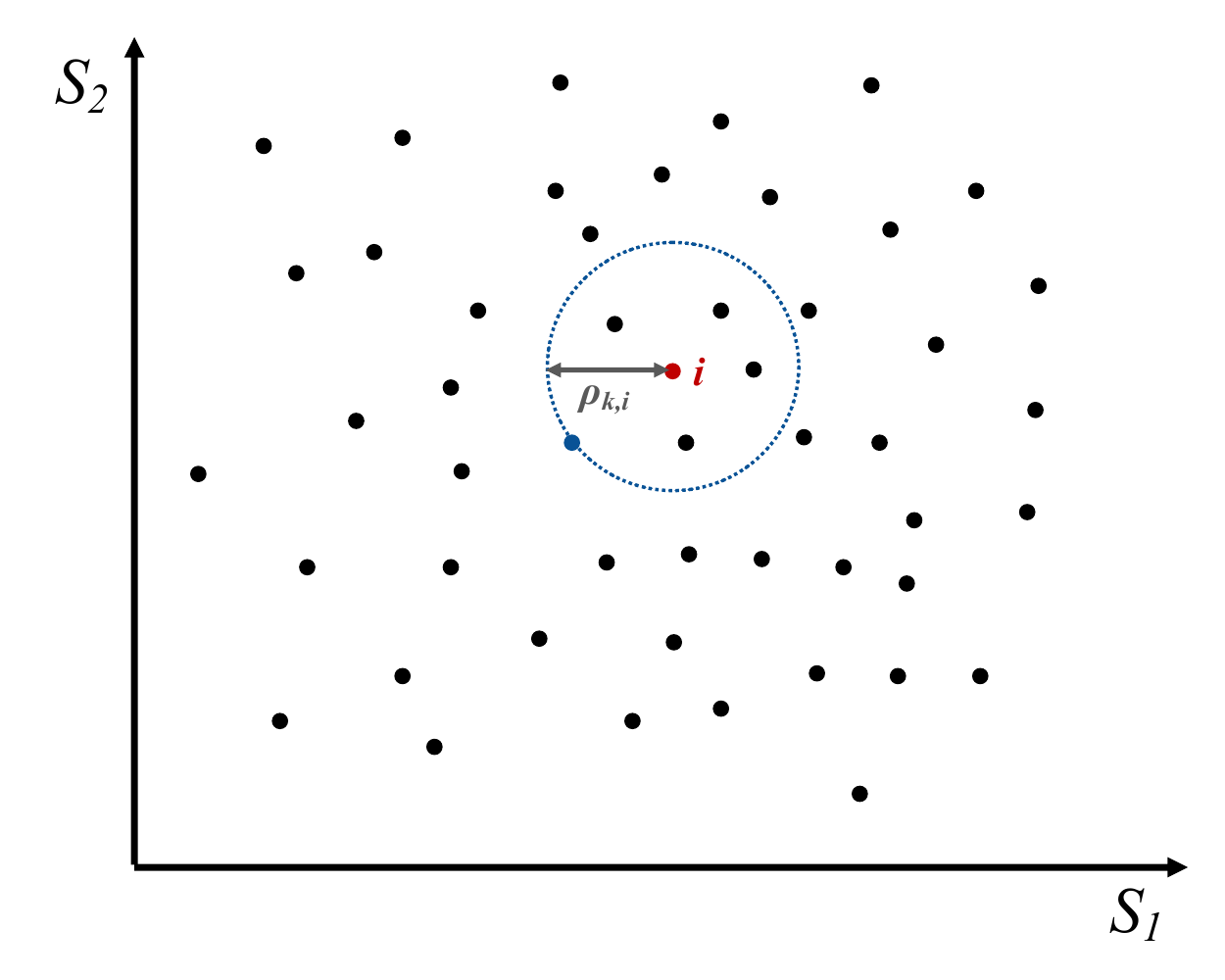}
	\caption{}
	\label{fig:step_1}
\end{subfigure}
\begin{subfigure}[t]{.5\textwidth}
\centering
	\includegraphics[scale=.5]{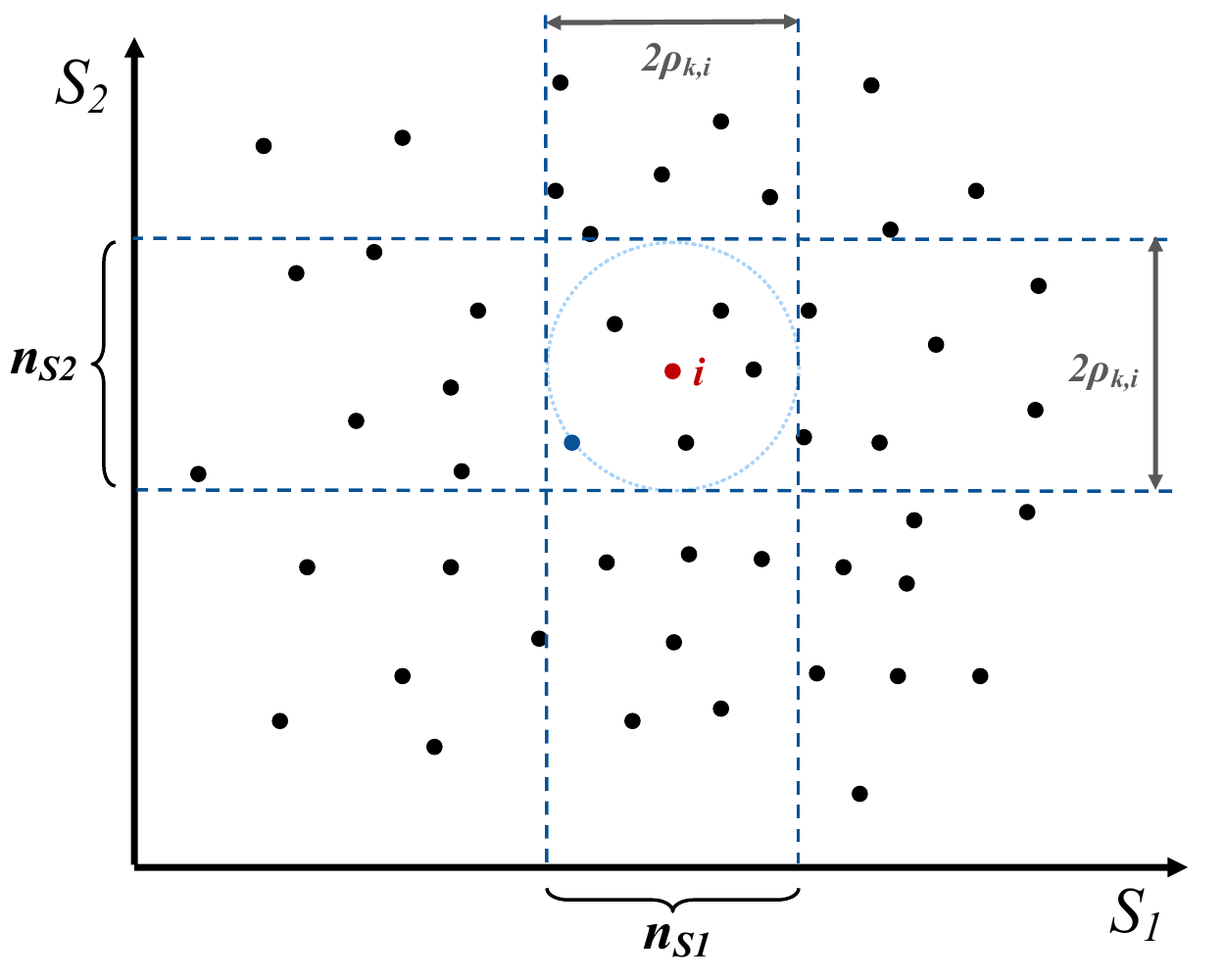}
	\caption{}
	\label{fig:step_2}
\end{subfigure}
\caption{The method for estimating information theoretic measures. \textbf{(a)} Step 1: \textbf{Query} for the distance to the $k$th nearest neighbor of each point $i$ in the space $\mathcal{X}$. \textbf{(b)} Step 2: \textbf{Inquire} for the number of points lying within the $\rho_{k,i}$-neighborhood of each point $i$ in the subspaces of $\mathcal{X}$ including itself.}
\label{fig:estimator_schematic}
\end{figure}

\section{Multivariate Mutual Information}
In \cite{chan2015multivariate}, Multivariable Mutual Information (MMI) is defined as follows: let $\Pi(\mathcal{X})$ be the collection of all possible partitions of $\mathcal{P}$ which split $\mathcal{X}$ into at least two non-empty disjoint subsets. For any partition $\mathcal{P}\in\Pi(\mathcal{X})$, the product distribution $\Pi_{C\in\mathcal{P}}\mathbb{P}_{X_C}$ specifies an independence relation, i.e., $X_{C}$'s are treated as agglomerated random variables and are mutually independent. Given a particular partition, define an information measure $I_{\mathcal{P}}(X)$ as :
\begin{align}
I_{\mathcal{P}}(X) = \frac{1}{\mathcal{P}-1} D (\mathbb{P}_{X}\parallel \Pi_{C\in\mathcal{P}}\mathbb{P}_{X_C})
\end{align}

Then, MMI is defined as:
\begin{align}
\text{MMI(X)}  = \min_{\mathcal{P}\in\Pi(\mathcal{X})} I_{\mathcal{P}}(X)
\end{align}

This can cast as a functional of our Graph Divergence Measure by choosing for every partition, $\mathcal{P}\in\Pi(\mathcal{X})$, a DAG $\mathcal{G}_{\mathcal{P}}$ with all $X_{C}$'s forming an aggregate node but disconnected from each other and thus inducing a measure $\gls{graph_dist}_X^{\mathcal{P}}$. Thus,
\begin{align}
I_{\mathcal{P}}(X) = \frac{1}{\mathcal{P}-1} \mathbb{GDM}(\mathbb{P}_X\parallel  \gls{graph_dist}_X^{\mathcal{P}})
\end{align}
which implies, 
\begin{align}
\text{MMI(X)}  =  \min_{\mathcal{P}\in\Pi(\mathcal{X})} \mathbb{GDM}(\mathbb{P}_X\parallel  \gls{graph_dist}_X^{\mathcal{P}})
\end{align}

\section{Directed Information}

In this section, we will derive the expression of the directed information from $X^T=\left( X_1, \ldots, X_T \right)$ to $Y^T=\left( Y_1, \ldots, Y_T \right)$ in terms of two graph divergence measures. For the simplicity of notations and logic we will only do it assuming $X$ and $Y$ are discrete. However, it's easily extendable to the case of mixture distributions using the notion of Radon-Nikodym derivative. From the definition of the directed information we have:

\begin{eqnarray}
& & I \left( X^T \rightarrow Y^T \right) \\
& = & \sum_{t=1}^T I \left( X^t  ; Y_t \middle| Y_{t-1} \right) \\
& = & \sum_{t=1}^T \sum_{x^t,y^t} \left( \mathbb{P}_{X^tY^t}(x^t,y^t) \log \frac{ \mathbb{P}_{Y_t | X^t,Y^{t-1}}(y_t | x^t,y^{t-1} ) } {  \mathbb{P}_{Y_t |Y^{t-1}} (y_t | y^{t-1}) } \right) \\
& = & \sum_{x^T,y^T} \mathbb{P}_{X^TY^T}(x^T,y^T) \sum_{t=1}^T \log \frac{ \mathbb{P}_{Y_t | X^t,Y^{t-1}}(y_t | x^t,y^{t-1} ) } { \mathbb{P}_{Y_t |Y^{t-1}} (y_t | y^{t-1}) } \\
& = &  \sum_{x^T,y^T} \mathbb{P}_{X^TY^T}(x^T,y^T) \log \frac{ \prod_{t=1}^T\mathbb{P}_{Y_t | X^t,Y^{t-1}}(y_t | x^t,y^{t-1} ) }
{ \prod_{t=1}^T\mathbb{P}_{Y_t |Y^{t-1}} (y_t | y^{t-1}) } \\
& = &  \sum_{x^T,y^T} \mathbb{P}_{X^TY^T}(x^T,y^T) \log \frac{ \prod_{t=1}^T\mathbb{P}_{Y_t | X^t,Y^{t-1}}(y_t | x^t,y^{t-1} ) }
{ \mathbb{P}_{Y^T} (y^T) } \\
& = &  \sum_{x^T,y^T} \mathbb{P}_{X^TY^T}(x^T,y^T) \log \frac{ \prod_{t=1}^T \mathbb{P}_{Y_t | X^t,Y^{t-1}}(y_t | x^t,y^{t-1} )\mathbb{P}_{X_t | X^{t-1}}(x_t | x^{t-1} ) }
{ \mathbb{P}_{X^T} (x^T) \mathbb{P}_{Y^T} (y^T) } \\
& = & \sum_{x^T,y^T} \mathbb{P}_{X^TY^T}(x^T,y^T) \log \frac{ \mathbb{P}_{X^TY^T}(x^T,y^T) } { \mathbb{P}_{X^T} (x^T) \mathbb{P}_{Y^T} (y^T) } \nonumber \\
& & - \sum_{x^T,y^T} \mathbb{P}_{X^TY^T}(x^T,y^T) \log \frac{ \mathbb{P}_{X^TY^T}(x^T,y^T) }{ \prod_{t=1}^T \mathbb{P}_{Y_t | X^t,Y^{t-1}}(y_t | x^t,y^{t-1} )\mathbb{P}_{X_t | X^{t-1}}(x_t | x^{t-1} ) }\\
& = & D \left( \mathbb{P}_{X^TY^T} \| \mathbb{P}_{X^TY^T}^I \right)  - D \left( \mathbb{P}_{X^TY^T} \| \mathbb{P}_{X^TY^T}^C \right) \\
& = & \gls{graph_distance_measure}\Big( (X^T,Y^T), \mathcal{G}_I \Big) - \gls{graph_distance_measure}\Big( (X^T,Y^T), \mathcal{G}_C \Big)
\end{eqnarray}

The distributions $\mathbb{P}_{X^TY^T}^I= \mathbb{P}_{X^T} \mathbb{P}_{Y^T}$ and $\mathbb{P}_{X^TY^T}^C = \prod_{t=1}^T \mathbb{P}_{Y_t | X^t,Y^{t-1}}\mathbb{P}_{X_t | X^{t-1}}$ represent the \textit{independent} distribution between $X^T$ and $Y^T$, and the \textit{causal} distribution from $X^T$ to $Y^T$ respectively.

\section{Variational Representation of Graph Divergence Measure}

%
The first part follow from the following property of KL divergence: 
 \begin{align}
D(\mathbb{P}_X \| \mathbb{Q}_X) = D(\mathbb{P}_X \| \gls{graph_dist}_X)+ \sum_{j}^d D(\mathbb{P}_{X_j|X_{Pa(j)}} \| \mathbb{Q}_{X_j|X_{Pa(j)}} | \mathbb{P}_{X_1^{j-1}}) 
\end{align}

The second part follows from the standard Donsker-Varadhan characterizations of vanilla divergence as in \cite{itlectures}.

\section{Investigating assumptions in Theorem \ref{thm:mean_convergence}}

In this section, we will investigate the validity of the assumptions for various types of distributions. In particular, we will investigate the different types of distributions we had mentioned: (1) finitely discrete; (2) continuous; (3) finitely discrete over some dimensions while continuous over others; (4) a mixture of the previous cases; (5) has a joint density supported over a lower dimensional manifold. As we had mentioned, these cases represent almost all the real world data. We note that the estimator is well defined only when $\mathcal{X} = \mathbb{R}^{d_X}$; i.e. all the alphabets be in real space.

We will examine the Assumptions \ref{assumptions}.\ref{assumption_finitediscrete} and \ref{assumptions}.\ref{assumption_logintegrable} for each case separately. The Assumption \ref{assumptions}.\ref{assumption_kn} is a parametric assumption related to the convergence of the algorithm and is not directly related to the distribution of the data. Jointly considered with the Assumption  \ref{assumptions}.\ref{assumption_finitediscrete}, it controls the boundary effects between the continuous and the discrete regions.

\subsection{Finitely discrete distribution}\label{subsec:assumptions_discrete}
For a \textit{finitely discrete} distribution the Assumption \ref{assumptions}.\ref{assumption_finitediscrete} holds by definition. The Assumption \ref{assumptions}.\ref{assumption_logintegrable} trivially holds since the size of sample space $| \mathcal{X} |$ is finite. 

\subsection{Continuous distribution}\label{subsec:assumptions_continuous}

For a continuous distribution, the Assumption \ref{assumptions}.\ref{assumption_finitediscrete} holds since there is no discrete component. 

The distribution is absolutely continuous with respect to the \textit{Lebesgue} measure $\lambda$ meaning that $\mathbb{P}_X(A)=0$ for any subset $A \subset \mathcal{X}$ implies $\lambda(A)=0$. Thus we can conclude that there exists a density $g: \mathcal{X} \rightarrow \mathbb{R}^+$ such that $\mathbb{P}_X(A) = \int_A g\hspace{.1cm} d\lambda$. Naturally $\mathbb{P}_X(\mathcal{X})=\int_\mathcal{X} g\hspace{.1cm} d\lambda = 1$. 

Furthermore if the density function $g$ is bounded everywhere and the variable has a bounded support, the Assumption \ref{assumptions}.\ref{assumption_logintegrable} is fulfilled.

\subsection{Finitely discrete over some dimensions while continuous over others}

In this case, the variable set $X$ can be decomposed to two sets $X^D$ and $X^C$ representing the finitely discrete and continuous dimensions respectively. So for any realization of the discrete dimensions $X^D=x^D$ the probability mass function $h_{X^D}(x^D)$ exists, and a conditional density $g_{X^C|X^D}(x^C|x^D)$ is well defined. We can define an auxiliary function $P_{X^C|X^D}(x^C,r|x^D) \equiv \mathbb{P}_{X^C|X^D} \left\{ a \in \mathcal{X}^C : \| a - x^C \|_{\infty} \leq r \middle| X^D=x^D \right\}$. Thus we can write \begin{equation}
P_X(x^C,x^D,r) = \mathbb{P}_X \left\{ (a,b) \in \mathcal{X} : b = x^D, \| a - x^C \|_{\infty} \leq r \right\} = P_{X^C|X^D}(x^C,r|x^D) h_{X^D}(x^D)
\end{equation}

We will show that if for any $x^D$ the continuous distribution over $X^C$ satisfies the Assumptions \ref{assumptions} as discussed in the Section \ref{subsec:assumptions_continuous}, then we can see that $\mathbb{P}_X$ will also satisfy the assumptions.

Let's define $f_{x^D}(x^C)$ for any fixed $x^D$ as:
\begin{eqnarray}
f_{x^D}(x^C)= \lim_{r \rightarrow 0} P_{X^C|X^D}(x^C,r|x^D) 
\prod_{l=1}^d \frac{P_{{\gls{pa_xl_var}^C|\gls{pa_xl_var}^D}}\left( \gls{pa_xl_val}^C ,r \middle| \gls{pa_xl_val}^D \right)}
{P_{{\gls{pa_n_xl_var}^C|\gls{pa_n_xl_var}^D}}\left( \gls{pa_n_xl_val}^C, r \middle| \gls{pa_n_xl_val}^D \right)}
\end{eqnarray}
Assuming the limit exists everywhere. Thus the Radon-Nikodym Derivative $f$ can be written as:
\begin{eqnarray}
f(x) & = & \lim_{r \rightarrow 0} P_X(x^C,x^D,r) \prod_{l=1}^d \frac{P_{\gls{pa_xl_var}}\left( \gls{pa_xl_val}^C, \gls{pa_xl_val}^D ,r \right)}{P_{\gls{pa_n_xl_var}}\left(  \gls{pa_n_xl_val}^C, \gls{pa_n_xl_val}^D, r \right)} \\
& = &
\lim_{r \rightarrow 0} P_{X^C|X^D}(x^C,r|x^D) 
\prod_{l=1}^d \frac{P_{{\gls{pa_xl_var}^C|\gls{pa_xl_var}^D}}\left( \gls{pa_xl_val}^C ,r \middle| \gls{pa_xl_val}^D \right)}
{P_{{\gls{pa_n_xl_var}^C|\gls{pa_n_xl_var}^D}}\left( \gls{pa_n_xl_val}^C, r \middle| \gls{pa_n_xl_val}^D \right)} \nonumber \\
& & \times
h_{X^D}(x^D) \prod_{l=1}^d \frac{h_{\gls{pa_xl_var}}\left(\gls{pa_xl_val}^D \right)}{h_{\gls{pa_n_xl_var}}\left( \gls{pa_n_xl_val}^D \right)} \\
& = &
f_{x^D}(x^C) \times h_{X^D}(x^D) \prod_{l=1}^d \frac{h_{\gls{pa_xl_var}}\left( \gls{pa_xl_val}^D \right)}{h_{\gls{pa_n_xl_var}}\left(  \gls{pa_n_xl_val}^D \right)}
\end{eqnarray}

Therefore for Assumption \ref{assumptions}.\ref{assumption_logintegrable} we have:
\begin{eqnarray} 
\int_{\mathcal{X}} \left| \log f(x) \right| d\mathbb{P}_X &\leq& \sum_{x^D} h_{X^D}(x^D) \int_{\mathcal{X}^C} \left| \log f_{x^D}(x^C) \right| g_{X^C|X^D}(x^C|x^D) dx^C \nonumber \\
& & + \sum_{x^D} h_{X^D}(x^D) \left| \log h_{X^D}(x^D) \prod_{l=1}^d \frac{h_{\gls{pa_xl_var}}\left( \gls{pa_xl_val}^D \right)}{h_{\gls{pa_n_xl_var}}\left(  \gls{pa_n_xl_val}^D \right)} \right|
\end{eqnarray}
In which we used the fact that $\mathbb{E} \left[ \log f(x) \right] = \mathbb{E}_{X^D} \left[ \mathbb{E} \left[ \log f(x^C,x^D) \middle| X^D=x^D \right] \right]$. The first term above is upper-bounded since we assumed that continuous distribution over $X^C$ satisfies the Assumption \ref{assumptions}.\ref{assumption_logintegrable} for any $x^D$. The second term is upper-bounded since it's a finitely discrete distribution as discussed in Section \ref{subsec:assumptions_discrete}. Thus $\int_{\mathcal{X}} \left| \log f(x) \right| d\mathbb{P}_X < \infty$ and the Assumption \ref{assumptions}.\ref{assumption_logintegrable} holds.

\subsection{A mixture of the previous cases}

In this case, we assume that the probability can be described as a linear combination of a continuous and a discrete distribution, i.e. without loss of generality we can assume that for any subset $A \subset \mathcal{X}$ the distribution can be described as $\mathbb{P}(A)= \alpha_C \mathbb{P}^C_X(A) + \alpha_D \mathbb{P}^{D}_X(A)$ in which $\alpha_C + \alpha_D = 1$. We note that the $\mathbb{P}_X^D$ does not represent a mass function here since the mass function is only defined over a discrete alphabet $\mathcal{X}^D= \{ x_1, ..., x_m \} \subset \mathcal{X}$ while $\mathbb{P}^D_X$ needs to admit the continuous domain $\mathcal{X}$. Thus we relate $\mathbb{P}^D_X$ to the mass function $h^D_X(.)$ as $\mathbb{P}^D(A) = \sum_{x \in \mathcal{X}^D} \boldsymbol{1}_{\{ x \in A\}} h_X^D(x)$. Furthermore we can see that for any $x \in \mathcal{X}^D$ and for $r$ small enough: 
\begin{equation} P_X^D(x,r) \equiv \mathbb{P}^D_X \Big( B_r(x) \Big) = h_X^D(x) \end{equation}
If $x \notin \mathcal{X}^D$ then for small enouth $r$ we have $P_X^D(x,r)=0$.

The Assumption \ref{assumptions}.\ref{assumption_finitediscrete} holds since the discrete distribution $\mathbb{P}^D_X$ is finite, and hence the number of total discrete points will be finite.

For the Assumption \ref{assumptions}.\ref{assumption_logintegrable} we have:
\begin{equation} \int_{\mathcal{X}} \left| \log f(x) \right| d\mathbb{P}_X = \alpha_C \int_{\mathcal{X}} \left| \log f(x) \right| d\mathbb{P}^{C}_X 
+ \alpha_D \sum_{x \in \mathcal{X}^D} h^{D}_X(x) \left| \log f(x) \right| \end{equation}
and if the continuous distribution complies with the assumptions mentioned in Section~\ref{subsec:assumptions_continuous} then the term above is upper-bounded and the Assumption~\ref{assumptions}.\ref{assumption_logintegrable} holds.

\subsection{A distribution with a joint density supported over a lower dimensional manifold}

This case simply means that the probability distribution $\mathbb{P}_X$ in $\mathcal{X}$ can be mapped to a probability distribution $\mathbb{P}_Y$ in a lower-dimensional space $\mathcal{Y}$ where $d_Y < d_X$, via a one-to-one continuous function $h: \mathcal{X} \rightarrow \mathcal{Y}$.
If the lower-dimnesional distribution $\mathbb{P}_Y$ is continuous complying with the properties discussed in the Section \ref{subsec:assumptions_continuous}, then it will preserve all the properties through the inverse mapping $h^{-1}: \mathcal{Y} \rightarrow \mathcal{X}$ and $\mathbb{P}_{X}$ hence $\mathbb{P}_X$ will satisfy the Assumptions \ref{assumptions}.
If the distribution has finite discrete components either as a discrete or as a mixture distribution, the finiteness of the components will be preserved as well and hence $\mathbb{P}_X$ will satisfy the Assumptions \ref{assumptions}.

Therefore this category of distributions will satisfy the Assumptions \ref{assumptions} if the distribution $\mathbb{P}_Y$ satisfies them.

\section{Consistency Proofs}

\subsection{Proof of Theorem \ref{thm:mean_convergence}} \label{sec:mean_convergence_proof}

First let’s generalize the definition of $P_X(x,r)$ in Equation \ref{eq:P_xr} to any subset $S \subseteq X$, i.e. for any point $s \in S$ we define:
\begin{equation} P_S(s,r) = \mathbb{P}_S \big\{ a \in \mathcal{S} : \| a- s \|_{\infty} \leq r \big\} = \mathbb{P}_S \Big\{ B_r(s) \Big\}\end{equation}
Thus $P_S(s,r)$ is the probability of an $\ell_\infty$ ball of radius $r$ centered at $s$, or equivalently, a hypercube with the edge length of $2r$ centered at the point $s$.

To prove the asymptotic unbiasedness of the estimator, we will first write the Radon-Nikodym derivative in an explicit form via the following lemma. 

\begin{lemma}\label{lemma:RN_convergence}
For almost every $x \in \mathcal{X}$:
\begin{equation} \frac{d \mathbb{P}_X}{d \gls{graph_dist}_X}(x) = f(x) = \lim_{r\rightarrow 0} P_X(x,r) \prod_{l=1}^d \frac{P_{\gls{pa_xl_var}}\left( \gls{pa_xl_val} ,r \right)}{P_{\gls{pa_n_xl_var}}\left(  \gls{pa_n_xl_val} ,r \right)} \end{equation} 
\end{lemma}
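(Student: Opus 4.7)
The plan is to derive the identity by composing two applications of a Besicovitch--Lebesgue-type differentiation theorem: one for the Radon-Nikodym derivative $d\mathbb{P}_X/d\gls{graph_dist}_X$ itself, and one for the asymptotic factorization of $\gls{graph_dist}_X(B_r(x))$ into ratios of marginals of $\mathbb{P}_X$. Since $\gls{graph_distance_measure}(X,\mathcal{G})<\infty$ forces $\mathbb{P}_X \ll \gls{graph_dist}_X$, the derivative $f$ exists, and the Besicovitch--Lebesgue differentiation theorem applied to the pair of finite Radon measures $\mathbb{P}_X$ and $\gls{graph_dist}_X$ on $\mathbb{R}^{d_X}$ (with $\ell_\infty$ hypercubes, which form a valid differentiation basis) yields, for $\mathbb{P}_X$-almost every $x$,
\[
f(x) \;=\; \lim_{r\to 0}\frac{P_X(x,r)}{\bar{P}_X(x,r)}, \qquad \bar{P}_X(x,r) \;:=\; \gls{graph_dist}_X(B_r(x)).
\]

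The substantive step is to establish the asymptotic factorization
\[
\bar{P}_X(x,r) \;=\; \bigl(1+o(1)\bigr)\prod_{l=1}^d \frac{P_{\gls{pa_n_xl_var}}(\gls{pa_n_xl_val},r)}{P_{\gls{pa_xl_var}}(\gls{pa_xl_val},r)} \qquad\text{as } r\to 0,
\]
with the denominator understood to be $1$ whenever $\gls{pa_xl_var}$ is empty. I would prove this by induction in a topological order on $\mathcal{G}$. Using $\gls{graph_dist}_X = \prod_l \mathbb{P}_{X_l\mid \gls{pa_xl_var}}$, expand $\gls{graph_dist}_X(B_r(x))$ as an iterated integral in that order. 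The innermost factor is $\mathbb{P}\bigl(X_l \in B_r(x_l) \mid \gls{pa_xl_var} = x'_{\mathrm{pa}(l)}\bigr)$, which, for $x'_{\mathrm{pa}(l)}$ within $r$ of $\gls{pa_xl_val}$, equals $\mathbb{P}\bigl(X_l\in B_r(x_l)\mid \gls{pa_xl_var}=\gls{pa_xl_val}\bigr)$ up to a multiplicative $1+o(1)$ for a.e.\ base point, by continuity of the regular conditional probability. A second application of Besicovitch differentiation, this time to the marginal pair $\mathbb{P}_{\gls{pa_n_xl_var}}$ and $\mathbb{P}_{\gls{pa_xl_var}}$, identifies that conditional probability with $(1+o(1))\cdot P_{\gls{pa_n_xl_var}}(\gls{pa_n_xl_val},r)/P_{\gls{pa_xl_var}}(\gls{pa_xl_val},r)$. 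Pulling these factors outside the remaining integrals and iterating on $l$ telescopes the product into the claimed form.

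The hard part will be making this argument uniform across the regimes covered by Assumption \ref{assumptions}. For purely continuous coordinates standard Lebesgue differentiation applies directly; for purely discrete coordinates, Assumption \ref{assumptions}.\ref{assumption_finitediscrete} guarantees that once $r$ drops below the minimum separation between atoms, every $B_r(x_l)$ captures exactly one atom and the ratios become exact rather than asymptotic. The delicate regime is the mixed discrete-continuous case, and the lower-dimensional-manifold case (where $\gls{graph_dist}_X$ need not be absolutely continuous with respect to Lebesgue measure on $\mathbb{R}^{d_X}$): here one must invoke Besicovitch differentiation for general Radon measures and check that the exceptional set of points at which the factorization degrades (essentially points lying on the boundary between the discrete and continuous supports) carries no $\mathbb{P}_X$-mass, which is precisely what the finiteness of the atomic set in Assumption \ref{assumptions}.\ref{assumption_finitediscrete} provides. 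Once both limit identities are in hand, dividing one by the other yields
\[
f(x)\;=\;\lim_{r\to 0} P_X(x,r)\prod_{l=1}^d \frac{P_{\gls{pa_xl_var}}(\gls{pa_xl_val},r)}{P_{\gls{pa_n_xl_var}}(\gls{pa_n_xl_val},r)},
\]
which is the statement of the lemma.
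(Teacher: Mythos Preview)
Your approach is essentially the paper's: both rest on the Lebesgue--Besicovitch differentiation theorem applied to the pair $(\mathbb{P}_X,\gls{graph_dist}_X)$ on $\mathbb{R}^{d_X}$, yielding $f(x)=\lim_{r\to 0}P_X(x,r)/\gls{graph_dist}_X(B_r(x))$, followed by identifying $\gls{graph_dist}_X(B_r(x))$ with the product $\prod_l P_{\gls{pa_n_xl_var}}(\gls{pa_n_xl_val},r)/P_{\gls{pa_xl_var}}(\gls{pa_xl_val},r)$.

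The difference is in how the second step is handled. The paper simply writes the chain
\[
\prod_{l=1}^d \frac{P_{\gls{pa_n_xl_var}}(\gls{pa_n_xl_val},r)}{P_{\gls{pa_xl_var}}(\gls{pa_xl_val},r)}
\;=\;\prod_{l=1}^d \mathbb{P}_{X_l\mid\gls{pa_xl_var}}\bigl\{B_r(x_l)\,\big|\,B_r(\gls{pa_xl_val})\bigr\}
\;=\;\gls{graph_dist}_X\bigl(B_r(x)\bigr),
\]
treating the second equality as immediate. You are right to flag this as the substantive step: the first equality is just the definition of conditional probability on events, but the second is \emph{not} an exact identity for finite $r$ whenever a node has multiple parents (the product averages each conditional factor against the true joint of its parents, whereas $\gls{graph_dist}_X$ integrates against the graph-induced product of parent marginals). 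Your inductive expansion in topological order, together with a second round of differentiation on the marginal pairs $(\mathbb{P}_{\gls{pa_n_xl_var}},\mathbb{P}_{\gls{pa_xl_var}})$, is exactly what is needed to make this $(1+o(1))$ factorization rigorous, and it is more honest than the paper's one-line assertion. The one place to be careful is your appeal to ``continuity of the regular conditional probability'': this does not hold pointwise in general, so you should phrase that replacement as another Besicovitch-type averaging statement (a.e.\ in the base point) rather than as genuine continuity, which is consistent with how you later discuss the mixed and manifold regimes.
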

\begin{proof}
Please see the section \ref{sec:lemma_RN_convergence_proof}.
\end{proof}

Now notice that $\widehat{\gls{graph_distance_measure}}^{(N)}(X,\mathcal{G})=\frac{1}{N} \sum_{i=1}^N \zeta_i$ in which all the $\zeta_i$'s are identically distributed. Thus we have $\mathbb{E} [ \widehat{\gls{graph_distance_measure}}^{(N)}(X,\mathcal{G}) ] = \mathbb{E} [ \zeta_1 ] $. Therefore, the bias can be written as:
\begin{eqnarray}
\left| \mathbb{E} \left[ \widehat{\gls{graph_distance_measure}}^{(N)}(X,\mathcal{G}) \right] - \gls{graph_distance_measure}(X,\mathcal{G}) \right| & = &  \left| \mathbb{E}_X\left[ \mathbb{E}[ \zeta_1 | X ] \right] - \int_{\mathcal{X}} \log f(X) d\mathbb{P}_X \right| \\
 & \leq & \int_{\mathcal{X}} \left| \mathbb{E}[ \zeta_1 | X ] -  \log f(X)  \right| d\mathbb{P}_X
\end{eqnarray}

Now we will give upper bounds for $\left| \mathbb{E}[ \zeta_1 |X] -  \log f(X)  \right|$ for every $x \in \mathcal{X} $. Similar to the technique used by \cite{gao2017mixture}, we divide the domain of $\mathcal{X}$ into three parts as $\mathcal{X} = \Omega_1 \bigcup \Omega_2 \bigcup \Omega_3 $ where:
\begin{itemize}
\item $\Omega_1 = \{x: f(x)=0 \}$
\item $\Omega_2 = \{x: f(x)>0, P_X(x,0)>0 \}$
\item $\Omega_3 = \{x: f(x)>0, P_X(x,0)=0 \}$
\end{itemize}

For each of the domains, we show that $\lim_{N \rightarrow \infty} \int_{\Omega_i} \left| \mathbb{E}[ \zeta_1 | X=x ] -  \log f(x)  \right| d\mathbb{P}_X = 0$.

\vspace{.1in}
\textbf{For $\boldsymbol{x \in \Omega_1}$:}

The probability of $\Omega_1$ is zero with respect to $\mathbb{P}_X$, since:
\begin{equation} \mathbb{P}_X(\Omega_1)=\int_{\Omega_1} d\mathbb{P}_X = \int_{\Omega_1} f(x)d\gls{graph_dist}_X = \int_{\Omega_1} 0 d\gls{graph_dist}_X = 0\end{equation} 
In which the second equality is due to Lemma \ref{lemma:RN_convergence}. Thus $\int_{\Omega_1} \left| \mathbb{E}[ \zeta_1 | X=x ] -  \log f(x)  \right| d\mathbb{P}_X = 0$.

\vspace{.1in}
\textbf{For $\boldsymbol{x \in \Omega_2}$:}

In this case $f(x)$ is obviously the same as $P_X(x,0) \prod_{l=1}^d \frac{P_{\gls{pa_xl_var}}\left( \gls{pa_xl_val} ,0 \right)}{P_{\gls{pa_n_xl_var}}\left( \gls{pa_n_xl_val} ,0 \right)}$. We will first show that the probability of the $k$-nearest neighbor distance $\rho_{k,1}$ being non-zero is small, which means we will use the number of samples being equal to $x$ as $\tilde{k}_i$, and we will show that the mean of the estimate $\zeta_1$ is close to $\log f(x)$.

We notice that for $x$, the probability of $\rho_{k,1}>0$ is equal to the probability that $x$ is observed at most $k-1$ times. So it can be upper bounded as:

\begin{eqnarray}
& & \mathbb{P} \Big( \rho_{k,1}>0 \Big| X = x \Big) \\
&=& \sum_{m=0}^{k-1} \left( \begin{array}{c} N-1 \\ m \end{array} \right) P_X(x,0)^m \Big(1-P_X(x,0) \Big)^{N-1-m} \\
& \leq & \sum_{m=0}^{k-1} N^m \Big(1-P_X(x,0) \Big)^{N-k} \\
& \leq & k N^k \Big(1-P_X(x,0) \Big)^{N-k} \\
& \leq & k N^k e^ {-(N-k)P_X(x,0)} 
\end{eqnarray} 

Now let's consider the case when $\rho_{k,1}=0$. We can write the term $\zeta_1$ in the form: 
\begin{equation}\zeta_1 = \psi(\tilde{k}_1) + \sum_{l=1}^d \left(  \mathbf{1}_{\{ \gls{pa_xl_var} \neq \emptyset \} } \log (n_{\gls{pa_xl_var}}^{(1)}+1) - \log (n_{\gls{pa_n_xl_var}}^{(1)}+1) \right) + K_N \nonumber \end{equation}

The term $K_N$ depends only on $N$ and the structure of the Bayesian model $\gls{graph_dist}_X$ and is independent of the observed samples, and in general is equal to:
\begin{equation}K_N =  -\log C_{d_X} + \sum_{l=1}^d \left( \log C_{d_{\gls{pa_n_xl_var}}}-\log C_{d_{\gls{pa_xl_var}}} \right) + \left( \sum_{l=1}^d \mathbf{1}_{\{ \gls{pa_xl_var} = \emptyset \}}-1 \right) \log N. \end{equation}

In which the terms $C_{d_S}$ indicate the volume of a unit ball in an $d_S$-dimensional space $\mathcal{S}$. Since we are using $\ell_\infty$-norm in our algorithm and proofs, all $C_{d_S}$ terms will be equal to $1$ and hence $K_N = \left( \sum_{l=1}^d \mathbf{1}_{\{ \gls{pa_xl_var} =\emptyset \}}-1 \right) \log N$ as it appeared in Algorithm \ref{algm:gdm}.

%
%
%
%
Then for the case of $\rho_{k,1}=0$ we can write:

\begin{eqnarray}
& & \Big| \mathbb{E}[ \zeta_1 | X=x, \rho_{k,1} =0 ] - \log f(x) \Big| \\
& = & \bigg| \mathbb{E} \bigg[ \psi(\tilde{k}_i) + \sum_{l=1}^d \left( \mathbf{1}_{ \{ \gls{pa_xl_var} \neq \emptyset \} } \log (n_{\gls{pa_xl_var}}^{(1)}+1) - \log (n_{\gls{pa_n_xl_var}}^{(1)}+1) \right) \nonumber\\ 
& & + \left( \sum_{l=1}^d \mathbf{1}_{ \{ \gls{pa_xl_var}= \emptyset \} } -1 \right) \log N \bigg| X=x, \rho_{k,1} =0 \bigg] \nonumber \\
& & - \log P_X(x,0) \prod_{l=1}^d \frac{P_{\gls{pa_xl_var}}\left( \gls{pa_xl_val} ,0 \right)}{P_{\gls{pa_n_xl_var}}\left( \gls{pa_n_xl_val} ,0 \right)} \bigg| \\
& \leq & \left| \mathbb{E}[ \psi(\tilde{k}_1) | X=x, \rho_{k,1} =0 ] - \log N P_X(x,0) \right| \label{eq:omega2_term1} \\
& & + \sum_{l=1}^d \left| \mathbb{E}[ \log (n_{\gls{pa_n_xl_var}}^{(1)}+1) |X=x, \rho_{k,1} =0 ] - \log N P_{\gls{pa_n_xl_var}}\left( \gls{pa_n_xl_val} ,0 \right) \right| \label{eq:omega2_term2} \\
& & + \sum_{l=1}^d \mathbf{1}_{ \{ \gls{pa_xl_var} \neq \emptyset \} } \left| \mathbb{E}[ \log (n_{\gls{pa_xl_var}}^{(1)}+1) | X=x, \rho_{k,1} =0 ] - \log N P_{\gls{pa_xl_var}}\left(  \gls{pa_xl_val} ,0 \right) \right| \hspace{.4in} \label{eq:omega2_term3}
\end{eqnarray}

We notice that $\tilde{k}_1$ is the number of samples among $\{ x^{(i)} \}_{i=1}^N$ such that $X_i = x$, where each sample is independently equal to $x$ with probability $P_X(x,0)$. Therefore the distribution of $\tilde{k}_1$ is $\text{Bino}\left( N,P_{X}(x,0) \right)$. Similarly, for any $S \subset X$, $n_{S,1}+1$ is the number of samples among $\{ x^{(i)} \}_{i=1}^N$ such that $s^{(i)}=s$; i.e. projection of $x^{(i)}$ over $\mathcal{S}$ is equal to the projection of $x$ over $\mathcal{S}$. Thus $n_{S,1}+1 \sim \text{Bino}\left( N,P_S(s,0) \right)$. In addition to that, the event $\rho_{k,1}$=0 is equivalent to $\tilde{k}_1 \geq k$. Thus to upperbound term \ref{eq:omega2_term1} and any of the individual terms inside the summations of terms \ref{eq:omega2_term2} and \ref{eq:omega2_term3} we propose the lemma below:

\begin{lemma}\label{lemma:omega2_upperbound}
If $X$ is distributed as $\text{Bino}(N,p)$ and $m \geq 0$ , then: 
\begin{eqnarray} \left| \mathbb{E} \left[ \log(X+m) \middle| X \geq k \right] - \log (Np) \right| \leq U(k,N,m,p) \end{eqnarray} 
Where $U(k,N,m,p)$ is given by:
\begin{equation}
U(k,N,m,p) = \max \left\{ \left| \log \left( \frac{1+\frac{m}{Np}}{1 - \exp\left( -2 \frac{(Np-k)^2}{N} \right)} \right) \right|
 , \frac{1}{1 - \exp\left( -2 \frac{(Np-k)^2}{N} \right)} \frac{3}{2Np} \right\} \end{equation}
\end{lemma}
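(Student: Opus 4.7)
The plan is to bound the two deviation directions separately and then combine them under the $\max$. Throughout I will treat $X\sim\text{Bino}(N,p)$ with $Np>k$ (else the bound is trivial) and use Hoeffding's inequality $\mathbb{P}(X<k)\leq\exp(-2(Np-k)^2/N)$, so that $\mathbb{P}(X\geq k)\geq 1-\exp(-2(Np-k)^2/N)$.

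First, for the upper direction, I would apply Jensen's inequality to the concave function $\log$:
\begin{equation*}
\mathbb{E}\bigl[\log(X+m)\,\big|\,X\geq k\bigr]\;\leq\;\log\bigl(\mathbb{E}[X+m\mid X\geq k]\bigr).
\end{equation*}
The conditional mean in turn is bounded by
$\mathbb{E}[X+m\mid X\geq k]\leq(Np+m)/\mathbb{P}(X\geq k)$,
and plugging in the Hoeffding bound for $\mathbb{P}(X\geq k)$ gives
$\mathbb{E}[\log(X+m)\mid X\geq k]-\log(Np)\leq\log\bigl((1+m/(Np))/(1-e^{-2(Np-k)^2/N})\bigr)$, which is precisely the first argument of the $\max$.

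Second, for the lower direction, I would use the tangent-line inequality coming from concavity of $\log$ at the point $y=X+m$, namely $\log(Np)\leq\log(X+m)+(Np-(X+m))/(X+m)$, which rearranges to
\begin{equation*}
\log(X+m)\;\geq\;\log(Np)+1-\frac{Np}{X+m}.
\end{equation*}
Taking conditional expectation yields
$\log(Np)-\mathbb{E}[\log(X+m)\mid X\geq k]\leq Np\cdot\mathbb{E}[1/(X+m)\mid X\geq k]-1$. The problem thus reduces to bounding $\mathbb{E}[1/(X+m)\mid X\geq k]$. Here I would split the event $\{X\geq k\}$ into a ``typical'' piece $\{X\geq Np/2\}$ and a Hoeffding tail $\{k\leq X<Np/2\}$: on the typical piece use $1/(X+m)\leq 2/(Np)$ together with a second-order expansion $\log(1+u)\geq u-u^2$ around $u=(X-Np)/Np$ to extract a refined constant $3/(2Np)$; on the tail absorb the contribution via $\mathbb{P}(X<Np/2)\leq e^{-2(Np-k)^2/N}$. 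Dividing by $\mathbb{P}(X\geq k)$ produces the factor $1/(1-e^{-2(Np-k)^2/N})$, yielding the second argument of the $\max$.

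The two one-sided bounds together give $|\mathbb{E}[\log(X+m)\mid X\geq k]-\log(Np)|\leq\max\{A,B\}=U(k,N,m,p)$. The main obstacle is the lower direction: the tangent-line reduction is clean, but the reciprocal moment $\mathbb{E}[1/(X+m)\mid X\geq k]$ is delicate because $1/(X+m)$ can be large when $X$ is near $k$, so one has to quantitatively trade off the Taylor expansion on the bulk against the Hoeffding mass on the small-$X$ tail to obtain the clean $3/(2Np)$ constant rather than something blowing up in $1/k$. Once that reciprocal-moment estimate is in hand, the rest is algebraic bookkeeping.
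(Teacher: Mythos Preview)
Your upper direction is correct and is exactly what the paper does: Jensen plus the Hoeffding lower bound on $\mathbb{P}(X\ge k)$ immediately yields the first term in the $\max$.

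The lower direction, however, has a concrete error and a conceptual muddle. First, the tail estimate you invoke, $\mathbb{P}(X<Np/2)\le e^{-2(Np-k)^2/N}$, is not what Hoeffding gives; the correct bound is $\mathbb{P}(X<Np/2)\le e^{-Np^2/2}$, a different exponent that has nothing to do with $k$. Your split therefore will not produce the specific prefactor $1/(1-e^{-2(Np-k)^2/N})$ appearing in $U$. Second, after reducing via the tangent line to $Np\cdot\mathbb{E}[1/(X+m)\mid X\ge k]-1$, you then speak of applying a ``second-order expansion $\log(1+u)\ge u-u^2$'' on the bulk; but the reciprocal moment no longer involves a logarithm, so this step does not fit the reduction you just performed. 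You are effectively mixing two distinct strategies without carrying either through.

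The paper's route for the lower direction is different and cleaner. It first drops $m$ (since $\log(X+m)\ge\log X$, this only strengthens the inequality), then uses the second-order Taylor expansion with Lagrange remainder at the point $Np$,
\[
\log X=\log(Np)+\frac{X-Np}{Np}-\frac{(X-Np)^2}{2\zeta^2},\qquad \zeta\in[\min(X,Np),\max(X,Np)],
\]
and bounds the remainder by $(X-Np)^2/(2X^2)+(X-Np)^2/(2(Np)^2)$. The linear term has nonnegative conditional mean because $\mathbb{E}[X\mid X\ge k]\ge Np$, so it can be discarded. The second remainder term is handled by $\mathbb{E}[(X-Np)^2\mid X\ge k]\le \mathrm{Var}(X)/\mathbb{P}(X\ge k)$. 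For the first remainder term the paper uses a combinatorial trick you are missing: for $i\ge 4$ one has $2i^2\ge(i+1)(i+2)$, and then the identity $\binom{N}{i}/((i+1)(i+2))=\binom{N+2}{i+2}/((N+1)(N+2))$ converts the sum into an expectation under $\mathrm{Bino}(N+2,p)$, from which $\mathbb{E}[(X-Np)^2/(2X^2)\mid X\ge k]\le (Np)^{-1}/\mathbb{P}(X\ge k)$ follows. Summing the two pieces gives exactly $\tfrac{3}{2Np}\cdot(1-e^{-2(Np-k)^2/N})^{-1}$. This avoids any bulk/tail split and any reciprocal-moment estimate, and is what produces the precise constant in $U$.
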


\begin{proof}
Please see Section \ref{sec:lemma_omega2_upperbound_proof}.
\end{proof}

\begin{remark}
Since the Assumption \ref{assumptions}.\ref{assumption_kn} states that $k/N \rightarrow 0$ as $N \rightarrow \infty$, then $(Np-k)^2/N= N (p-k/N)^2 \rightarrow \infty$, and the upperbound $U(k,N,m,p)$  will converge to $0$ as $N \rightarrow \infty$ for any $p$.
\end{remark}



From Lemma \ref{lemma:omega2_upperbound} we have:


\begin{eqnarray} & & \Big| \mathbb{E}[ \log (n_S^{(1)}+1) | X=x, \rho_{k,1} =0 ] - \log N P_S(s,0) \Big| \\
& = & \Big| \mathbb{E}[ \log (n_S^{(1)}+1) | X=x, n_S^{(1)}+1 \geq k ] - \log N P_S(s,0) \Big| \leq U \Big( k,N,0,P_S(s,0) \Big)  \end{eqnarray}

For any of the individual terms inside the summations of terms \ref{eq:omega2_term2} and \ref{eq:omega2_term3}. For the term \ref{eq:omega2_term1} we notice that $| \psi(\tilde{k}_1)-\log(\tilde{k}_1)| \leq 1/\tilde{k}_1 \leq 1/k$ \cite{bernardo1976algorithm}. Thus this term can be bounded similarly by $ U \left( k,N,0,P_X(x,0) \right)+1/k$. By combining all these bounds, we obtain:
\begin{eqnarray}
& & \Big| \mathbb{E}[ \zeta_1 | X=x, \rho_{k,1} =0 ] - \log f(x) \Big| \\
& \leq &  \sum_{l=1}^d \left( U \Big( k,N,0,P_{\gls{pa_xl_var}}(\gls{pa_xl_val},0) \Big) +  U \Big( k,N,0, P_{\gls{pa_n_xl_var}}(\gls{pa_n_xl_val},0) \Big) \right) \nonumber \\
& & + U \Big( k,N,0,P_X(x,0) \Big) + \frac{1}{k}
\end{eqnarray}

Assumption \ref{assumptions}.\ref{assumption_finitediscrete} implies that discrete probabilities and hence $U \Big( k,N,0,p \Big)$ terms are bounded; i.e. there exists a $\hat{p}$ such that for any $x \in \Omega_2$:
\begin{equation}
\left\{ \begin{array}{lr}
U \Big( k,N,0,P_X(x,0) \Big) \leq U \Big( k, N, 0, \hat{p} \Big) & \\
U \Big( k, N, 0, P_{\gls{pa_xl_var}}(\gls{pa_xl_val},0) \Big) \leq U \Big( k, N, 0, \hat{p}\Big) & \text{ for } l=1, \ldots, d \\
U \Big( k, N, 0, P_{\gls{pa_n_xl_var}}(\gls{pa_n_xl_val},0),0) \Big) \leq U \Big( k, N, 0, \hat{p} \Big) & \text{ for } l=1, \ldots, d \\
\end{array} \right.
\end{equation}

Therefore:
\begin{eqnarray}
\Big| \mathbb{E}[ \zeta_1 | X=x, \rho_{k,1} =0 ] - \log f(x) \Big| \leq  \left( 2d +1 \right) U \Big( k, N, 0, \hat{p} \Big) + \frac{1}{k}
\end{eqnarray}

If we combine it with the case of $\rho_{k,i}>0$, we obtain that:
\begin{eqnarray}
& & \Big| \mathbb{E}[ \zeta_1 |X=x] - \log f(x) \Big|\\
& \leq & \Big| \mathbb{E}[ \zeta_1 | X=x, \rho_{k,1} >0 ] - \log f(x) \Big| \times \mathbb{P} (\rho_{k,1} >0 ) \nonumber\\
& & + \Big| \mathbb{E}[ \zeta_1 | X=x, \rho_{k,1} =0 ] - \log f(x) \Big| \times \mathbb{P} (\rho_{k,1} =0 ) \\
& \leq & \Big( (2d + 1 ) \log N + \left| \log f(x) \right| \Big) k N^k e^ {-(N-k)P_X(x,0)} + \left( 2d +1 \right) U \Big( k, N, 0, \hat{p} \Big) + \frac{1}{k} \hspace{.5in}
\end{eqnarray}

Where we used the fact that $| \zeta_1 | \leq \left( 2d+ 1 \right) \log N$. Integrating over $\Omega_2$, we can write:
\begin{eqnarray}
& & \int_{\Omega_2} \Big| \mathbb{E}[ \zeta_1 |X=x] - \log f(x) \Big| d\mathbb{P}_X\\
& \leq & \left((2d + 1 ) \log N + \int_{\Omega_2} \left| \log f(x) \right| d\mathbb{P}_X \right) k N^k e^ {-(N-k) \inf_{x\in \Omega_2} P_X(x,0)} \\
& & +  \left( 2d +1 \right) U \Big( k, N, 0, \hat{p} \Big) + \frac{1}{k}
\end{eqnarray}

By Assumption \ref{assumptions}.\ref{assumption_kn}, $k$ goes to infinity as $N$ goes to infinity, so $1/k$ vanishes as $N$ grows boundlessly. The term $U \Big( k, N, 0, \hat{p} \Big)$ will also converge to zero as we saw. From assumption \ref{assumptions}.\ref{assumption_logintegrable}, $\int_{\Omega_2} \left| \log f(x) \right| d\mathbb{P}_X < +\infty$. Thus the first term also converges to $0$ as $N \rightarrow \infty$. Thus:

\begin{equation} \lim_{N \rightarrow \infty}  \int_{\Omega_2} \Big| \mathbb{E}[ \zeta_1 | X=x] - \log f(x) \Big| d\mathbb{P}_X = 0 \end{equation} 

\vspace{.1in}
\textbf{For $\boldsymbol{x \in \Omega_3}$:}

In this case, $P_X(x,r)$ is a monotonic function of $r$ such that $P_X(x,0)=0$ and $\lim_{r \rightarrow \infty} P_X(x,r)=1$. Hence we can view $\log P_X(x,r) \prod_{l=1}^d \frac{P_{\gls{pa_xl_var}}\left( \gls{pa_xl_val} ,r \right)}{P_{\gls{pa_n_xl_var}}\left( \gls{pa_n_xl_val} ,r \right)}$ as a function of $P_X(x,r)$ and it converges to $\log f(x)$ as $P_X(x,r) \rightarrow 0$, for almost every $x$. Since $\mathbb{P}_X(\Omega_3) \leq 1 < \infty$ and we know $\int_{\Omega_3} \left| \log f(x) \right| d\mathbb{P}_X < \infty$, By Egorov's theorem, for any $ \epsilon>0$, there exists a subset $ E \subseteq \Omega_3$ with $\mathbb{P}_X(E) < \epsilon$ and $\int_E \left| \log f(x) \right| d\mathbb{P}_X < \epsilon$, such that the term $\log P_X(x,r) \prod_{l=1}^d \frac{P_{\gls{pa_xl_var}}\left( \gls{pa_xl_val} ,r \right)}{P_{\gls{pa_n_xl_var}}\left( \gls{pa_n_xl_val} ,r \right)}$ converges uniformly on $\Omega_3 \setminus E$ as $P_X(x,r) \rightarrow 0$. Now let's assume $\epsilon_N$ is a sequence converging to $0$. Consequently, The corresponding sets $E_N$ will also create a sequence. For any fixed $N$ and for $x \in E_N$, we notice that $| \zeta_1 | \leq \left( 2 d+ 1 \right) \log N$, so we have:
\begin{eqnarray}
& & \int_{E_N} \Big| \mathbb{E}[\zeta_1|X=x] - \log f(x) \Big| d\mathbb{P}_X \\
& \leq & \int_{E_N} \Big( \left( 2d + 1 \right) \log N + \left| \log f(x) \right| \Big) d\mathbb{P}_X < \epsilon_N \Big( (2 d + 1 ) \log N + 1 \Big)
\end{eqnarray}

By choosing $\epsilon_N$ such that $\epsilon_N \log N \rightarrow 0$ as $N \rightarrow \infty$ (For example $\epsilon_N=1/N$), we will have the limit
$\lim_{N \rightarrow \infty} \int_E \left| \mathbb{E}[\zeta_1|X=x] - \log f(x) \right| d\mathbb{P}_X=0$.

For any $x \in \Omega_3 \setminus E_N$, since $P_X(x,0)=0$, we know that $\mathbb{P} \left( \rho_{k,1}=0 | X=x \right)=0$, so $\tilde{k}_1=k$ with probability 1. Conditioning on $\rho_{k,1}=r>0$, the term $ \left| \mathbb{E}[\zeta_1|X=x] - \log f(x) \right| $ can be decomposed as:
\begin{eqnarray}
& & \Big| \mathbb{E}[\zeta_1|X=x] - \log f(x) \Big| \\
& = & \left| \int_{r=0}^{\infty} \Big( \mathbb{E}[\zeta_1|X=x,\rho_{k,1}=r] - \log f(x) \Big) dF_{\rho_{k,1}}(r) \right| \\ 
& \leq & \left| \int_{r=0}^{\infty} \left( \log P_X(x,r) \prod_{l=1}^d \frac{P_{\gls{pa_xl_var}}\left( \gls{pa_xl_val} ,r \right)}{P_{\gls{pa_n_xl_var}}\left( \gls{pa_n_xl_val} ,r \right)} - \log f(x) \right) dF_{\rho_{k,1}}(r) \right| \label{eq:th1_term1} \\
& & + \left| \int_{r=0}^{\infty} \left( \psi(k) - \log N - \log P_X(x,r) \right) dF_{\rho_{k,1}}(r) \right| \label{eq:th1_term2} \\
& & + \sum_{l=1}^d \bigg| \int_{r=0}^{\infty} \Big( \mathbb{E} \left[ \log(n_{\gls{pa_n_xl_var}}^{(1)}+1)|(X,\rho_{k,1})=(x,r) \right] \nonumber\\ 
& & \hspace{.2in} - \log N P_{\gls{pa_n_xl_var}}(\gls{pa_n_xl_val},r) \Big) dF_{\rho_{k,1}}(r) \bigg| \label{eq:th1_term3}\\
& & + \sum_{l=1}^d \mathbf{1}_{ \{ \gls{pa_xl_var} \neq \emptyset \} } \bigg| \int_{r=0}^{\infty} \Big( \mathbb{E} \left[ \log(n_{\gls{pa_xl_var}}^{(1)}+1)|(X,\rho_{k,1})=(x,r) \right] \nonumber \\
& & \hspace{.2in} - \log N P_{\gls{pa_xl_var}}(\gls{pa_xl_val},r) \Big) dF_{\rho_{k,1}}(r) \bigg| \label{eq:th1_term4}
\end{eqnarray}

In which $F_{\rho_{k,1}}(r)$ is the CDF of the $k$-nearest neighbor distance $\rho_{k,1}$ given $X=x$. The derivative of this CDF with respect to $P_X(x,r)$ is given by:
\begin{equation} \frac{d F_{\rho_{k,1}}(r)}{d P_X(x,r)} = \frac{(N-1)!}{(k-1)!(N-k-1)!}P_X(x,r)^{k-1} \Big( 1- P_X(x,r) \Big)^{N-k-1}  \end{equation}
\\\\ 
\noindent\textit{Upper bound for the term (\ref{eq:th1_term1}) :}
\\ Since $P_X(x,r) \prod_{l=1}^d \frac{P_{\gls{pa_xl_var}}\left( \gls{pa_xl_val} ,r \right)}{P_{\gls{pa_n_xl_var}}\left( \gls{pa_n_xl_val} ,r \right)}$ converges to $f(x)$ uniformly over $\Omega_3 \setminus E$ as $P_X(x,r) \rightarrow 0$, So for any $\delta_N$  and any $x \in \Omega_3 \setminus E$, there exists $r_1$ such that for any $r < r_1$: 
\begin{equation} \left| \log P_X(x,r) \prod_{l=1}^d \frac{P_{\gls{pa_xl_var}}\left( \gls{pa_xl_val} ,r \right)}{P_{\gls{pa_n_xl_var}}\left( \gls{pa_n_xl_val} ,r \right)} - \log f(x) \right| < \delta_N \end{equation}
Let $r_2$ be the value of $r$ such that $P_X(x,r_2)=4 k \log N /N$, and take $r_N= \min \{ r_1, r_2 \}$. Thus $r_N$ depends on $x$, but $\delta_N$ does not depend on $x$ and $\lim_{N \rightarrow \infty} \delta_N = 0$. Therefore, (\ref{eq:th1_term1}) can be upper bounded as:

\begin{eqnarray}
& & \left| \int_{r=0}^{\infty} \left( \log P_X(x,r) \prod_{l=1}^d \frac{P_{\gls{pa_xl_var}}\left( \gls{pa_xl_val} ,r \right)}{P_{\gls{pa_n_xl_var}}\left( \gls{pa_n_xl_val} ,r \right)} - \log f(x) \right) dF_{\rho_{k,1}}(r) \right| \\
& \leq & \int_{r=0}^{r_N}  \left| \log P_X(x,r) \prod_{l=1}^d \frac{P_{\gls{pa_xl_var}}\left( \gls{pa_xl_val} ,r \right)}{P_{\gls{pa_n_xl_var}}\left( \gls{pa_n_xl_val} ,r \right)} - \log f(x) \right| dF_{\rho_{k,1}}(r) \nonumber \\
& & + \int_{r=r_N}^{\infty}  \left| \log P_X(x,r) \prod_{l=1}^d \frac{P_{\gls{pa_xl_var}}\left( \gls{pa_xl_val} ,r \right)}{P_{\gls{pa_n_xl_var}}\left( \gls{pa_n_xl_val} ,r \right)} - \log f(x) \right| dF_{\rho_{k,1}}(r) \\
& \leq & \delta_N \mathbb{P} \left( \rho_{k,1} \leq r_N | X=x \right) \nonumber \\
& & + \left(  \sup_{r \geq r_N} \left| \log P_X(x,r) \prod_{l=1}^d \frac{P_{\gls{pa_xl_var}}\left( \gls{pa_xl_val} ,r \right)}{P_{\gls{pa_n_xl_var}}\left( \gls{pa_n_xl_val} ,r \right)} - \log f(x) \right| \right) \mathbb{P} \left( \rho_{k,1} \geq r_N | X=x \right) \hspace{.3in}
\end{eqnarray}

First, $\mathbb{P} \left( \rho_{k,1} \leq r_N | X=x \right)$ is smaller than 1. Secondly, since $P_X(x,r) \geq 4k \log N / N > 1/N$ for $r\geq r_N$, so we have 
$\left| \log P_X(x,r) \right| \leq \log N$. The same bounds apply for any $\left| P_S(s,r) \right|$ as well, as $P_S(s,r) \geq P_X(x,r)$. Thus by triangle inequality, the supremum is upper-bounded by $(2d +1 ) \log N + \left| \log f(x) \right|$. Finally, the probability $\mathbb{P} \left( \rho_{k,1} \geq r_N | X=x \right)$ is upper bounded by:

\begin{eqnarray}
& & \mathbb{P} \left( \rho_{k,1} \geq r_N | X=x \right) \\
& = & \sum_{m=0}^{k-1} \left( \begin{array}{c} N-1 \\ m \end{array} \right) P_X(x,r_N)^m \left( 1 - P_X(x,r_N) \right)^{N-1-m} \\
& \leq & \sum_{m=0}^{k-1} N^m \left( 1 - P_X(x,r_N) \right)^{N-k} \\
& = & k N^k \left( 1-\frac{4k \log N}{N} \right)^{N/2} \\
& \leq & kN^k e^{-2k \log N} = \frac{k}{N^k}
\end{eqnarray}

for $N$ large enough such that $N - k > N/2$. Therefore \ref{eq:th1_term1} is upperbounded by:
\begin{eqnarray}
& & \left| \int_{r=0}^{\infty} \left( \log P_X(x,r) \prod_{l=1}^d \frac{P_{\gls{pa_xl_var}}\left( \gls{pa_xl_val} ,r \right)}{P_{\gls{pa_n_xl_var}}\left( \gls{pa_n_xl_val} ,r \right)} - \log f(x) \right) dF_{\rho_{k,1}}(r) \right| \\
& \leq & \delta_N + k \frac{(2 d +1 )\log N + \left| \log f(x) \right| }{N^k}
\end{eqnarray}
\\\\
\noindent\textit{Upper bound for the term (\ref{eq:th1_term2}) :}
\\We have:
\begin{eqnarray}
& & \int_{r=0}^{\infty} \left( \psi(k) - \log N - \log P_X(x,r)  \right) dF_{\rho_{k,1}}(r) \\
& = & \psi(k)-\log N - \frac{(N-1)!}{(k-1)!(N-k-1)!} \nonumber \\
& & \times \int_{r=0}^{\infty} \Big( P_X(x,r)^{k-1}(1-P_X(x,r))^{N-k-1}\log P_X(x,r) \Big) dP_X(x,r) \\
& = & \psi(k)-\log N - \frac{(N-1)!}{(k-1)!(N-k-1)!} \int_{t=0}^{1} \left( t^{k-1}(1-t)^{N-k-1}\log t \right) dt \\
& = & \psi(k)-\log N - \left( \psi(k) - \psi(N) \right) \\
& = & \psi(N) - \log(N)
\end{eqnarray}
We notice that for any $N$ we have $\psi(N) < \log N$ and $\lim_{N \rightarrow \infty} \left( \psi(N) - \log N \right)=0$.
\\\\
\noindent\textit{Upper bound for the individual terms of (\ref{eq:th1_term3}) and (\ref{eq:th1_term4}) :}
\\The upper bound for each of these terms follows the same logic, so we do the proof for an arbitrary subset $S \in \left\{ \gls{pa_xl_var} \right\}_{l=1}^d \cup \left\{ \gls{pa_n_xl_var} \right\}_{l=1}^d$. 

The distributions of $n_S^{(1)}$ for all $S \in \left\{ \gls{pa_xl_var} \right\}_{l=1}^d \cup \left\{ \gls{pa_n_xl_var} \right\}_{l=1}^d$ are given by the lemma below:

\begin{lemma}\label{lemma:n_dist}
Given $X=x$ and $\rho_{k,1}=r>0$ and $s$ being the projection of $x$ over $\mathcal{S}$, the term $n_S^{(1)}-k$ is distributed as $\text{Bino}\left( N-k-1, \frac{P_S(s,r)-P_X(x,r)}{1-P_X(x,r)} \right)$.
\end{lemma}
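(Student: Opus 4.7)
\textbf{Proof plan for Lemma \ref{lemma:n_dist}.} The plan is to decompose the $N-1$ other samples into those inside the closed $\ell_\infty$-ball $\overline{B_r}(x)\subset\mathcal{X}$ and those outside, exploit the fact that an $\ell_\infty$-ball in $\mathcal{X}$ projects onto an $\ell_\infty$-ball in the coordinate subspace $\mathcal{S}$, and then use independence across samples to obtain the Binomial distribution.

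First I would condition on $X^{(1)}=x$ and $\rho_{k,1}=r>0$. In this regime (relevant to $\Omega_3$ in the proof of Theorem~\ref{thm:mean_convergence}), the continuity of $P_X(x,\cdot)$ at $r$ makes ties negligible, and the event $\rho_{k,1}=r$ determines that exactly $k$ of the samples $X^{(2)},\ldots,X^{(N)}$ lie in $\overline{B_r}(x)$ while the remaining $N-1-k$ lie outside. By the exchangeability of the i.i.d.\ samples, I can without loss assume the labels $2,\ldots,k+1$ correspond to the inside samples and $k+2,\ldots,N$ to the outside samples. Because we are using the $\ell_\infty$-norm, any point $a\in\overline{B_r}(x)$ satisfies $\|a_S-s\|_\infty\le\|a-x\|_\infty\le r$, so all $k$ inside samples automatically lie in $\overline{B_r}(s)$ and contribute exactly $k$ to $n_S^{(1)}$.

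Next I would analyze the contribution of the outside samples. Conditional on $X^{(i)}\notin\overline{B_r}(x)$, the distribution of $X^{(i)}$ is $\mathbb{P}_X$ restricted and renormalized to the complement of $\overline{B_r}(x)$. The probability that such a point has its projection in $\overline{B_r}(s)$ is
\begin{equation}
\Pr\bigl(S^{(i)}\in\overline{B_r}(s)\,\big|\,X^{(i)}\notin\overline{B_r}(x)\bigr)=\frac{\mathbb{P}_X\bigl(\{X\in\overline{B_r}(s)\}\setminus\overline{B_r}(x)\bigr)}{1-P_X(x,r)}=\frac{P_S(s,r)-P_X(x,r)}{1-P_X(x,r)},
\end{equation}
where in the numerator I used again that $\overline{B_r}(x)$ projects into $\overline{B_r}(s)$, so $\{X\in\overline{B_r}(x)\}\subseteq\{S\in\overline{B_r}(s)\}$. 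Call this probability $p$.

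Finally, since the outside samples are conditionally i.i.d.\ given the conditioning event, the indicators $\mathbf{1}\{S^{(i)}\in\overline{B_r}(s)\}$ for $i=k+2,\ldots,N$ are i.i.d.\ Bernoulli$(p)$, so their sum is $\mathrm{Bino}(N-k-1,p)$. Adding the deterministic contribution of $k$ from the inside samples gives $n_S^{(1)}=k+\mathrm{Bino}(N-k-1,p)$, i.e.\ $n_S^{(1)}-k\sim\mathrm{Bino}\bigl(N-k-1,(P_S(s,r)-P_X(x,r))/(1-P_X(x,r))\bigr)$, as claimed. The main subtlety is justifying the exchangeability reduction and the conditional i.i.d.\ property under the event $\{\rho_{k,1}=r\}$ in a measure-theoretic sense (since that event has probability zero when $P_X(x,\cdot)$ is continuous); this is handled by passing to the regular conditional distribution of the order statistics, which has a clean product form away from ties.
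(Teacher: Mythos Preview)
Your argument is correct and is exactly the standard derivation: split the $N-1$ other samples into the $k$ inside $\overline{B_r}(x)$ (which automatically land in $\overline{B_r}(s)$ under $\ell_\infty$) and the $N-k-1$ outside, each of which independently falls in $\overline{B_r}(s)$ with the stated conditional probability. The paper does not give its own proof here but simply refers to Lemma~B.2 of \cite{gao2017mixture}, whose argument is precisely the one you wrote down.
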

\begin{proof}
Please see the proof of Lemma B.2 in \cite{gao2017mixture}.
\end{proof}

The bound on $\mathbb{E}[ \log (n_S^{(1)}+1) | X=x,\rho_{k,1}=r]$ is given by the Lemma B.3 in \cite{gao2017mixture} which we restate here:

\begin{lemma}
If $X$ is distributed as $\text{Bino}(N, p)$, then $ \left| \mathbb{E} [ \log(X + k) ] - \log( N p + k) \right| \leq C/(N p + k)$ for some constant C.
\end{lemma}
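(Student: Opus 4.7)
The plan is to establish the bound in two directions: Jensen's inequality handles one side, and a second-order Taylor expansion combined with a Chernoff tail bound handles the other. Concavity of $\log$ immediately yields $\mathbb{E}[\log(X+k)] \leq \log(\mathbb{E}[X]+k) = \log(Np+k)$, so the quantity $\log(Np+k) - \mathbb{E}[\log(X+k)]$ is non-negative, and it suffices to produce a matching upper bound of order $1/(Np+k)$.

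For the upper bound, I would apply Taylor's theorem with Lagrange remainder to $g(y) = \log(y+k)$ expanded at $y = Np$: for each realization of $X$ there exists $\xi$ between $X$ and $Np$ such that
\begin{equation*}
\log(X+k) = \log(Np+k) + \frac{X - Np}{Np+k} - \frac{(X-Np)^2}{2(\xi+k)^2}.
\end{equation*}
Taking expectations, the linear term drops out since $\mathbb{E}[X]=Np$, so the problem reduces to showing $\tfrac{1}{2}\mathbb{E}[(X-Np)^2/(\xi+k)^2] = O(1/(Np+k))$.

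I would split this expectation using the event $\{X \geq Np/2\}$. On this event, $\xi \geq Np/2$, and since $Np/2 + k \geq (Np+k)/2$, we get $(\xi+k)^2 \geq ((Np+k)/2)^2$, so the contribution is at most $2\mathrm{Var}(X)/(Np+k)^2 = 2Np(1-p)/(Np+k)^2 \leq 2/(Np+k)$, which is of the required order. On the complementary event, the multiplicative Chernoff bound yields $\mathbb{P}(X < Np/2) \leq e^{-Np/8}$; combined with the crude bounds $(X-Np)^2 \leq (Np)^2$ and $\xi+k \geq k$, the contribution is at most $(Np)^2 e^{-Np/8}/(2k^2)$. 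Since the function $u^n e^{-u/8}$ is bounded on $u \geq 0$ for every $n$, we get $(Np)^2 e^{-Np/8}(Np+k)/k^2 = O(1)$ uniformly in $Np$ for fixed $k \geq 1$, so this tail piece is also $O(1/(Np+k))$.

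The main obstacle is keeping the constant $C$ under control on the tail event, where $\xi + k$ can be as small as $k$; the resolution is that the Chernoff exponential decay overwhelms the polynomial factor $(Np)^2$ arising there, thereby preserving the $1/(Np+k)$ rate uniformly. An alternative route would invoke known asymptotic expansions of binomial log-moments in terms of digamma functions (consistent with the $\psi(\cdot)$ corrections that appear elsewhere in the paper), but the Taylor-plus-Chernoff split outlined above is the most transparent and self-contained.
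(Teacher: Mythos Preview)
Your argument is correct. The paper itself does not prove this lemma but defers to Lemma~B.3 of \cite{gao2017mixture}; the technique there (and the closely related proof of Lemma~\ref{lemma:omega2_upperbound} in this paper) shares your skeleton---Jensen for one direction, second-order Taylor for the other---but controls the remainder differently. Instead of splitting on $\{X\ge Np/2\}$ and invoking a Chernoff tail bound, it bounds $\mathbb{E}\big[(X-Np)^2/(X+k)^2\big]$ directly via the binomial index-shifting identity
\[
\frac{1}{(i+1)(i+2)}\binom{N}{i}p^i(1-p)^{N-i}=\frac{1}{(N+1)(N+2)p^2}\binom{N+2}{i+2}p^{i+2}(1-p)^{N-i},
\]
which converts the awkward $1/(X+k)^2$ factor into a clean $\mathrm{Bino}(N+2,p)$ expectation. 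That route is more combinatorial and yields explicit constants; yours is more probabilistic and arguably more transparent, and both land at the same $O(1/(Np+k))$ rate.

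One phrasing point worth tightening: you write that the tail piece is $O(1)$ ``uniformly in $Np$ for fixed $k\ge 1$,'' which could be read as allowing the constant to depend on $k$. In fact your own bound already gives uniformity in $k$ as well: writing $u=Np$,
\[
\frac{u^2(u+k)e^{-u/8}}{k^2}=\frac{u^3 e^{-u/8}}{k^2}+\frac{u^2 e^{-u/8}}{k}\le \frac{M_3}{k^2}+\frac{M_2}{k}\le M_3+M_2\quad\text{for }k\ge 1,
\]
where $M_n=\sup_{u\ge 0}u^n e^{-u/8}<\infty$. This matters downstream, since the paper applies the lemma with $k\to\infty$ and needs $C/(k+1)\to 0$.
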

\begin{proof}
Please see the proof of Lemma B.3 in \cite{gao2017mixture}.
\end{proof}

 Thus we can write:

\begin{eqnarray}
& & \left| \int_{r=0}^{\infty} \left( \mathbb{E} \left[ \log(n_S^{(1)}+1)|(X,\rho_{k,1})=(x,r) \right] - \log N P_{S}(s,r) \right) dF_{\rho_{k,1}}(r) \right| \\
& \leq & \left| \int_{r=0}^{\infty} \Bigg( \mathbb{E} [ \log(n_S^{(1)}+1)|(X,\rho_{k,1})=(x,r) ] \right. \nonumber \\ 
& & \left. - \log\left( (N-k-1)\frac{P_S(s,r)-P_X(x,r)}{1-P_X(x,r)}+k+1 \right) \Bigg) dF_{\rho_{k,1}}(r) \right|\\
& & + \left| \int_{r=0}^{\infty} \left( \log \frac{(N-k-1)\frac{P_S(s,r)-P_X(x,r)}{1-P_X(x,r)}+k+1}{N P_S(s,r)} \right) dF_{\rho_{k,1}}(r) \right|\\
& \leq & \int_{r=0}^{\infty} \left| \Bigg( \mathbb{E} \left[ \log(n_S^{(1)}+1)|(X,\rho_{k,1})=(x,r) \right] \right. \nonumber \\ 
& & \left. - \log\left( (N-k-1)\frac{P_S(s,r)-P_X(x,r)}{1-P_X(x,r)}+k+1 \right) \Bigg) \right| dF_{\rho_{k,1}}(r) \label{eq:th1_terms456_1}\\
& & + \left| \mathbb{E}_r \left[ \log \frac{N(P_S(s,r)-P_X(x,r)) + (k+1)(1-P_S(s,r))}{N P_S(s,r)(1-P_X(x,r))} \right] \right| \label{eq:th1_terms456_2}
\end{eqnarray}

Where $\mathbb{E}_r$ denotes the expectation over the distribution $F_{\rho_{k,1}}$. By the Lemma B.3  in \cite{gao2017mixture}, the term in \ref{eq:th1_terms456_1} is upper bounded by:
\begin{eqnarray}
& & \int_{r=0}^{\infty} \Bigg| \Bigg( \mathbb{E} \left[ \log(n_S^{(1)}+1)|(X,\rho_{k,1})=(x,r) \right] \nonumber \\ 
& & - \log \left( (N-k-1)\frac{P_S(s,r)-P_X(x,r)}{1-P_X(x,r)}+k+1 \right) \Bigg) \Bigg| dF_{\rho_{k,1}}(r)\\
& \leq & \int_{r=0}^{\infty} \frac{C}{(N-k-1)\frac{P_S(s,r)-P_X(x,r)}{1-P_X(x,r)}+k+1} dF_{\rho_{k,1}}(r) \\
& \leq & \int_{r=0}^{\infty} \frac{C}{k+1} dF_{\rho_{k,1}}(r) = \frac{C}{k+1}
\end{eqnarray}

The last inequality follows from the fact that $P_S(s,r) > P_X(x,r)$ . For (\ref{eq:th1_terms456_2}), using the fact that $\log (x/y) \leq (x-y)/y$ and Cauchy-Schwarz inequality, we have the following:
\begin{eqnarray}
& & \mathbb{E}_r \left[ \log \frac{N(P_S(s,r)-P_X(x,r)) + (k+1)(1-P_S(s,r))}{N P_S(s,r)(1-P_X(x,r))} \right] \\ 
& \leq & \mathbb{E}_r \left[ \frac{N(P_S(s,r)-P_X(x,r)) + (k+1)(1-P_S(s,r))}{N P_S(s,r)(1-P_X(x,r))} -1 \right] \\
& = & \mathbb{E}_r \left[ \frac{(k+1-N P_X(x,r))(1-P_S(s,r))}{N P_S(s,r)(1-P_X(x,r))} \right] \\
& \leq & \sqrt{ \mathbb{E}_r \left[ \left( \frac{k+1-N P_X(x,r)}{N P_X(x,r)} \right)^2 \right] 
\mathbb{E}_r \left[ \left( \frac{P_X(x,r)(1-P_S(s,r))}{ P_S(s,r)(1-P_X(x,r))} \right)^2 \right]}
\end{eqnarray}

Note that $P_S(s,r) \geq P_X(x,r)$ for all $r$, so the second expectation term is always smaller than or equal to 1. For the first expectation, let $t=P_X(x,r)$ then we have:  
\begin{eqnarray}
& & \mathbb{E}_r \left[ \left( \frac{k+1-N P_X(x,r)}{N P_X(x,r)} \right)^2 \right] \\
& = & \int_{r=0}^{\infty} \left( \frac{k+1-N P_X(x,r)}{N P_X(x,r)} \right)^2 dF_{\rho_{k,1}}(r) \\
& = & \frac{(N-1)!}{(k-1)!(N-k-1)!} \int_{t=0}^1 \frac{(k+1-Nt)^2}{N^2t^2} t^{k-1} (1-t)^{N-k-1} dt \\
& = & \frac{(N-1)(N-2)(k+1)^2}{N^2(k-1)(k-2)}-\frac{2(N-1)(k+1)}{N(k-1)}+1
\end{eqnarray}

This term is upper bounded by $C_1(1/N+1/k)$ for some constant $C_1$ for $N$ and $k$ large enough. Thus:
\begin{equation} \mathbb{E}_r \left[ \log \frac{N(P_S(s,r)-P_X(x,r)) + (k+1)(1-P_S(s,r))}{N P_S(s,r)(1-P_X(x,r))} \right] \leq \sqrt{C_1(\frac{1}{N}+\frac{1}{k})} \end{equation}

Similarly, by using the fact that $\log(x/y)>(x-y)/x$ and Cauchy-Schwarz inequality again, we conclude that there are some constant $C_2>0$ such that:
\begin{equation} \mathbb{E}_r \left[ \log \frac{N(P_S(s,r)-P_X(x,r)) + (k+1)(1-P_S(s,r))}{N P_S(s,r)(1-P_X(x,r))} \right] \geq -\sqrt{C_2(\frac{1}{N}+\frac{1}{k})} \end{equation}

Therefore by combining all these bounds we obtain
\begin{eqnarray}
& & \left| \int_{r=0}^{\infty} \bigg( \mathbb{E} \left[ \log(n_S^{(1)}+1)|(X,\rho_{k,1})=(x,r) \right] - \log N P_{S}(s,r) \bigg) dF_{\rho_{k,1}}(r) \right| \\
& \leq & \frac{C}{k+1} + \sqrt{C'(\frac{1}{k}+\frac{1}{N})}
\end{eqnarray}
where $C'=\max\{ C_1,C_2\}$.

\vspace{.3in}
\noindent Now putting all the bounds for $\Omega_3 \setminus E_N$, we have:
\begin{eqnarray}
& & \int_{\Omega_3 \setminus E_N} \bigg| \mathbb{E} \left[ \zeta_1|X=x \right] - \log f(x) \bigg| d\mathbb{P}_X \\
& \leq & \delta_N + k\frac{(2 d + 1 ) \log N +  \int_{\mathcal{X}} \left| \log f(x) \right| d\mathbb{P}_X}{N^k} \nonumber \\
& & + \psi(N) - \log(N) + 2 d \left( \frac{C}{k+1} + \sqrt{C'(\frac{1}{k}+\frac{1}{N})} \right)   
\end{eqnarray}
From the assumptions \ref{assumptions}, $k$ increases as $N \rightarrow \infty$, and $ \int_{\mathcal{X}} \left| \log f(x) \right| d\mathbb{P}_X < \infty$. Therefore the whole upperbound vanishes as $N$ goes to infinity. Thus for the entire set $\Omega_3$ we have:
\begin{equation} \lim_{N \rightarrow \infty} \int_{\Omega_3} \bigg| \mathbb{E}[\zeta_1|X=x)] - \log f(x) \bigg| d\mathbb{P}_X = 0  \end{equation}


\subsection{Proof of Lemma \ref{lemma:RN_convergence}} \label{sec:lemma_RN_convergence_proof}

This proof is based on the Lebesgue-Besicovitch differentiation theorem (For example look at Theorem 1.32 from \cite{evans2018measure}), stated as:
\begin{theorem} \label{thm:leb-bes}
Let $\mu$ be a Radon measure on $\mathbb{R}^d$. For $f \in L_{loc}^1(\mu)$,
\begin{equation} \lim_{r \rightarrow 0} \frac{1}{\mu \left( B_r(x) \right)} \int_{B_r(x)} f d\mu = f(x)\end{equation}
for $\mu$-a.e. $x$.
\end{theorem}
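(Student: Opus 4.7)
The plan is to follow the classical proof of the Lebesgue differentiation theorem, adapted to a general Radon measure on $\mathbb{R}^d$. The key subtlety is that without a doubling condition $\mu(B_{2r}(x))\le C\mu(B_r(x))$, the usual Vitali covering lemma is insufficient and one must instead invoke the Besicovitch covering theorem, which exploits only the Euclidean geometry of $\mathbb{R}^d$ and is valid for any Radon measure.

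First I would introduce the centered maximal operator
\begin{equation}
  M_\mu f(x)\;=\;\sup_{r>0}\frac{1}{\mu(B_r(x))}\int_{B_r(x)}|f|\,d\mu,
\end{equation}
which is well-defined $\mu$-a.e.\ since $\mu(B_r(x))>0$ on $\mathrm{supp}\,\mu$, and then decompose the target limsup using $C_c$-density. Because continuous compactly supported functions are dense in $L^1(\mu)$ for any Radon measure, for any $\varepsilon>0$ pick $g\in C_c(\mathbb{R}^d)$ with $\|f-g\|_{L^1(\mu)}<\varepsilon$. For the continuous $g$ the averaging limit equals $g(x)$ at every $x$ by uniform continuity. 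Writing $A_r h(x):=\mu(B_r(x))^{-1}\int_{B_r(x)}h\,d\mu$ and applying the triangle inequality gives, for each $\alpha>0$,
\begin{equation}
  \mu\bigl\{x:\textstyle\limsup_{r\to 0}|A_r f(x)-f(x)|>\alpha\bigr\}
  \;\le\;\mu\{M_\mu(f-g)>\tfrac{\alpha}{2}\}+\mu\{|f-g|>\tfrac{\alpha}{2}\}.
\end{equation}
The second term is bounded by $2\varepsilon/\alpha$ via Chebyshev, so everything reduces to a weak-type $(1,1)$ bound on $M_\mu$.

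The hard step is proving $\mu\{M_\mu h>\lambda\}\le C_d\lambda^{-1}\|h\|_{L^1(\mu)}$ without a doubling hypothesis. My plan is: for each $x$ in the level set choose a ball $B_{r_x}(x)$ witnessing $\mu(B_{r_x}(x))<\lambda^{-1}\int_{B_{r_x}(x)}|h|\,d\mu$, truncate to a compact set $K$, invoke Besicovitch's theorem to extract $N_d$ countable subfamilies of these balls, each pairwise disjoint, whose union still covers the level set inside $K$, and then sum on each disjoint subfamily to get
\begin{equation}
  \mu\bigl(\{M_\mu h>\lambda\}\cap K\bigr)\;\le\;\sum_{j=1}^{N_d}\sum_i \mu\bigl(B_{r_i^{(j)}}(x_i^{(j)})\bigr)\;\le\;\frac{N_d}{\lambda}\,\|h\|_{L^1(\mu)}.
\end{equation}
Exhausting by $K\uparrow\mathbb{R}^d$, sending $\varepsilon\downarrow 0$, and taking a countable intersection over $\alpha_n\downarrow 0$ then yields the conclusion for $\mu$-a.e.\ $x$. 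The main obstacle is really the Besicovitch covering theorem itself — this is the dimension-only-dependent combinatorial input that replaces doubling and is the reason a single finite constant $N_d$ suffices for every Radon measure; I would cite it from Evans--Gariepy rather than reprove it, since the rest of the proof is just assembly of these ingredients.
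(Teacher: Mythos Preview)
Your sketch is correct and is precisely the standard proof of the Lebesgue--Besicovitch differentiation theorem: density of $C_c$ in $L^1(\mu)$, a weak-$(1,1)$ bound on the centered maximal function obtained from the Besicovitch covering theorem (which, unlike Vitali, requires no doubling and depends only on the dimension), and the usual $\limsup$ splitting. The one small gap is that you silently pass from $f\in L^1_{\mathrm{loc}}(\mu)$ to approximation in the global $L^1(\mu)$ norm; this is harmless since the conclusion is local and one may first restrict to a large ball, but it should be said.

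However, you should be aware that the paper does not prove this theorem at all. It is quoted verbatim as a known result (with a reference to Evans--Gariepy, \emph{Measure Theory and Fine Properties of Functions}, Theorem~1.32) and then applied as a black box to establish Lemma~\ref{lemma:RN_convergence}, by taking $\mu=\overline{\mathbb{P}}_X$ and $f=d\mathbb{P}_X/d\overline{\mathbb{P}}_X$. So there is no ``paper's own proof'' to compare against: your write-up supplies exactly the argument that the cited reference contains, whereas the paper simply invokes it. If you are writing up this result for the paper's purposes, a citation would suffice; if you want a self-contained argument, your outline is the right one, and the only nontrivial ingredient you would still need to import is the Besicovitch covering theorem itself, as you note.
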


Let $f = \frac{d \mathbb{P}_X}{d \gls{graph_dist}_X}$ and $ \mu = \gls{graph_dist}_X$. Since $\mu$ is a probability measure, it is a Radon measure on Euclidean space. Also, since $\int_{\mathcal{X}} |f| d\mu = 1$, so it's globally and therefore locally integrable with respect to $\mu$. Thus the conditions of the Theorem \ref{thm:leb-bes} are satisfied, and we can write:

\begin{eqnarray}
& & \lim_{r\rightarrow 0} P_X(x,r) \prod_{l=1}^d \frac{P_{\gls{pa_xl_var}}\left( \gls{pa_xl_val} ,r \right)}{P_{\gls{pa_n_xl_var}}\left(  \gls{pa_n_xl_val} ,r \right)}\\
& = & \lim_{r\rightarrow 0} \frac{\mathbb{P} \Big\{ B_r(x) \Big\} }{\prod_{l=1}^d \mathbb{P}_{ X_l | \gls{pa_xl_var} } \Big\{ B_r \left( x_l \right) \Big| B_r \left( \gls{pa_xl_val} \right) \Big\}}\\
& = & \lim_{r\rightarrow 0} \frac{\mathbb{P} \Big\{ B_r(x) \Big\}}{\gls{graph_dist}_X \Big\{ B_r(x) \Big\}}\\
& = & \lim_{r\rightarrow 0} \frac{1}{\gls{graph_dist}_X \Big\{ B_r(x) \Big\}} \int_{B_r(x)} \frac{d \mathbb{P}_X}{d \gls{graph_dist}_X} d \gls{graph_dist}_X \\ 
& = & \frac{d \mathbb{P}_X}{d \gls{graph_dist}_X}(x) 
\end{eqnarray}


\subsection{Proof of Lemma \ref{lemma:omega2_upperbound}} \label{sec:lemma_omega2_upperbound_proof}

First, we upperbound $\mathbb{E} \left[ \log(X+m) \middle| X \geq k \right] - \log(Np)$. We can see that:
\begin{eqnarray}
& & \mathbb{E} \left[ X+m \middle| X \geq k \right] \\
& = & \frac{1}{ \mathbb{P} \left( X \geq k \right)} \sum_{i=k}^N (i+m) \left( \begin{array}{c} N \\ i \end{array}  \right) p^i (1-p)^{N-i} \\
& \leq & \frac{1}{1 - \exp\left( -2 \frac{(Np-k)^2}{N} \right)} \sum_{i=k}^N (i+m) \left( \begin{array}{c} N \\ i \end{array}  \right) p^i (1-p)^{N-i} \\
& \leq & \frac{1}{1 - \exp\left( -2 \frac{(Np-k)^2}{N} \right)} \sum_{i=1}^N (i+m) \left( \begin{array}{c} N \\ i \end{array}  \right) p^i (1-p)^{N-i} \\
& = & \frac{1}{1 - \exp\left( -2 \frac{(Np-k)^2}{N} \right)} \left( \mathbb{E}\left[ X \right] + m \right) = \frac{Np+m}{1 - \exp\left( -2 \frac{(Np-k)^2}{N} \right)}
\end{eqnarray}

In which we used the Hoeffding's inequality. We know that $\mathbb{E} \left[ \log(X+m) \middle| X \geq k \right] \leq \log \left( \mathbb{E} \left[ X+m \middle| X \geq k \right] \right)$, therefore: 
\begin{eqnarray}
\mathbb{E} \left[ \log(X) \middle| X \geq k \right] - \log(Np) \leq \log \left( \frac{1+\frac{m}{Np}}{1 - \exp\left( -2 \frac{(Np-k)^2}{N} \right)} \right)
\end{eqnarray}

Second, to give an upper bound over $\log(Np) - \mathbb{E} \left[ \log(X+m) \middle| X \geq k \right]$, we first notice that:
\begin{equation} \log(Np) - \mathbb{E} \left[ \log(X+m) \middle| X \geq k \right] \leq \log(Np) - \mathbb{E} \left[ \log(X) \middle| X \geq k \right] \end{equation} 

Then we upperbound $\log(Np) - \mathbb{E} \left[ \log(X) \middle| X \geq k \right]$ by applying Taylor's theorem around $x_0=Np$, where there exists $\zeta$ between $x$ and $x_0$ such that:
\begin{equation} \log(x) = \log(Np) + \frac{x-Np}{Np} - \frac{(x-Np)^2}{2\zeta^2} \end{equation}
since $\zeta \geq \min \left\{ x, x_0 \right\} = \min \left\{ x, Np \right\}$, we have:

\begin{eqnarray}
& & -\log(x) + \log(Np) + \frac{x-Np}{Np} = \frac{(x-Np)^2}{2 \zeta^2} \nonumber \\
& \leq & \max \left\{ \frac{(x-Np)^2}{2x^2}, \frac{(x-NP)^2}{2(Np)^2} \right\} \leq \frac{(x-Np)^2}{2x^2} + \frac{(x-Np)^2}{2(Np)^2 }
\end{eqnarray}

Now taking the conditional expectations from both sides, we have:

\begin{eqnarray}
 & & -\mathbb{E} \left[ \log(X) \middle| X \geq k \right] + \log(Np) +\frac{  \mathbb{E} \left[ X \middle| X \geq k \right] -Np}{Np} \nonumber \\
 & \leq & \mathbb{E} \left[ \frac{(X-Np)^2}{2X^2}  \middle| X \geq k \right] + \frac{ \mathbb{E} \left[ (X-Np)^2  \middle| X \geq k \right]}{2(Np)^2 }
\end{eqnarray}

First, we notice that $ \mathbb{E} \left[ X \middle| X \geq k \right] \geq  \mathbb{E} \left[ X \right] = Np$.

Second, $\mathbb{E} \left[ (X-Np)^2  \middle| X \geq k \right] \leq \frac{1}{1 - \exp\left( -2 \frac{(Np-k)^2}{N} \right)} \text{Var} \left[ X \right] = \frac{Np(1-p)}{1 - \exp\left( -2 \frac{(Np-k)^2}{N} \right)}$.

Thus we can write:
\begin{eqnarray}
-\mathbb{E} \left[ \log(X) \middle| X \geq k \right] + \log(Np) \leq \frac{Np(1-p)}{1 - \exp\left( -2 \frac{(Np-k)^2}{N} \right)}\frac{1}{2(Np)^2} + \mathbb{E} \left[ \frac{(X-Np)^2}{2X^2}  \middle| X \geq k \right] \label{eq:lemma_b3_modified}
\end{eqnarray}

To deal with the term $\mathbb{E} \left[ \frac{(X-Np)^2}{2X^2}  \middle| X \geq k \right]$, we have:

\begin{eqnarray}
& & \mathbb{E} \left[ \frac{(X-Np)^2}{2X^2}  \middle| X \geq k \right] \\
&\leq& \frac{1}{1 - \exp\left( -2 \frac{(Np-k)^2}{N} \right)} 
\sum_{i=k}^N \frac{(i-Np)^2}{2 i^2} \left( \begin{array}{c} N \\ i \end{array} \right) p^i (1-p)^{N-i} \\
&\leq& \frac{1}{1 - \exp\left( -2 \frac{(Np-k)^2}{N} \right)} 
\sum_{i=k}^N \frac{(i-Np)^2}{(i+1)(i+2)} \left( \begin{array}{c} N \\ i \end{array} \right) p^i (1-p)^{N-i} \\
&=& \frac{1}{1 - \exp\left( -2 \frac{(Np-k)^2}{N} \right)} 
\sum_{i=k}^N \frac{(i-Np)^2}{(N+1)(N+2)p^2} \left( \begin{array}{c} N+2 \\ i+2 \end{array} \right) p^{2+i} (1-p)^{N-i} \\
&\leq& \frac{1}{1 - \exp\left( -2 \frac{(Np-k)^2}{N} \right)}\frac{1}{(N+1)(N+2)p^2}
\mathbb{E}_{Y \sim \text{Bino}(N+2,p)} \left[ (Y-Np)^2 \right] \\
& = & \frac{1}{1 - \exp\left( -2 \frac{(Np-k)^2}{N} \right)}\frac{(N+2)p(1-p)+4p^2}{(N+1)(N+2)p^2} \\
& \leq & \frac{1}{1 - \exp\left( -2 \frac{(Np-k)^2}{N} \right)}\frac{(N+2)p}{(N+1)(N+2)p^2} \leq \frac{1}{1 - \exp\left( -2 \frac{(Np-k)^2}{N} \right)}\frac{1}{Np} 
\end{eqnarray}

In which we used the fact that $ 2i^2 \geq (i+1)(i+2)$ for $i \geq 4$, and $(N+2)p \geq 4p$ for $N \geq 2$. Plugging it into Equation \ref{eq:lemma_b3_modified}, we have:

\begin{equation}
-\mathbb{E} \left[ \log(X) \middle| X \geq k \right] + \log(Np) \leq \frac{1}{1 - \exp\left( -2 \frac{(Np-k)^2}{N} \right)}\frac{3}{2Np}
\end{equation}

And the desired result is yielded.

\subsection{Proof of Theorem \ref{thm:variance_convergence}} \label{sec:variance_convergence_proof}

For this part, we also follow the same procedure as followed for the Theorem 2 in \cite{gao2017mixture} using Efron-Stein inequality. Suppose $\widehat{\gls{graph_distance_measure}}^{(N)}(X,\mathcal{G})$ is the estimate based on the original samples $ x^{(1)}, x^{(2)}, \ldots, x^{(N)} $. For the usage of Efron-Stein inequality, we suppose that there is another set of $n$ i.i.d samples drawn from $P_X$ denoted by $ x'^{(1)}, x'^{(2)}, \ldots, x'^{(n)}$. Let $\widehat{\gls{graph_distance_measure}}^{(N)}(X^{(j)},\mathcal{G})$ be the estimate based on the original samples' set in which the $j$th sample is replaced by another i.i.d sample $x'^{(j)}$ taken from the second set, i.e. $\widehat{\gls{graph_distance_measure}}^{(N)}(X^{(j)},\mathcal{G})$ is the estimate based on $x^{(1)}, \ldots, x^{(j-1)}, x'^{(j)}, x^{(j+1)}, \ldots, x^{(N)} $. Then the Efron-Stein inequality states that:
\begin{equation} \text{Var}\left[ \widehat{\gls{graph_distance_measure}}^{(N)}(X,\mathcal{G}) \right] \leq \frac{1}{2} \sum_{j=1}^N \mathbb{E} \left[ \left(  \widehat{\gls{graph_distance_measure}}^{(N)}(X,\mathcal{G}) -  \widehat{\gls{graph_distance_measure}}^{(N)}(X^{(j)},\mathcal{G}) \right)^2 \right] \end{equation}

Now we will find an upper bound for the difference $\left|  \widehat{\gls{graph_distance_measure}}^{(N)}(X,\mathcal{G}) -  \widehat{\gls{graph_distance_measure}}^{(N)}(X^{(j)},\mathcal{G}) \right|$ for a given index $j$ by considering the worst case scenario. Let  $\widehat{\gls{graph_distance_measure}}^{(N)}(X_{\setminus j},\mathcal{G})$ be the estimate based on the original samples' set in which the $j$th sample is removed, i.e. $x^{(1)}, \ldots, x^{(j-1)}, x^{j+1}, \ldots, x^{(N)}$. Then by triangle inequality, we have:
\begin{eqnarray}
& & \sup_{x^{(1)}, \ldots, x^{(N)}, x'^{(j)}} \left|  \widehat{\gls{graph_distance_measure}}^{(N)}(X,\mathcal{G}) -  \widehat{\gls{graph_distance_measure}}^{(N)}(X^{(j)},\mathcal{G}) \right| \nonumber\\
& \leq & \sup_{x^{(1)}, \ldots, x^{(N)}, x'^{(j)}} \Bigg( \left| \widehat{\gls{graph_distance_measure}}^{(N)}(X,\mathcal{G}) - \widehat{\gls{graph_distance_measure}}^{(N)}(X_{\setminus j},\mathcal{G}) \right| \\
& & + \left| \widehat{\gls{graph_distance_measure}}^{(N)}(X_{\setminus j},\mathcal{G}) -  \widehat{\gls{graph_distance_measure}}^{(N)}(X^{(j)},\mathcal{G}) \right|  \Bigg) \nonumber\\
& \leq & \sup_{x^{(1)}, \ldots, x^{(N)}}\left| \widehat{\gls{graph_distance_measure}}^{(N)}(X,\mathcal{G}) - \widehat{\gls{graph_distance_measure}}^{(N)}(X_{\setminus j},\mathcal{G}) \right| \\
& & + \sup_{x^{(1)}, \ldots, x^{(j-1)}, x'^{(j)}, x^{(j+1)}, \ldots, x^{(N)}} \left| \widehat{\gls{graph_distance_measure}}^{(N)}(X_{\setminus j},\mathcal{G}) -  \widehat{\gls{graph_distance_measure}}^{(N)}(X^{(j)},\mathcal{G}) \right| \nonumber\\
& = & 2 \sup_{x^{(1)}, \ldots, x^{(N)}} \left| \widehat{\gls{graph_distance_measure}}^{(N)}(X,\mathcal{G}) - \widehat{\gls{graph_distance_measure}}^{(N)}(X_{\setminus j},\mathcal{G}) \right|
\end{eqnarray}

Remember that: 
\begin{eqnarray} 
& & \widehat{\gls{graph_distance_measure}}^{(N)}(X,\mathcal{G}) \nonumber \\
& = & \frac{1}{N} \sum_{i=1}^N \zeta_i(X) \\
& = & \frac{1}{N} \sum_{i=1}^N \Bigg( \psi(\tilde{k}_i) + \sum_{l=1}^d \left(  \mathbf{1}_{ \{ \gls{pa_xl_var} \neq \emptyset \} } \log (n_{\gls{pa_xl_var}}^{(i)}+1) - \log (n_{\gls{pa_n_xl_var}}^{(i)}+1) \right) \nonumber\\
& &+ \bigg( \sum_{l=1}^d \mathbf{1}_{ \{  \gls{pa_xl_var}= \emptyset \} } -1 \bigg) \log N \Bigg) 
\end{eqnarray} 

Thus we can write:
\begin{equation} \sup_{x^{(1)}, \ldots, x^{(N)}}\left| \widehat{\gls{graph_distance_measure}}^{(N)}(X,\mathcal{G}) - \widehat{\gls{graph_distance_measure}}^{(N)}(X^{(j)},\mathcal{G}) \right|
\leq \frac{2}{N} \sup_{x^{(1)}, \ldots, x^{(N)}} \sum_{i=1}^N \left| \zeta_i(X) - \zeta_i(X_{\setminus j}) \right| \end{equation}

\noindent Now we need to upper-bound $\left| \zeta_i(X) - \zeta_i(X_{\setminus j}) \right|$ for all the different $i$'s. The cases are as follows:

\vspace{.2in}
\noindent \textbf{Case I: \boldmath$i=j$ }. Since the upper bounds $\left| \zeta_i(X) \right| \leq (2 d + 1) \log N$ and $\left| \zeta_i(X_{\setminus j}) \right| \leq ( 2 d + 1 ) \log (N-1)$ always holds, thus we have $\left| \zeta_i(X) - \zeta_i(X_{\setminus j}) \right| \leq 2 ( 2 d + 1 ) \log N$. Since there's only one $i$ equal to $j$, we have $\sum_{\text{Case I}} \left| \zeta_i(X) - \zeta_i(X_{\setminus j}) \right| \leq 2 ( 2 d + 1 ) \log N$.

\vspace{.2in}
\noindent \textbf{Case II: \boldmath$i\neq j$ and \boldmath$\rho_{k,i}=0$ }. Suppose $S \in \big\{ \gls{pa_xl_var} \big\}_{l=1}^d \cup \big\{ \gls{pa_n_xl_var} \big\}_{l=1}^d \cup \left\{ X \right\}$ is a subset of $X$. Since $\rho_{k,i}=0$,  then $n_S^{(i)}=\left| \{ i' \neq i : s^{(i)} = s^{(i')} \} \right|$. We recall that for $S=X$ we actually have $n_S^{(i)}=\tilde{k}_i$. Thus by removing the point $x^{(j)}$ is this case, for any subset $S$, if $s^{(i)}=s^{(j)}$, then $n_S^{(i)}$ is decreased by 1, and if $s^{(i)} \neq s^{(j)}$ then $n_S^{(i)}$ will not change. Thus we can write:
\begin{eqnarray}
& & \left| \zeta_i(X) - \zeta_i(X_{\setminus j}) \right| \\
& \leq & \left| \psi \left( \tilde{k}_i \right) - \psi \left( \tilde{k}_i-1 \right) \right| \\
& & + \sum_{l=1}^d \mathbf{1}_{ \{ \gls{pa_xl_var} \neq \emptyset \} }  \mathbf{1}_{ \left\{ {\gls{pa_xl_val}}^{(i)} = {\gls{pa_xl_val}}^{(j)} \right\} } \bigg| \log (n_{\gls{pa_xl_var}}^{(i)}+1) - \log n_{\gls{pa_xl_var}}^{(i)} \bigg| \\
& & + \sum_{l=1}^d \mathbf{1}_{ \left\{ {\gls{pa_n_xl_val}}^{(i)} = {\gls{pa_n_xl_val}}^{(j)} \right\} } \bigg| \log (n_{\gls{pa_n_xl_var}}^{(i)}+1) - \log n_{\gls{pa_n_xl_var}}^{(i)} \bigg| \\
& & + \bigg( \sum_{l=1}^d \mathbf{1}_{ \left\{  \gls{pa_xl_var}= \emptyset \right\} } +1 \bigg) \Big( \log N - \log(N-1) \Big) \\
& \leq & \frac{1}{\tilde{k}_i -1} + \sum_{l=1}^d \mathbf{1}_{ \left\{ \gls{pa_xl_var} \neq \emptyset \right\} }  \mathbf{1}_{ \left\{ {\gls{pa_xl_val}}^{(i)} = {\gls{pa_xl_val}}^{(j)} \right\} } \frac{1}{n_{\gls{pa_xl_var}}^{(i)}}\\
& & + \sum_{l=1}^d \mathbf{1}_{ \left\{ {\gls{pa_n_xl_val}}^{(i)} = {\gls{pa_n_xl_val}}^{(j)} \right\} }\frac{1}{n_{\gls{pa_n_xl_var}}^{(i)}} \\
& & + \bigg( \sum_{l=1}^d \mathbf{1}_{ \left\{  \gls{pa_xl_var}= \emptyset \right\} } +1 \bigg) \frac{1}{N-1}
\end{eqnarray}
Where we used the fact that $\log(1+1/x) \leq 1/x$. For any $S$, the number of nodes $i$ satisfying $s^{(i)} = s^{(j)}$ is no more than $n_S^{(j)}$, and the total number of points in the case II is no more than $N-1$. Thus we can write:
\begin{eqnarray} 
\sum_{i \in \text{Case II}} \left| \zeta_i(X) - \zeta_i(X_{\setminus j}) \right| & \leq & (\tilde{k}_i -1) \frac{1}{\tilde{k}_i -1} \\ 
& & + \sum_{l=1}^d \mathbf{1}_{ \left\{ \gls{pa_xl_var} \neq \emptyset \right\} } n_{\gls{pa_xl_var}}^{(i)} \frac{1}{n_{\gls{pa_xl_var}}^{(i)}}\\
& & + \sum_{l=1}^d n_{\gls{pa_n_xl_var}}^{(i)}\frac{1}{n_{\gls{pa_n_xl_var}}^{(i)}} \\
& & + \bigg( \sum_{l=1}^d \mathbf{1}_{ \left\{  \gls{pa_xl_var}= \emptyset \right\} } +1 \bigg) (N-1)\frac{1}{N-1} \\
& = & 2 (d+1)
\end{eqnarray}

\vspace{.2in}
\noindent \textbf{Case III: \boldmath$i\neq j$ and \boldmath$\rho_{k,i}>0$ }. In this case, $\tilde{k}_i$ is always equal to $k$, and for any subset $S \in \big\{ \gls{pa_xl_var} \big\}_{l=1}^d \cup \big\{ \gls{pa_n_xl_var} \big\}_{l=1}^d \cup \left\{ X \right\}$ we have $n_S^{(i)}= \left| \{ i' \neq i : \| s^{(i)} - s^{(j)} \| \leq \rho_{k,i} \} \right|$.

\vspace{.2in}
\noindent \textit{Case III.1: $\| x^{(i)} - x^{(j)} \| \leq \rho_{k,i}$. } This means that the point $x_j$ is in the $k$ nearest neighbors of $x_i$ and by eliminating it, $\rho_{k,i}$ will change. So we don't know how $n_S^{(i)}$'s will change, thus we will use the upper-bound $ \left| \zeta_i(X) - \zeta_i(X_{\setminus j}) \right| \leq 2 ( 2 d + 1 ) \log N$. From the first part of the Lemma C.1 from \cite{gao2017mixture}, we can upper bound the number of such $i$'s by $\gamma_d k$, in which $\gamma_d$ is the minimum number of $d$-dimensional cones with the angle smaller than $\pi/6$ needed to cover the total space $\mathbb{R}^d$. Thus we have:
\begin{equation} \sum_{\text{Case III.1}} \left| \zeta_i(X) - \zeta_i(X_{\setminus j}) \right| \leq 2 k ( 2 d + 1 ) \gamma_{d_X} \log N  \end{equation}
where $d_X$ is the dimension of the space $\mathcal{X}$.

\vspace{.2in}
\noindent \textit{Case III.2: $\| x^{(i)} - x^{(j)} \| > \rho_{k,i}$. } This case is somewhat similar to the Case II. Here the  $\rho_{k,i}$ will not change. But the $n_S^{(i)}$'s for all subsets $S \neq X$ can decrease by 1. We can write:
\begin{eqnarray}
& & \sum_{\text{Case III.2}} \left| \zeta_i(X) - \zeta_i(X_{\setminus j}) \right| \\
& \leq & \sum_{i=1}^N \sum_{l=1}^d \mathbf{1}_{ \left\{ \gls{pa_xl_var} \neq \emptyset \right\} }  \mathbf{1}_{ \left\{ \gls{pa_xl_val}^{(i)} = {\gls{pa_xl_val}}^{(j)} \right\}} \bigg| \log (n_{\gls{pa_xl_var}}^{(i)}+1) - \log n_{\gls{pa_xl_var}}^{(i)} \bigg| \\
& & + \sum_{i=1}^N \sum_{l=1}^d \mathbf{1}_{ \left\{ {\gls{pa_n_xl_val}}^{(i)} = {\gls{pa_n_xl_val}}^{(j)} \right\} } \bigg| \log (n_{\gls{pa_n_xl_var}}^{(i)}+1) - \log n_{\gls{pa_n_xl_var}}^{(i)} \bigg| \\
& & + \sum_{i=1}^{N-1} \bigg( \sum_{l=1}^d \mathbf{1}_{ \left\{  \gls{pa_xl_var}= \emptyset \right\} } +1 \bigg) \Big( \log N - \log(N-1) \Big) \\
& \leq & \sum_{i=1}^N\sum_{l=1}^d \mathbf{1}_{ \left\{ \gls{pa_xl_var} \neq \emptyset \right\} }  \mathbf{1}_{ \left\{ {\gls{pa_xl_val}}^{(i)} = {\gls{pa_xl_val}}^{(j)} \right\} } \frac{1}{n_{\gls{pa_xl_var}}^{(i)}}\\
& & + \sum_{i=1}^N\sum_{l=1}^d \mathbf{1}_{ \left\{ {\gls{pa_n_xl_val}}^{(i)} = {\gls{pa_n_xl_val}}^{(j)} \right\} }\frac{1}{n_{\gls{pa_n_xl_var}}^{(i)}} \\
& & + \sum_{i=1}^{N-1}\bigg( \sum_{l=1}^d \mathbf{1}_{ \left\{  \gls{pa_xl_var}= \emptyset \right\} } +1 \bigg) \frac{1}{N-1}\\
& \leq & \sum_{l=1}^d \mathbf{1}_{ \left\{ \gls{pa_xl_var} \neq \emptyset \right\} } \gamma_{d_{\gls{pa_xl_var}}} \log (N+1) \\
& & + \sum_{l=1}^d \gamma_{d_{\gls{pa_n_xl_var}}} \log (N+1) + \sum_{l=1}^d \mathbf{1}_{ \left\{  \gls{pa_xl_var}= \emptyset \right\} } +1 \label{eq:lemmab2_orig_paper} \\
& \leq & 2 d \gamma_{d_X} \log (N+1) + 1 
\end{eqnarray}

The inequality \ref{eq:lemmab2_orig_paper} follows from the second part of the Lemma C.1 from \cite{gao2017mixture}. 

\vspace{.3in}
Now combining all three cases together, we have:
\begin{eqnarray}
& & \sum_{i=1}^N \left| \zeta_i(X) - \zeta_i(X_{\setminus j}) \right| \nonumber \\
& \leq & 2 ( 2 d + 1 ) \log N + 2(d + 1) + 2 k ( 2 d + 1 ) \gamma_{d_X} \log N  + 2 d \gamma_{d_X} \log (N+1) + 1 \nonumber \\
& \leq & 6 k ( 2 d + 1 ) \gamma_{d_X}  \log (N+1)
\end{eqnarray}

Thus:
\begin{equation} \sup_{x^{(1)}, \ldots, x^{(N)}, x'^{(j)}} \left| \widehat{\gls{graph_distance_measure}}^{(N)}(X,\mathcal{G}) - \widehat{\gls{graph_distance_measure}}^{(N)}(X^{(j)},\mathcal{G}) \right| \leq \frac{12 k ( 2 d + 1 ) \gamma_{d_X}  \log (N+1)}{N} \end{equation}

And:

\begin{eqnarray}
& & \text{Var}\left[ \widehat{\gls{graph_distance_measure}}^{(N)}(X,\mathcal{G}) \right] \\
& \leq & \frac{1}{2} \sum_{j=1}^N \mathbb{E} \left[ \left(  \widehat{\gls{graph_distance_measure}}^{(N)}(X,\mathcal{G}) - \widehat{\gls{graph_distance_measure}}^{(N)}(X^{(j)},\mathcal{G}) \right)^2 \right] \\
& \leq & \frac{1}{2} \sum_{j=1}^N \sup_{x^{(1)}, \ldots, x^{(N)}, x'^{(j)}} \left(  \widehat{\gls{graph_distance_measure}}^{(N)}(X,\mathcal{G}) - \widehat{\gls{graph_distance_measure}}^{(N)}(X^{(j)},\mathcal{G}) \right)^2 \\
& \leq & \frac{1}{2} \sum_{j=1}^N \left( \frac{12 k ( 2 d + 1 ) \gamma_{d_X}  \log (N+1)}{N} \right)^2 \\
& = & \frac{72 \gamma_{d_X} ^2 \bigg( k ( 2 d + 1 ) \log N \bigg)^2}{N}
\end{eqnarray}

Therefore if $( k_N \log N )^2 / N \rightarrow 0 $ as $N$ goes to infinity, we have $\lim_{N \rightarrow \infty} \text{Var} \left[ \widehat{\gls{graph_distance_measure}}^{(N)}(X,\mathcal{G}) \right]=0$.


\section{Details of Numerical Experiments}\label{sec:exp_details}

In this section, we will discuss the carried out numerical experiments done in more details. This includes but not limited to the choice of parameters, the data generated, and the derivation of the theory values.  The parameter of the nearest neighbor $k$ for all the mixture, legacy and $\Sigma H$ methods is set to $\sqrt{N}/5$ to conform with the Assumptions \ref{assumptions} and keep the computational complexity at an acceptable level. The number of bins in binning method at each dimension is kept at $\sqrt[d]{N/m}$ for all the algorithms so we roughly have $m$ samples per each bin, and we choose $m \in \{ 10,20,100 \}$ whichever giving the best precision.

\subsection{Experiment 1: Markov chain model with continuous-discrete mixture}

For the first experiment, we simulated an $X$-$Z$-$Y$ Markov chain model in which the random variable $X$ is chosen as $X = \min \Big( \alpha_1, \tilde{X} \Big)$ where $\tilde{X} \sim \mathcal{U}(0,1)$ represents an auxiliary random variable uniformly distributed between $0$ and $1$. This means that we first generate a sample from $\tilde{X}$ denoted by $\tilde{x}$  and then let $x = \min \{ \alpha_1, \tilde{x} \}$. Subsequently:
\begin{eqnarray}
Z & = & \min \left( X, \alpha_2 \right) \\
Y & = & \min \left( Z, \alpha_3 \right)
\end{eqnarray}
 We assume that $\alpha_3 < \alpha_2 < \alpha_1$. The three variables $X$, $Y$ and $Z$ represent a mixture of continuous and discrete random variables.

We simulated this system for various numbers of samples while setting $\alpha_1=0.9$, $\alpha_2=0.8$ and $\alpha_3=0.7$. For each set of samples $I(X;Y|Z)$ is estimated via different methods and its theory value is obviously equal to $0$. The results are shown in Figure \ref{fig:exp1_cmi_vs_n}.

\subsection{Experiment 2: Mixture of AWGN and BSC channels with variable error probability}

As the second scheme of our experiments, we considered an Additive White Gaussian Noise (AWGN) Channel in parallel with a Binary Symmetric Channel (BSC) where only one of them can be activated at a time. The random variable $0<Z<1$ controls which channel is activated; i.e. if $Z$ is lower than the threshold $\beta$, then the AWGN channel is activated, otherwise the BSC channel will be activated. 

The AWGN channel is modeled as $Y=X+N$ where $X \sim \mathcal{N} (0, \sigma_X^2)$ and $N \sim \mathcal{N} (0, \sigma_N^2)$. BSC channel is modeled as $Y = X \oplus E$, where  $X$ and $E$ are two binary random variables $X \sim \text{Bern}(p)$ and  $E \sim \text{Bern}(Z)$  denoting the input and the error respectively. This means that the probability of error in the BSC channel is controlled by the variable $Z$ at each time-point. Equivalently, if we suppose that the sample $z_i$ is observed from $Z$ at the moment $i$, the output of the BSC at time $i$ is characterized by:
\begin{equation} y_i = \left\{ \begin{array}{ll} x_i & \text{with probability } 1-z_i \\ \neg x_i & \text{with probability } z_i \end{array}\right. \end{equation}
Let's assume $Z = \min \left( \alpha, \tilde{Z} \right)$ where $\tilde{Z} \sim U(0,1)$ is an auxiliary uniform random variable, similar to the previous experiments. The theory value for $I(X:Y|Z)$ is obtained as follows:

\begin{eqnarray}
I(X;Y|Z) & = & \int_{z=0}^1 I(X;Y|Z=z) f_Z(z) dz \\
& = & \int_{z=0}^{\beta} I_{\text{AWGN}}(X;Y|Z=z) f_Z(z) dz + \int_{z=\beta}^{1} I_{\text{BSC}}(X;Y|Z=z) f_Z(z) dz  \hspace{.5in} \\
& = & \beta I_{\text{AWGN}}(X;Y) + \int_{z=\beta}^{\alpha} I_{\text{BSC}}(X;Y|Z=z) dz \nonumber \\
& & + \left(1-\alpha \right) I_{\text{BSC}}(X;Y|Z=\alpha) \\
& = & \frac{\beta}{2} \log \left( 1+ \frac{\sigma_X^2}{\sigma_N^2} \right) + \int_{z=\beta}^{\alpha} \Big( h(z \bar{p}+ p \bar{z}) - h(z) \Big) dz \nonumber \\
& & + \left(1-\alpha \right) \Big(  h(\alpha \bar{p} + p \bar{\alpha}) - h(\alpha) \Big)
\end{eqnarray}

in which $h(x)$ is the binary entropy function defined as $h(x) = - x \log x - (1-x) \log(1-x)$ for $x \in {[0,1]}$. 

In our experiment discussed in Section $\ref{sec:experiments}$ we set $p = 0.5$, $\alpha=0.3$, $\beta=0.2$, $\sigma_X=1$ and $\sigma_N=0.1$. The theory value for the conditional mutual information can be readily calculated as $I(X;Y|Z)=0.53241$. We simulated the system for various number of samples, and obtained the estimated CMI values $\hat{I}_N(X;Y|Z)$ for different methods. The results are shown in Figure \ref{fig:exp2_cmi_vs_n}. Furthermore, we calculated the estimates of $I(X;Y|Z,Z^2,Z^3)$. Its theory value is obviously equal to $I(X;Y|Z)$, yet it's conditioned over a low-dimensional manifold in a high-dimensional space, and we are interested in examining the effect of it over the estimators' accuracy. The results are shown in Figure \ref{fig:appendix_exp2_cmi_vs_n_ld}.

\subsection{Experiment 3: Total Correlation for independent mixtures}

In the third set of experiments, we estimated the total correlation of three independent random variables $X$, $Y$ and $Z$ each of which is created independently from a mixture distribution as follows: First we generate an auxiliary random variable $\tilde{X} \sim \text{Bern}(0.5)$ then the random variable $X$ is generated as follows:
\begin{equation} X = \left\{ \begin{array}{ll} \alpha_X & \text{  if  }\tilde{X}=0 \\ \sim \mathcal{U}(0,1) & \text{  if  }\tilde{X}=1  \end{array} \right. \end{equation}
which means we toss a fair coin, if heads appears we will fix $X$ at $\alpha_X$, otherwise we will draw $X$ from a uniform distribution between $0$ and $1$. samples from $Y$ and $Z$ are also generated independently in the same fashion. For this setup, the theory value of the total correlation of the three variables $X$, $Y$ and $Z$ is obviously equal to $0$.

In our experiment we set $\alpha_X=1$, $\alpha_Y=1/2$ and $\alpha_Z=1/4$, and generated various datasets with different number of samples. Then we estimated the total correlation via different approaches. The results are shown in the Figure \ref{fig:exp3_tc_vs_n}. 

\subsection{Experiment 4: Total Correlation for independent uniforms with correlated zero-inflation}

In this experiment we examine a system of random variables, which can be \textit{clustered} into independent subsets, while inside each of the subsets, the variables are dependent. As a simple case we consider $4$ random variables $X_1$, $X_2$, $X_3$ and $X_4$ which can be clustered as $\{ X_1, X_2 \}$ and $\{ X_3, X_4 \}$. Suppose $\tilde{X}_1$, $\tilde{X}_2$, $\tilde{X}_3$ and $\tilde{X}_4$ are four independent auxiliary random variables identically distributed as $\mathcal{U}(0.5,1.5)$. Then we erase samples $(\tilde{X}_1$, $\tilde{X}_2)$ and $(\tilde{X}_3$, $\tilde{X}_4)$ independently, to generate the random variables $X_1$, $X_2$, $X_3$ and $X_4$; i.e. $X_1=\alpha_1 \tilde{X}_1$, $X_2=\alpha_1 \tilde{X}_2$ and $X_3=\alpha_2 \tilde{X}_3$, $X_4=\alpha_2 \tilde{X}_4$ in which $\alpha_1 \sim \text{Bern}(p_1)$ and $\alpha_2 \sim \text{Bern}(p_2)$. As we see, after this erasure process resulting in zero-inflation, $X_1$ and $X_2$ become correlated, so do $X_3$ and $X_4$ while still $(X_1,X_2) \indep (X_3,X_4)$. Thus the total correlation can be written as:
\begin{eqnarray}
TC \left( X_1, X_2, X_3,X_4 \right) & = & h(X_1) + h(X_2) + h(X_3) + h(X_4) - h \left( X_1, X_2, X_3, X_4 \right) \\
& = & h(X_1) + h(X_2) + h(X_3) + h(X_4) - h \left( X_1, X_2 \right) - h \left( X_3, X_4 \right) \hspace{.3in} \\
& = & I \left( X_1; X_2 \right) + I \left( X_3; X_4 \right) 
\end{eqnarray}

To calculate $I \left( X_1; X_2 \right)$, after applying the chain rule to $I \left( X_1; X_2,\alpha_1 \right)$ twice, we have:
\begin{eqnarray}
I \left( X_1; X_2 \right) & = & I \left( X_1; \alpha_1 \right) + I \left( X_1; X_2 | \alpha_1 \right) - I \left( X_1;  \alpha_1 | X_2 \right) \\
& = & h(\alpha_1) -  \underbrace{ h\left( \alpha_1 \middle| X_1 \right)}_{0} \\
& & + p(\alpha_1=0) \underbrace{I \left( X_1; X_2 | \alpha_1=0 \right)}_{0} + p(\alpha_1=1) \underbrace{I \left( X_1; X_2 | \alpha_1=1 \right)}_{0} \nonumber \\
& & -h \left( X_1 \middle| X_2 \right) + \underbrace{h \left( X_1 \middle| X_2, \alpha_1 \right)}_{=h \left( X_1 \middle| X_2 \right)} \nonumber \\
& = & h(\alpha_1) 
\end{eqnarray}

Similarly, $I \left( X_3; X_4 \right)= h(\alpha_2)$. Thus:
\begin{equation} TC \left( X_1, X_2, X_3,X_4 \right) = h(\alpha_1) + h(\alpha_2)\end{equation}
 In the experiment, we set $p_1=p_2=0.6$. The theory value for the total correlation is equal to $1.34602$. The results of running different algorithms over the data can be seen in Figure \ref{fig:exp4_tc_vs_n}.

\subsection{Experiment 5: Gene Regulatory Networks}

In this experiment we use different estimators to do Gene Regulatory Network inference based on the Restricted Directed Information (RDI) measure. 

In a simplified model of a dynamical system $X=(X_1, \ldots, X_d)$, each $X_l$ is a time-series of length $T$ written in the form $\{ X_l(t) \}_{t=0}^T$, and the system's evolution through time is characterized as $X_l(t) = g_l\big( X(t-1) \big) + N_l(t)$ in which $g_l(.): \mathcal{X} \rightarrow \mathcal{X}_l$ is a deterministic function and $N_l(t)$ is an independent random noise. In the causal inference, the goal is to infer the set of  $\gls{pa_xl_var}$ for each $X_l$. To reach this end, many researchers have studied the information theoretic measures. One of such measures for time series is the directed information, and a variation of it is the restricted directed information defined as $RDI ( X_i \rightarrow X_j) := I \big( X_i(t-1), X_j(t) | X_j(t-1) \big)$ and the conditional version of it (cRDI) is defined as $RDI ( X_i \rightarrow X_j | Z ) := I \big( X_i(t-1), X_j(t) | X_j(t-1),Z(t-1) \big)$. It's shown that for the first-order markov systems and under some mild conditions,  $RDI ( X_i \rightarrow X_j | \{X_i,X_j\}^c ) \neq 0$ if and only if $X_i \in pa(X_j)$ \cite{rahimzamani2016network}. Since RDI (cRDI) is in fact a conditional mutual information, its performance is directly related to that of the CMI estimator, which is used to obtain cRDI from the samples. Hence we are interested in evaluating the performance of various estimators in causal inference.

We do our test on the simulated neuron cells' development process, based on a model from \cite{qiu2012understanding}. In this model, the time series vector $X$ consists of $13$ random variables each of which corresponding to a single gene in the development process. We simulated the development process for various values of $T$ in which the additive noise is independent and identically distributed as $\mathcal{N}(0,.03)$ for all the genes, and every single sample is then subject to erasure (i.e. be replaced by 0s) with a probability of $0.5$. Then we applied the cRDI method to the data to discover the pairwise causal relationships. In our method, we first applied the RDI method (with no conditioning) and obtained the pairwise values. Then we repeated the process to obtain cRDI values, and for every pair $X_i$ and $X_j$, we conditioned the RDI over the $X_{k^*}$ in which $k^* = \text{argmax}_{k\neq i} RDI ( X_k \rightarrow X_j)$. 

To calculate RDI and cRDI, various CMI estimators are utilized and then the Area-Under-ROC curve (AUROC) is calculated for each estimator. The more detailed results are shown in Figure \ref{fig:appendix_grn_auc_vs_n}. We notice that the zero-inflation is highly detrimental to the causal signals, so we won't expect high performance of the causal inference over the data. As we see, RDI/cRDI methods implemented with the GDM estimator outperform the other estimators by at least $\%10$ in terms of AUROC. We did not include the $\Sigma H$ estimator results due to its very low performance.


\begin{figure}
\centering
	\begin{subfigure}[t]{.45\textwidth}
	\centering
	\includegraphics[width=\textwidth,trim={0.3cm 0 1cm 0},clip]{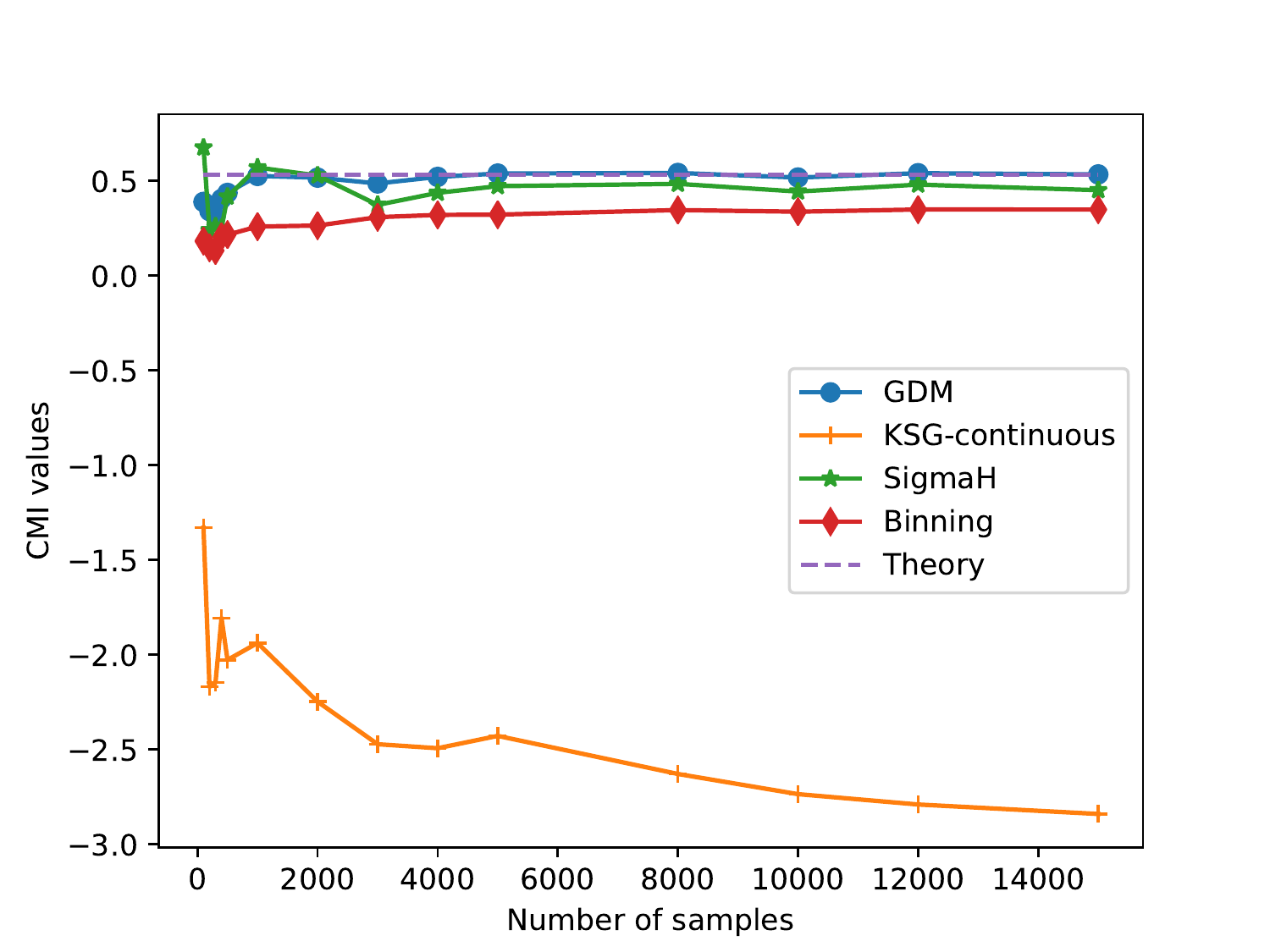}
	\caption{}
	\label{fig:exp2_cmi_vs_n}
	\end{subfigure}
	\begin{subfigure}[t]{.45\textwidth}
	\centering
	\includegraphics[width=\textwidth,trim={0.3cm 0 1cm 0},clip]{AWGN_BSC_low_dimension.pdf}
	\caption{}
	\label{fig:appendix_exp2_cmi_vs_n_ld}
	\end{subfigure}

	\begin{subfigure}[t]{.45\textwidth}
	\centering
	\includegraphics[width=\textwidth,trim={0.3cm 0 1cm 0},clip]{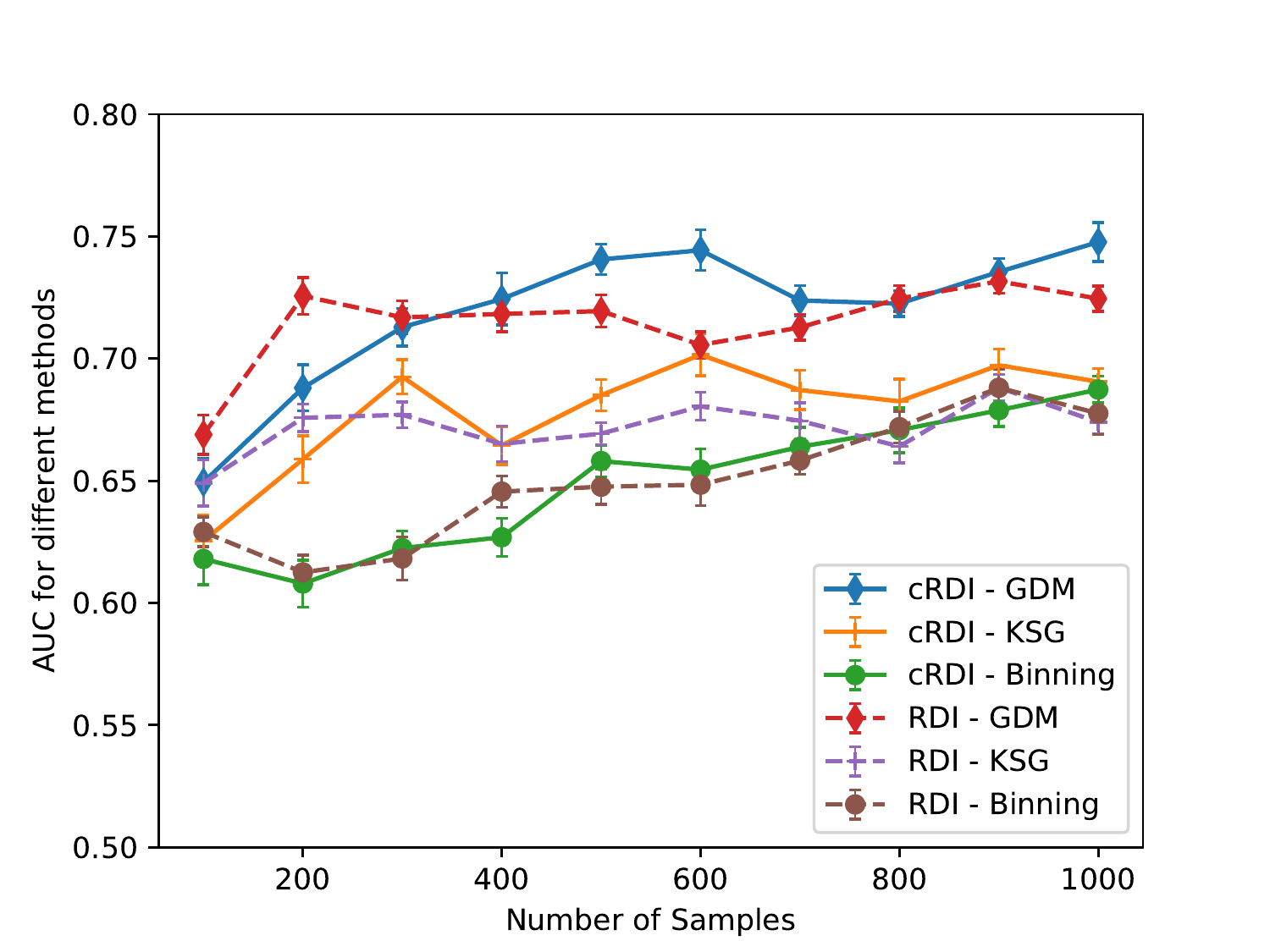}
	\caption{}
	\label{fig:appendix_grn_auc_vs_n}
	\end{subfigure}
	\begin{subfigure}[t]{.45\textwidth}
	\centering
	\includegraphics[width=\textwidth,trim={0.3cm 0 1cm 0},clip]{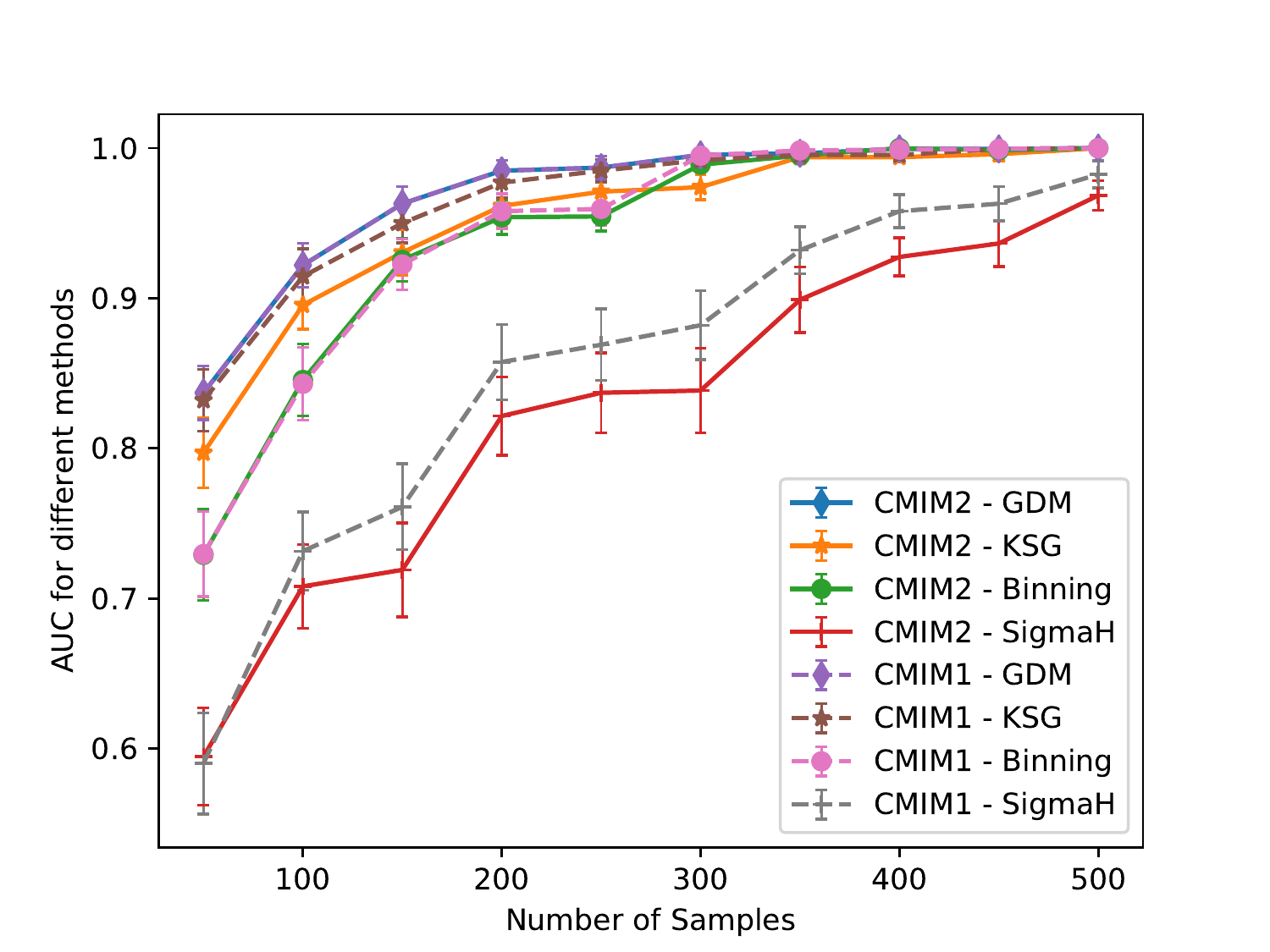}
	\caption{}
	\label{fig:appendix_featureselection_auc_vs_n}
	\end{subfigure}

\caption{The results for the experiments:  \ref{fig:exp2_cmi_vs_n}: The estimated CMI for the AWGN+BSC channels. \ref{fig:appendix_exp2_cmi_vs_n_ld}: CMI for the AWGN+BSC channels with low-dimensional $Z$ manifold. \ref{fig:appendix_grn_auc_vs_n}: The AUROC values vs the number of samples for gene regulatory network inference with different estimators. The error bars show the standard deviations scaled down by $0.2$. \ref{fig:appendix_featureselection_auc_vs_n}: The AUROC values vs the number of samples for feature selection accuracy with different estimators. The error bars show the standard deviations scaled down by $0.2$.}
\end{figure}

\subsection{Experiment 6: Feature Selection by Conditional Mutual Information Maximization}

Selecting the best features for a learning task using information theoretic measures is well studied in the literature \cite{vergara2014review}. Among the well-known methods is the conditional mutual information maximization (CMIM) first introduced by Flueret \cite{fleuret2004fast}, a variation of which was later introduced called CMIM-2 \cite{vergara2010cmim2}. Suppose we have observed samples from the data $(X,Y)$, and would like to select the set $S \subset X$ which describes $Y$ the best. Suppose we start at $S=\emptyset$ and we add features to $S$ in a recursive greedy fashion as:
\begin{eqnarray}
\text{If } S=\emptyset &:& S \leftarrow S \cup \{ \text{argmax}_{X_i} I(X_i;Y) \} \label{alg:feature_selection} \\
\text{Otherwise } &:& S \leftarrow S \cup \{ \text{argmax}_{X_i} f(X_i,S)  \} \nonumber
\end{eqnarray}
The algorithm \ref{alg:feature_selection} is equivalent to CMIM when $f(X_i,S) := \min_{X_j \in S} I(X_i,Y|X_j) $, and if we let $f(X_i,S) := \sum_{X_j} I(X_i,Y|X_j)$ we will get CMIM-2 instead.

In our experiment, we generated a vector $X=(X_1, \ldots, X_{15})$ of 15 random variables in which all the random variables are taken from $\mathcal{N}(0,1)$ and then each random variable $X_i$ is clipped from above at $\alpha_i$ which is initially taken randomly from $\mathcal{U}(0.25,0.3)$ and then kept constant during the sample generation. Then $Y$ is generated as $Y=\cos \big( \sum_{i=1}^5 X_i \big)$. So only the first 5 features are relevant and the other features are independent of $Y$. Thus an ideal feature selection method should be able to recover the first 5 features first. We implemented and ran both CMIM and CMIM-2 algorithms with various CMI estimators to evaluate the performance of the estimators and see how well they can extract the true features $X_1, \ldots, X_5$. A more detailed version of AUROC plot including both CMIM and CMIM-2 is shown in Figure \ref{fig:appendix_featureselection_auc_vs_n} . We can see that the feature selection methods implemented with the GDM estimator outperform the other estimators. We notice that the performances of CMIM and CMIM-2 with the GDM estimator are almost identical.

\end{document}